\pgfplotsset{compat=1.14}
\newtheorem{theorem}{Theorem}
\newtheorem{corollary}{Corollary}
\newtheorem{lemma}{Lemma}
\newtheorem{proposition}{Proposition}
\newtheorem*{lemmanonum}{Lemma}
\theoremstyle{definition}
\newtheorem{definition}{Definition}
\newtheorem{example}{Example}
\newtheorem*{remarknonum}{Remark}
\def \R{\mathbb{R}}
\newcommand{\abb}{\mathbb{A}}
\newcommand{\ebb}{\mathbb{E}}
\newcommand{\fbb}{\mathbb{F}}
\newcommand{\fcal}{\mathcal{F}}
\newcommand{\gcal}{\mathcal{G}}
\newcommand{\lcal}{\mathcal{L}}
\newcommand{\ssets}[1]{\{ #1\}}
\newcommand{\union}{\cup}
\newcommand{\map}{\longrightarrow}
\newcommand{\ifif}{\Longleftrightarrow} 
\newcommand{\then}{\Longrightarrow} 
\newcommand{\bigland}{\bigwedge}
\DeclareMathOperator*{\expectation}{\mathbb E}
\newcommand{\expect}[2][]{\expectation_{#1}\nolimits\left[#2\right]}
\newcommand{\expectsmall}[2][]{\expectation_{#1}\nolimits[#2]}
\DeclareMathOperator{\probability}{\mathrm{Pr}}
\newcommand{\prob}[1]{\probability\left[#1\right]}
\DeclareMathOperator*{\variance}{\mathrm{Var}}
\newcommand{\var}[2][]{\variance_{#1}\nolimits\left[#2\right]}
\newcommand{\opt}{\ensuremath{\mathrm{OPT}}}
\newcommand{\val}{\ensuremath{\mathrm{VAL}}}
\newcommand{\rev}{\ensuremath{\mathrm{REV}}}
\newcommand{\wel}{\ensuremath{\mathrm{WEL}}}
\newcommand{\apx}{\ensuremath{\mathrm{APX}}}
\newcommand{\dapx}{\ensuremath{\mathrm{DAPX}}}
\newcommand{\plog}{P^{\mathrm{log}}}
\title{Robust Revenue Maximization\\ Under Minimal Statistical
		Information\thanks{Supported by the Alexander von Humboldt Foundation with
		funds from the German Federal Ministry of Education and Research (BMBF). The
		second author further acknowledges the support of FCT via LASIGE Research Unit, ref. UIDB/00408/2020 and ref. UIDP/00408/2020. The
		last author further acknowledges the support of the German Research
		Foundation (DFG) within the Research Training Group AdONE (GRK 2201).
		Part of this work was done while the first two authors were members of the Operations Research group at the Technical University of Munich.
		\newline 
		\indent An extended abstract of this paper appeared in WINE'20~\cite{Giannakopoulos2019}.
		}}
\author{
		Yiannis Giannakopoulos\thanks{Friedrich-Alexander-Universität Erlangen-Nürnberg, Department of Data Science. 
		Email:
		{\tt
		\href{mailto:yiannis.giannakopoulos@fau.de}{\nolinkurl{yiannis.giannakopoulos@fau.de}}
		}}
	\and
		Diogo Poças\thanks{LASIGE, Faculdade de Ciências, Universidade de Lisboa.
		Email: {\tt \href{mailto:dmpocas@fc.ul.pt}{\nolinkurl{dmpocas@fc.ul.pt}}}}
	\and 
		Alexandros Tsigonias-Dimitriadis\thanks{Technical University of Munich. Email: {\tt
		\href{mailto:alexandros.tsigonias@tum.de}{\nolinkurl{alexandros.tsigonias@tum.de}}}. Associated researcher with the Research Training Group ``Advanced
		Optimization in a Networked Economy'' (GRK 2201), funded by the German
		Research Foundation (DFG).} }
\date{April 28, 2022}
\begin{document}

\maketitle
\setcounter{page}{0}
\thispagestyle{empty}

\begin{abstract} 
We study the problem of multi-dimensional revenue maximization when selling $m$
items to a buyer that has additive valuations for them, drawn from a (possibly
correlated) prior distribution.
Unlike traditional Bayesian auction design, we assume that the seller has a very
restricted knowledge of this prior: they only know the mean $\mu_j$ and an upper
bound $\sigma_j$ on the standard deviation of each item's marginal distribution.
Our goal is to design mechanisms that achieve good revenue against an ideal optimal
auction that has \emph{full} knowledge of the distribution in advance.
Informally, our main contribution is a tight quantification of the interplay between
the dispersity of the priors and the aforementioned robust approximation ratio.
Furthermore, this can be achieved by very simple selling mechanisms.

More precisely, we show that selling the items via separate price lotteries achieves
an $O(\log r)$ approximation ratio where $r=\max_j(\sigma_j/\mu_j)$ is the maximum
coefficient of variation across the items. To prove the result, we leverage a price lottery
for the single-item case. 
If forced to restrict ourselves to deterministic mechanisms, this guarantee degrades
to $O(r^2)$.
Assuming independence of the item valuations, these ratios can be further improved
by pricing the full bundle.
For the case of identical means and variances, in particular, we get a guarantee of
$O(\log(r/m))$ which converges to optimality as the number of items grows
large.
We demonstrate the optimality of the above mechanisms by providing
matching lower bounds. Our tight analysis for the single-item deterministic case resolves an open gap from the work of \citeauthor{Azar:2013aa} [ITCS'13].

As a by-product, we also show how one can directly use our upper
bounds to improve and extend previous results related to the \emph{parametric auctions}
of~\citeauthor{Azar2013} [SODA'13]. 
\end{abstract}
\clearpage

\section{Introduction}\label{sec:intro}

Optimal auction design is one of the most well-studied and fundamental problems in
(algorithmic) mechanism design. In the traditional Myersonian~\citep{Myerson1981}
setting, an auctioneer has a single item for sale and there are $n$ interested
bidders. Each bidder has a (private) valuation for the item which, intuitively,
represents the amount of money they are willing to spend to buy it. The standard
Bayesian approach is to assume that the seller has only an incomplete knowledge of
these valuations, in the form of a prior joint distribution $F$. A selling mechanism
receives bids from the buyers and then decides to whom the item should be allocated
(which, in general, can be a randomized rule) and for what price.
The goal is to design a truthful\footnote{In such mechanisms it is to the buyers'
best interest to honestly report their actual private valuation when bidding. For
formal definitions, see~\cref{sec:model}. We notice here that this is essentially
without loss to the revenue maximization objective, due to the Revelation Principle
(see, e.g.,~\citep{Nisan07}).} selling mechanism that maximizes the auctioneer's
revenue, in expectation over $F$.

\Citet{Myerson1981} provided a complete and very elegant solution for this problem
when bidder valuations are independent, that is, $F$ is a product distribution. In
particular, when the distributions are identical and further satisfy a regularity
assumption, the optimal mechanism takes the very satisfying form of a second-price
(Vickrey) auction with a reserve price. Unfortunately, in general these
characterizations collapse when we move to multi-dimensional environments where
there are $m>1$ items for sale. Multi-item optimal auction design is one of the most
challenging and currently active research areas of mechanism design. Given that the
exact description of the revenue maximizing auctions in such settings is a
notoriously hard task, there is an impressive stream of recent papers, predominantly
from the algorithmic game theory community, that try to provide good approximation
guarantees to the optimal revenue.

The critical common underlying assumption throughout the aforementioned optimal
auction design settings is that the seller has \emph{full knowledge} of the prior
joint distribution $F$ of the bidders' valuations. In many applications though, this
might arguably be an unrealistic assumption to make: usually an auctioneer can
derive some distributional properties about the bidder population, but to completely
determine the actual distribution would require enormous resources. Thus, inspired
by the parametric auctions of~\citet{Azar:2012aa} for the single-dimensional case,
we would like to be able to design robust auctions that (1) make only use of
\emph{minimal statistical information} about the valuation distribution, namely its
mean and variance; and (2) still provide good revenue guarantees even in the worst
case against an adversarial selection of the actual distribution $F$; in particular,
\emph{no further assumptions} (e.g., independence of item valuations or regularity)
should in general be made about $F$. This is our main goal in this paper.

\subsection{Related Work}\label{sec:related}

As mentioned in the introduction, there has been an impressive stream of recent work
on
optimal~\citep{Manelli2007a,Daskalakis:2017aa,gk2014_sicomp,Cai2016,Haghpanah:2015aa}
and
approximately-optimal~\citep{Hart:2017aa,Li2013a,Chawla2010a,Babaioff2014a,Yao2015,Cai:2017aa,Rubinstein:2018aa}
multi-dimensional auction design, which tries to extend the traditional,
single-dimensional auction setting studied in the seminal paper
of~\citet{Myerson1981}. A prominent characteristic that can often be seen in these
papers is the ``simplicity vs optimality'' approach: knowing the computational
hardness~\citep{Daskalakis2013,Chen:2015aa,Chen:2018aa} and structural
complexity~\citep{Hart2013a,Daskalakis:2017aa} of describing exact optimality,
emphasis is placed on designing both simple and practical mechanisms that can still
provide good  revenue guarantees. Of course, this idea can be traced back to the
work of~\citet{Hartline2009a} and~\citet{Bulow1996} for the single-dimensional
setting. For a more thorough overview we refer to the recent review article
of~\citet{Roughgarden:2019aa} and the textbook of~\citet{Hartlinea}.

Related to this, and placed under the general theme of what has come to be known as
``Wilson's doctrine''~\citep{Wilson:1987aa} (see
also~\citep[Section~5.2]{Milgrom:2004aa}), there has also been significant effort
towards the direction of \emph{robust} revenue maximization: designing auctions that
make as few assumptions as possible on the seller's prior knowledge about the
bidders' valuations for the items. Examples include models where the auctioneer can
perform quantile queries~\citep{Chen:2018ab} or knows some estimate of the actual
prior~\citep{Bergemann:2011aa,Li:2019aa,Cai:2017ab}.
Another line of work studies robustness with respect to the correlation of
valuations across bidders or items~\citep{Carroll:2017aa,Gravin:2018aa,Bei:2019aa}.
Other approaches regarding the parameterization of partial distributional knowledge
were considered by~\citet{Dutting2019} and~\citet{BandiBertsimas2014}. See also the
recent survey by~\citet{Carroll2019}.

Most relevant to our work in the present paper is the model of \emph{parametric
auctions}, introduced by~\citet{Azar:2012aa}. More specifically, they study
single-dimensional (digital goods and single-item) auction settings with independent
item valuations, under the assumption that the seller has only access to the mean
$\mu_i$ and the variance $\sigma^2_i$ of each buyer's $i$ prior distribution. Using
Chebyshev-like tail bounds, they show that for the special single-bidder,
single-item case, deterministically pricing at a multiple of the standard deviation
below the mean, i.e.\ offering a take-it-or-leave-it price of $\mu-k\cdot \sigma$,
guarantees an approximation ratio of $\tilde\rho(r)$, where $\tilde\rho$ is an
increasing function taking values in $[1,\infty)$ and $r=\sigma/\mu$. In
\cref{sec:det_azar_micali}, we actually quantify this bound and show that it grows
quadratically. Under an extra assumption of Monotone Hazard Rate (MHR), they show
how the even simpler selling mechanism that just prices at $\mu$ achieves an
approximation ratio of $e$.

It is interesting to notice here that~\citet{Azar:2012aa} provide an \emph{exact}
solution, for deterministic mechanisms, to the robust optimization problem of
maximizing the expected revenue. Then, they use this maximin revenue-optimal
mechanism and compare it to the optimal social welfare (which is trivially also an
upper bound on the optimal revenue), to finally derive their upper bound guarantee
on the approximation ratio of revenue. As such, their results are not tailored to be
tight for the \emph{ratio} benchmark.
As a matter of fact, in~\citep{Azar:2013aa} the authors also provide an explicit
lower bound that can be written as $1+r^2$. This is an important motivating factor
for our work, since one of our main goals in this paper is to close these gaps and
provide tight approximation ratio bounds.

\Citet{Azar2013} use a clever reduction (see also the work
of~\citet{Chawla2014a}) to show how these results can be paired with the work
of~\citet{Dhangwatnotai2014a} regarding the VCG mechanism with reserves, in order to
design parametric auctions for very general single-dimensional settings. In
particular, they show how in matroid-constrained environments with the extra
assumption of regularity on the prior distributions (or MHR for more general
downward-closed environments), using the aforementioned parametric prices as lazy
reserves guarantees a $2\tilde\rho(r)$-approximation to the optimal (Myersonian)
revenue and a $\tilde\rho(r)$-approximation to the optimal social welfare. Here
$r=\max_i\sigma_i/\mu_i$.

Another work which is close to ours is that of~\citet{Carrasco2018}. The authors
essentially extend the model of~\citet{Azar:2012aa} to randomized mechanisms,
solving the maximin robust optimization problem with respect to revenue. Again, in
principle their results cannot be immediately translated to tight bounds for the
approximation ratio; however, unlike the deterministic case for which in the present
paper we have to design a new mechanism in order to achieve ratio optimality, we
will show that the maximin optimal lottery of~\citet{Carrasco2018} is actually also
optimal for the ratio benchmark.

\paragraph{Sample access vs knowledge of moments}  
Another stream of research studies models where the auctioneer has sample access to
the
distribution~\citep{Huang:2018aa,Cole2014a,Fu2015,Morgenstern:2016aa,Gonczarowski:2018aa,Dhangwatnotai2014a,Syrgkanis:2017aa,Allouah2022,Hu2021,Zampetakis2020}.
It is not hard to imagine scenarios where such access to individual past data might
be infeasible or impractical, e.g.\ due to data protections and privacy
restrictions. Furthermore, there might also exist computational limitations in
representing a distribution, or storing and reasoning with a large number of
samples.
In such settings, it is more natural to assume access to only some statistical
aggregates of the underlying data, such as the mean and the standard deviation.

From a theoretical perspective, the sample access model is incomparable with the
moment-based model of the present paper, as they rely on different distributional
assumptions. In particular, independence, regularity and/or upper bounds on the
support are standard assumptions in the aforementioned sample complexity papers. As
a matter of fact, these are necessary to derive non-trivial results (see e.g. the
counterexample of~\citet[Footnote~3]{Cole2014a}). Furthermore, if independence is
dropped, \citet{Dughmi2014} demonstrate that an exponential number of samples is
required in order to achieve a constant-factor approximation to the optimal revenue.
In our setting, on the other hand, we require none of the above. However, we do
assume (as a design principle) exact knowledge of the mean and an upper bound on the standard deviation.
This information cannot be retrieved exactly via any finite amount of samples, although intervals of confidence can be used to estimate it; we leave as future work the study of the revenue maximization problem when having only approximate knowledge of the distribution moments.

\paragraph{Maximin robustness for approximation ratio vs revenue}

The literature so far has focused on solving the maximin robust optimization problem with respect to revenue, but we chose to formulate and use the robust approximation ratio instead. Apart from being a natural choice for a computer scientist, since ratios of this sort are often used in algorithm design, it can also complement the existing objective and offer further insights. 

Both quantities have strengths from a theoretical and practical perspective, and their comparison can be subject to a broader discussion (see also \cref{sec:discussion}). In any case, we believe that the robust approximation ratio will come in handy in some scenarios. For instance, in large markets, where the seller's task of approximately quantifying the revenue that they expect to obtain might become daunting, the ``scale-free'' approximation ratio can be helpful. It is a more interpretable objective because the seller can observe their loss due to the limited statistical information as just a percentage of the full knowledge benchmark. Moreover, they can easily follow how changes in the mean or the standard deviation drive the optimal they can achieve with a robust mechanism away from the revenue of an ideal optimal auction.

Similarly, the robust approximation ratio is probably the suitable benchmark for understanding the effect of the knowledge of higher moments. An ambitious, meaningful open question is to show how the seller should design robustly optimal mechanisms when they know up to $N$ moments of the distribution. By assuming in our current model knowledge of also, e.g., the third moment, the seller's revenue will increase since she learns more about the underlying distribution. The ratio benchmark can show us \textit{at which rate} the maximin revenue is improving every time we add the knowledge of a higher moment and when it becomes near-optimal.

\subsection{Results and Techniques}\label{sec:results} The main focus of our paper
is a multi-dimensional auction setting where a single bidder has additive valuations
for $m$ items, drawn from a joint probability distribution $F$. We make no further
assumptions on $F$; in particular, we do not require $F$ to be a product
distribution nor do we enforce any kind of regularity.
The seller knows only the mean $\mu_j$ and (an upper bound on) the standard
deviation $\sigma_j$ of each item's $j$ marginal distribution. Based on this limited
statistical information, they are asked to fix a truthful (possibly randomized)
mechanism to sell the items. Then, an adversary chooses the actual distribution $F$
(respecting, of course, the statistical $(\mu_j,\sigma_j)$-information) and the
seller realizes the expected revenue of the auction, in the standard Bayesian way,
in expectation with respect to $F$. The main quantity of interest, which we call the
\emph{robust approximation ratio} is the ratio of the optimal revenue (which has
full knowledge of $F$ in advance) to this revenue.

Our worst-case, min-max approach is similar in spirit to the previous work of~\citet{Azar:2013aa,Azar2013} and~\citet{Carrasco2018}. However, the critical
difference in the present paper is that our main goal is to optimize the
\emph{ratio} against the optimal revenue and not just the expected revenue of the
selling mechanism on its own. It turns out that, similarly to the aforementioned
previous work, our bounds can be stated with respect to the ratio
$r_j=\sigma_j/\mu_j$ of each item's marginal distribution. This is an important statistical quantity called the
\emph{coefficient of variation (CV)};
it is essentially a ``unit-independent''
measure of the dispersion of the distribution (see, e.g.,~\citep{Moriguti1951a}
or~\citep[Sec.~2.21]{Kendall:1948aa}).

In \cref{sec:prelims} we formally introduce our model and necessary notation.
In the following two sections we focus on the single-item case, since this will be
the building block for all our results. In particular,
in~\cref{sec:deterministic-upper} we show that the robust approximation ratio of
deterministic mechanisms is \emph{exactly} $\rho_D(r)\approx 1+4r^2$ (see
\cref{def:function-upper-deterministic}), closing a gap open from the work
of~\citet{Azar:2013aa}. Similarly to previous work, in order to achieve this we solve
exactly the corresponding min-max problem (see~\cref{lem:deterministic_inner_opt});
however, the method and the solution itself have to be different, since we are
dealing with the ratio, which is a more ``sensitive'' quantity than the revenue on
its own. By ``sensitive'' we mean that its value changes in a less smooth and more unpredictable way for small perturbations of the distribution and the mechanism.

Next, in \cref{sec:randomized-upper-single} we deal with general randomized auctions
and we show that a lottery proposed
by~\citet{Carrasco2018}, which we term \emph{log-lottery}, although designed for a
different objective achieves an approximation ratio of
$\rho(r)\approx 1+\ln(1+r^2)$ (see~\cref{def:function-upper}) in our setting, which is
asymptotically optimal. We start with a quantitative analysis of the log-lottery mechanism (\cref{th:randomized-upper-single}). In particular, we show an upper bound to the robust approximation ratio that grows logarithmically in $r$. This bound already establishes a strong separation between the power of deterministic and randomized mechanisms. The question then becomes if a different randomized selling mechanism can achieve a sublogarithmic or even constant upper bound. We answer this in the negative by showing that the logarithmic upper bound is asymptotically tight. The construction of the lower bound instance
(\cref{thm:lower-bound-randomized}) is arguably the most technically challenging
part of our paper, and is based on a novel utilization of Yao's minimax principle
(see also~\cref{sec:yao-appendix}) that might be of independent interest for
deriving robust approximation lower bounds in other Bayesian mechanism design
settings as well. Informally, the adversary offers a distribution over two-point
mass distributions, finely-tuned such that the resulting mixture becomes a truncated
``equal-revenue style'' distribution (see~\cref{fig:randomlowbound-mixture}). The main difference to other settings in the literature where Yao's principle is applied is that the adversary has to randomize over probability distributions, which form an infinite-dimensional space.
We can imagine this as a space of ``distributions over distributions''. This introduces new technical challenges since the adversary's model of randomization needs to be properly defined, and more importantly, Yao's principle does not hold anymore. Thus, our goal is twofold: we need to carefully describe how the adversary constructs a space of distributions over distributions and then show that we can extend Yao's principle to such spaces.

It is important to restate that we work under the assumption that we know an \emph{upper
bound} on the standard-deviation $\sigma$ and not its exact value. Although this
makes our upper bounds more powerful, it is not a
source of ``artificial'' additional power for the adversary when designing our lower
bounds. We formalize this in~\cref{lem:exactsigmawlog}. Furthermore, this helps us
to formally demonstrate (see~\cref{prop:lower-randomized-simplify}) that our
aforementioned, Yao-based, lower bound construction lies at the ``border of
simplicity'' of any non-trivial lower bound.

In \cref{sec:many-items} we demonstrate how the $O(\log r)$-approximate mechanism of
the single-item case can be utilized to provide optimal approximation ratios for the
multi-dimensional case of $m$ items as well. More specifically, we show that selling
each item $j$ separately using the log-lottery guarantees an approximation ratio of
$\rho(r_{\max})$ where $r_{\max}=\max_j r_j$ is the maximum CV across the items. If
the seller has extra information that item valuations are independent (that is, $F$
is a product distribution), then switching to a lottery that offers all items in a
single full bundle can give an improved approximation ratio of $\rho(\bar r)$, where
$\bar r=\sqrt{\sum_j\sigma^2_j}/\sum_j \mu_j$ is the CV of the average valuation. We
complement these upper bounds by tight lower bounds in
\cref{th:lower-randomized-no-iid}; these constructions have at their core the
single-item lower bound, but they take care of delicately assigning valuations to
the remaining items so that they respect independence and the common prior
statistical information. We want to highlight that the lower bound of \cref{th:lower-randomized-no-iid} is strong enough to hold for \emph{any} number of items and \emph{any} choice of coefficients of variation $r_1,r_2, \dots, r_m$.
An interesting corollary of our upper bounds
(\cref{th:upper-bound-many-iid}) is that for the special case of independent
valuations with the same mean and variance, the approximation ratio is at most
$\rho\left(\frac{\sigma}{\mu\sqrt{m}}\right)$, converging to optimality as the
number of items grows large.

In \cref{sec:parametric} we diverge from our main model to discuss some additional
``peripheral'' results that can be deduced as direct corollaries of previous work
combined with our upper bounds, in a ``black-box'' way. First, we study the
single-dimensional, multi-bidder setting of \emph{parametric auctions} introduced
by~\citet{Azar:2013aa}. More specifically, we show how the positive results derived
in~\citet[Theorem~4.3]{Azar2013} can be further improved: running VCG with lazy
reserve prices drawn from the log-lottery guarantees a $2\rho(r)$
approximation to the optimal Myersonian revenue (\cref{th:lazy_many_items}).

Secondly, in~\cref{sec:lambda-regularity} we discuss how a relaxation of our model
that only assumes knowledge of the mean (that is, without any information
about the variance $\sigma^2$) can still produce good robust approximation ratios
under an extra regularity assumption. More precisely, in~\cref{prop:lambdaregaux}
we give an upper bound on the approximation ratio of the mechanism that just offers
the mean $\mu$ as a take-it-or-leave-it price, under the extra assumption that the
item's valuation distribution is $\lambda$-regular (see~\cref{fig:plot_regular});
this is a general notion of regularity that interpolates smoothly between regularity
à la Myerson ($\lambda=1$) and the Monotone Hazard Rate (MHR) condition ($\lambda=0$);
see, e.g.,~\citep{SchweizerSzech2019,gpz19_arxiv}. This result extends the
$e$-approximation for MHR distributions of~\citet[Theorem~3]{Azar:2012aa}.
Finally, we provide a more detailed characterization of the
relationship between the knowledge of $\lambda$-regularity and knowledge of $\sigma$,
with respect to the resulting robust approximation ratio upper bound
(see~\cref{fig:regvsdisp}).

\paragraph{Size of the coefficient of variation}
It is worth discussing briefly the implications of the size of the CV, our main
quantity of interest, for our results. We can observe that our upper bounds do
\emph{not} increase with the number of items $m$; as a matter of fact, for the case
of independently distributed items with the same mean and variance, the upper bound
even decreases with respect to $m$.
Although the CV of a distribution could be arbitrarily large in general, one could
argue that, for many practical scenarios, it is unlikely to encounter data with very
large dispersion. From a theoretical perspective, note that the CV is actually
bounded for important special classes of distributions, like MHR (which include,
e.g., the truncated normal, uniform, exponential and gamma~\citep{Barlow1996}) and,
more generally, $\lambda$-regular for a fixed $\lambda<1/2$
(see~\eqref{eq:bound_CV_regular}). Furthermore, for general distributions, if one
assumes that the CV of the item marginals are bounded by a universal constant, then
our bounds yield a constant robust approximation ratio to the optimal pricing, even
for correlated distributions (and regardless of the number of items).

\section{Preliminaries} \label{sec:prelims}

\subsection{Model and Notation}\label{sec:model}

A real nonnegative random variable will be called \emph{$(\mu,\sigma)$-distributed}
if its expectation is $\mu$ and its standard deviation is \emph{at most} $\sigma$.
We let $\fbb_{\mu,\sigma}$ denote the class of $(\mu,\sigma)$ distributions. We
shall also briefly (see~\cref{lem:exactsigmawlog}) discuss the restriction to
distributions with standard deviation of \emph{exactly} $\sigma$; this subclass will
be denoted by $\fbb^=_{\mu,\sigma}$.

For the most part of this paper we study auctions with $m$ items and a single
additive bidder, whose valuations $(v_1,\dots,v_m)$ for the items are drawn from a
joint distribution $F$ over $\R_{\geq 0}^m$. We denote the marginal distribution of
$v_j$ by $F_j$, and assume that it has finite mean and variance. In general, we make
no further assumptions for $F$; in particular, we do not assume independence of the
random variables $v_1,\dots,v_m$ nor do we enforce any regularity or continuity
assumption. For vectors $\vec{\mu}=(\mu_1,\dots,\mu_m)\in\R_{>0}^m$,
$\vec{\sigma}=(\sigma_1,\dots,\sigma_m)\in\R_{\geq 0}^m$ we denote by
$\fbb_{\vec{\mu},\vec{\sigma}}$ the class of all $m$-dimensional
distributions whose $j$-th marginal is $(\mu_j,\sigma_j)$-distributed, for all
$j=1,\dots,m$.

A (direct revelation, possibly randomized) selling \emph{mechanism} for a single
bidder and $m$ items is defined by a pair $(x,\pi)$ where $x:\R_{\geq
0}^m\rightarrow[0,1]^m$ is the \emph{allocation rule} and $\pi:\R_{\geq
0}^m\rightarrow\R_{\geq 0}$ is the \emph{payment rule}. If the buyer submits as bid
a valuation vector of $\vec{v}$, then they receive each item $i$ with probability
$x_i(\vec{v})$, and are charged (a total of) $\pi(\vec{v})$. We restrict our study
to \emph{truthful} mechanisms, which are characterized by the conditions
\begin{align}x(\vec{v})\cdot \vec{v}-\pi(\vec{v})&\geq
x(\vec{w})\cdot\vec{v}-\pi(\vec{w}) & \text{for all }\vec{v},\vec{w};\\
x(\vec{v})\cdot \vec{v}-\pi(\vec{v})&\geq 0 & \text{for all
}\vec{v}.\label{eq:ind_rat}\end{align}
Informally, the first condition states that the bidder can not be ``better off'' by
misreporting their true valuation; the second condition, known as \emph{individual
rationality}, ensures that the bidder cannot harm themselves by truthfully
participating in the mechanism.

Let $\abb_m$ denote the space of all truthful selling mechanisms.  Then, given an
$m$-dimensional distribution $F$, we denote by
\begin{itemize}
    \item $\rev(A;F)=\expectation_{\vec{v}\sim F}[\pi(\vec{v})]$, the expected
    \emph{revenue} of $A$ (the expectation is taken w.r.t.\ $F$);
    \item $\wel(A;F)=\expectation_{\vec{v}\sim F}[x(\vec{v})\cdot \vec{v}]$, the
    expected \emph{welfare of} $A$;
    \item $\opt(F)=\sup_{A\in\abb_m} \rev(A;F)$, the optimum revenue;
    \item $\val(F)=\sup_{A\in\abb_m} \wel(A;F)$, the optimum welfare. By definition,
    this is also the welfare of a VCG auction; moreover, for a single additive
    bidder with a joint distribution in $\fbb_{\vec{\mu},\vec{\sigma}}$, this is
    just the sum of the marginal expectations, $\val(F)=\sum_{j=1}^m\mu_j$.
\end{itemize}
Note that, due to \eqref{eq:ind_rat}, we immediately have the so-called
\emph{welfare bounds} for the above quantities: for any mechanism and distribution,
\[\rev(A;F)\leq\wel(A;F)\qquad\text{and}\qquad\opt(F)\leq\val(F).\]

Our goal in this paper is to quantify the following benchmark
\begin{equation}\label{eq:apxopt}
\apx(\vec{\mu},\vec{\sigma})=\adjustlimits\inf_{A\in\abb_m}\sup_{F\in\fbb_{\vec{\mu},\vec{\sigma}}}\frac{\opt(F)}{\rev(A;F)},
\end{equation} which we call the \emph{robust approximation ratio}. The semantics
are the following: a seller chooses the best (revenue-maximizing) selling mechanism
$A$, given only knowledge of the means $\vec\mu$ and standard deviations
$\vec\sigma$ and then an adversary (``nature'') responds by choosing a worst-case
``valid'' distribution that respects the statistical information $\vec\mu$ and
$\vec\sigma$. At some parts of our paper, we restrict our attention to
\emph{deterministic} mechanisms $A$; that is, mechanisms whose allocation rule
satisfies $x(\vec v)\in\ssets{0,1}^m$, for all $\vec v$. Under this additional
constraint, the quantity in~\eqref{eq:apxopt} will be denoted by
$\dapx(\vec{\mu},\vec{\sigma})$.

For the special case of a single item ($m=1$), we know from the seminal work
of~\citet{Myerson1981} that an auction $A\in\abb_1$ is truthful if and only if its
allocation rule is monotone nondecreasing and the payment rule is given by
$\pi(v)=v\cdot x(v)-\int_0^vx(z)\,dz$. In particular, this implies that every
deterministic mechanism $A\in\abb_1$ is completely determined by a single
take-it-or-leave-it price $p\geq 0$; thus, we will feel free to sometimes abuse
notation and write $\rev(p;F)$ instead of $\rev(A;F)$ if $A$ is the deterministic
auction that sells at price $p$.

Most importantly for our work, every randomized auction for a single item can be
seen as a nonnegative random variable over prices
(see~\citet[Footnote~10]{Carrasco2018}). In particular, since the allocation rule is
monotone and takes values in $[0,1]$, it can be interpreted as the cumulative
distribution of a certain randomization over prices, which assigns the item with the
same probability as the original mechanism.\footnote{There are only two subtle
technical issues that need to be taken into account; $x$ need not be
right-continuous, and $\lim_{v\rightarrow\infty}x(v)$ need not equal 1; we can
assume these without loss of generality. Otherwise, one could take the
right-continuous closure of $x$, and either assign the remainder probability to high
prices, or apply a suitable scaling, which would only increase expected revenue.} In
this way, for a randomized single-item auction we can abuse notation and write
$p\sim A$ to denote that a price $p$ is sampled according to $A$. In this way,
$\rev(A;F)=\expectation_{p\sim A}[\rev(p;F)]$.

Finally, from~\citet{Myerson1981} we also know that for single-item settings the
optimum revenue can always be achieved by a deterministic mechanism, that is,
\begin{equation}
\label{eq:myerson_single}
\opt(F)=\sup_{p\geq0}\rev(p;F)=\sup_{p\geq 0}p\cdot (1-F(p-))
\end{equation} where we use $F(\cdot)$ for the cumulative function (cdf) of
distribution $F$ and $F(p-)=\prob{X<p}=\lim_{x\to p^-} F(x)$, where $X\sim F$. We
shall call $\opt(\cdot)$ the \emph{Myerson operator} and for now we simply observe
that this is a functional mapping distributions to real nonnegative numbers.

\subsection{Determinism vs Randomization}\label{sec:detvsrand}

We would like to give some basic intuition on how randomization helps to hedge
uncertainty. To this end, we present a simple example where a randomized strategy
beats every price.
\begin{example} Assume that we are facing a very restricted adversary who can choose
between two distributions. Distribution A has just a point mass at $1$. Distribution
B is a two-point mass distribution, which returns either $0$ or $2$ with probability
$1/2$ each.

If the seller is restricted to deterministic pricing rules, it is not hard to see
that their best strategy is to post a price equal to $1$ (and for the adversary to
choose distribution B), for a worst-case expected revenue of $\frac{1}{2}$. If the
seller posts anything above $1$, then the adversary will always respond with
distribution A, resulting in zero revenue. Consider now the following randomization
over prices: The seller posts a price of $1$ with probability $2/3$, and a price of
$2$ with probability $1/3$. If the adversary chooses Distribution A, then the
expected revenue will be $1 \cdot \frac{2}{3} = \frac{2}{3}$. Similarly if
Distribution B is chosen, then the expected revenue becomes $1 \cdot \frac{2}{3}
\cdot \frac{1}{2} + 2 \cdot \frac{1}{3} \cdot \frac{1}{2} = \frac{2}{3}$.
\end{example} 
Regardless of the adversarial response, a randomization over two prices strictly
outperforms the best deterministic pricing. In subsequent sections we formalize this
intuition, by showing a significant separation between the power of deterministic
and randomized mechanisms. A separation between determinism and randomization in
single-dimensional settings has been demonstrated by~\citet{Fu2015} under a sample access model, and by~\citet{Bergemann2008} when the seller only knows the support of the buyer's valuation distribution. Beyond the context of auction design, the observation than ambiguity averse individuals can randomize over their choices to hedge uncertainty can be traced back decades ago in the work of~\citet{Raiffa1961}.

\subsection{Auxiliary Functions and Distributions}\label{sec:auxiliary-functions}
\begin{figure}
\centering
\scalebox{0.95}{\footnotesize 
\begin{tikzpicture}
\hypersetup{colorlinks={true},linkcolor={Grey}}
\begin{axis}[
    xmin=0, xmax=2,
    ymin=1, ymax=20,
    ymajorgrids=true,
    xmajorgrids=true,
    extra y ticks={1},
    extra y tick labels={$1$},
    grid style=dashed,
    xlabel = {$r=\sigma/\mu$},
    ylabel = {$\dapx(\mu,\sigma)$},
    legend pos=north west,
    thick
]
\addplot[
    color=blue, ] coordinates {(0,1) (0.01,1.04943) (0.02,1.08148) (0.03,1.11027)
    (0.04,1.13753) (0.05,1.16397) (0.06,1.18996) (0.07,1.21573) (0.08,1.24143)
    (0.09,1.26716) (0.1,1.293) (0.11,1.31901) (0.12,1.34525) (0.13,1.37174)
    (0.14,1.39853) (0.15,1.42564) (0.16,1.4531) (0.17,1.48092) (0.18,1.50914)
    (0.19,1.53776) (0.2,1.5668) (0.21,1.59627) (0.22,1.6262) (0.23,1.65658)
    (0.24,1.68744) (0.25,1.71878) (0.26,1.75062) (0.27,1.78295) (0.28,1.81581)
    (0.29,1.84918) (0.3,1.88308) (0.31,1.91751) (0.32,1.9525) (0.33,1.98803)
    (0.34,2.02413) (0.35,2.06079) (0.36,2.09802) (0.37,2.13583) (0.38,2.17423)
    (0.39,2.21321) (0.4,2.2528) (0.41,2.29299) (0.42,2.33378) (0.43,2.37519)
    (0.44,2.41721) (0.45,2.45986) (0.46,2.50314) (0.47,2.54705) (0.48,2.59159)
    (0.49,2.63678) (0.5,2.68262) (0.51,2.7291) (0.52,2.77623) (0.53,2.82403)
    (0.54,2.87248) (0.55,2.9216) (0.56,2.97139) (0.57,3.02185) (0.58,3.07299)
    (0.59,3.12481) (0.6,3.17731) (0.61,3.23049) (0.62,3.28436) (0.63,3.33893)
    (0.64,3.39418) (0.65,3.45014) (0.66,3.50679) (0.67,3.56415) (0.68,3.62221)
    (0.69,3.68098) (0.7,3.74046) (0.71,3.80065) (0.72,3.86155) (0.73,3.92317)
    (0.74,3.98551) (0.75,4.04857) (0.76,4.11236) (0.77,4.17687) (0.78,4.2421)
    (0.79,4.30807) (0.8,4.37477) (0.81,4.44219) (0.82,4.51036) (0.83,4.57925)
    (0.84,4.64889) (0.85,4.71927) (0.86,4.79038) (0.87,4.86224) (0.88,4.93484)
    (0.89,5.00819) (0.9,5.08228) (0.91,5.15712) (0.92,5.23271) (0.93,5.30905)
    (0.94,5.38614) (0.95,5.46398) (0.96,5.54258) (0.97,5.62194) (0.98,5.70205)
    (0.99,5.78291) (1.,5.86454) (1.01,5.94692) (1.02,6.03007) (1.03,6.11397)
    (1.04,6.19864) (1.05,6.28407) (1.06,6.37026) (1.07,6.45722) (1.08,6.54495)
    (1.09,6.63344) (1.1,6.7227) (1.11,6.81273) (1.12,6.90353) (1.13,6.99509)
    (1.14,7.08743) (1.15,7.18053) (1.16,7.27441) (1.17,7.36906) (1.18,7.46449)
    (1.19,7.56069) (1.2,7.65766) (1.21,7.7554) (1.22,7.85393) (1.23,7.95322)
    (1.24,8.0533) (1.25,8.15415) (1.26,8.25578) (1.27,8.35818) (1.28,8.46137)
    (1.29,8.56533) (1.3,8.67007) (1.31,8.7756) (1.32,8.8819) (1.33,8.98898)
    (1.34,9.09685) (1.35,9.2055) (1.36,9.31493) (1.37,9.42514) (1.38,9.53613)
    (1.39,9.64791) (1.4,9.76047) (1.41,9.87381) (1.42,9.98794) (1.43,10.1029)
    (1.44,10.2186) (1.45,10.335) (1.46,10.4523) (1.47,10.5704) (1.48,10.6892)
    (1.49,10.8088) (1.5,10.9292) (1.51,11.0504) (1.52,11.1724) (1.53,11.2952)
    (1.54,11.4188) (1.55,11.5431) (1.56,11.6682) (1.57,11.7942) (1.58,11.9209)
    (1.59,12.0484) (1.6,12.1767) (1.61,12.3058) (1.62,12.4356) (1.63,12.5663)
    (1.64,12.6978) (1.65,12.83) (1.66,12.963) (1.67,13.0968) (1.68,13.2315)
    (1.69,13.3669) (1.7,13.5031) (1.71,13.64) (1.72,13.7778) (1.73,13.9164)
    (1.74,14.0558) (1.75,14.1959) (1.76,14.3368) (1.77,14.4786) (1.78,14.6211)
    (1.79,14.7644) (1.8,14.9085) (1.81,15.0535) (1.82,15.1992) (1.83,15.3456)
    (1.84,15.4929) (1.85,15.641) (1.86,15.7899) (1.87,15.9395) (1.88,16.09)
    (1.89,16.2413) (1.9,16.3933) (1.91,16.5461) (1.92,16.6998) (1.93,16.8542)
    (1.94,17.0094) (1.95,17.1654) (1.96,17.3223) (1.97,17.4799) (1.98,17.6383)
    (1.99,17.7975) (2.,17.9574) };
    \addlegendentry{$\rho_D(r)$ (\cref{def:function-upper-deterministic})}
\end{axis}
\end{tikzpicture}}
~
\scalebox{0.95}{\footnotesize 
\begin{tikzpicture}
\hypersetup{colorlinks={true},linkcolor={Grey}}
\begin{axis}[
    xmin=0, xmax=2,
    ymin=1, ymax=6,
    ymajorgrids=true,
    xmajorgrids=true,
    grid style=dashed,
    xlabel = {$r=\sigma/\mu$},
    ylabel = {$\apx(\mu,\sigma)$},
    legend pos=north west,
    thick
]
\addplot[
    color=blue, ] coordinates {(0,1) (0.01,1.06961) (0.02,1.11305) (0.03,1.151)
    (0.04,1.18606) (0.05,1.21929) (0.06,1.25125) (0.07,1.28226) (0.08,1.31256)
    (0.09,1.34228) (0.1,1.37155) (0.11,1.40044) (0.12,1.42903) (0.13,1.45734)
    (0.14,1.48544) (0.15,1.51335) (0.16,1.54109) (0.17,1.56869) (0.18,1.59617)
    (0.19,1.62353) (0.2,1.6508) (0.21,1.67797) (0.22,1.70507) (0.23,1.73209)
    (0.24,1.75904) (0.25,1.78593) (0.26,1.81276) (0.27,1.83953) (0.28,1.86624)
    (0.29,1.8929) (0.3,1.91951) (0.31,1.94607) (0.32,1.97258) (0.33,1.99904)
    (0.34,2.02544) (0.35,2.0518) (0.36,2.0781) (0.37,2.10435) (0.38,2.13054)
    (0.39,2.15668) (0.4,2.18277) (0.41,2.2088) (0.42,2.23477) (0.43,2.26068)
    (0.44,2.28653) (0.45,2.31232) (0.46,2.33804) (0.47,2.3637) (0.48,2.38929)
    (0.49,2.41481) (0.5,2.44026) (0.51,2.46564) (0.52,2.49095) (0.53,2.51618)
    (0.54,2.54134) (0.55,2.56642) (0.56,2.59142) (0.57,2.61634) (0.58,2.64118)
    (0.59,2.66593) (0.6,2.69061) (0.61,2.71519) (0.62,2.73969) (0.63,2.76411)
    (0.64,2.78843) (0.65,2.81267) (0.66,2.83681) (0.67,2.86087) (0.68,2.88483)
    (0.69,2.90869) (0.7,2.93247) (0.71,2.95614) (0.72,2.97973) (0.73,3.00321)
    (0.74,3.0266) (0.75,3.04989) (0.76,3.07309) (0.77,3.09618) (0.78,3.11917)
    (0.79,3.14207) (0.8,3.16486) (0.81,3.18756) (0.82,3.21015) (0.83,3.23265)
    (0.84,3.25504) (0.85,3.27733) (0.86,3.29951) (0.87,3.3216) (0.88,3.34358)
    (0.89,3.36546) (0.9,3.38724) (0.91,3.40892) (0.92,3.43049) (0.93,3.45196)
    (0.94,3.47333) (0.95,3.4946) (0.96,3.51576) (0.97,3.53683) (0.98,3.55779)
    (0.99,3.57865) (1.,3.5994) (1.01,3.62006) (1.02,3.64061) (1.03,3.66107)
    (1.04,3.68142) (1.05,3.70167) (1.06,3.72182) (1.07,3.74187) (1.08,3.76183)
    (1.09,3.78168) (1.1,3.80143) (1.11,3.82109) (1.12,3.84065) (1.13,3.86011)
    (1.14,3.87947) (1.15,3.89873) (1.16,3.9179) (1.17,3.93697) (1.18,3.95595)
    (1.19,3.97483) (1.2,3.99362) (1.21,4.01231) (1.22,4.03091) (1.23,4.04941)
    (1.24,4.06782) (1.25,4.08614) (1.26,4.10437) (1.27,4.1225) (1.28,4.14055)
    (1.29,4.1585) (1.3,4.17637) (1.31,4.19414) (1.32,4.21183) (1.33,4.22942)
    (1.34,4.24693) (1.35,4.26435) (1.36,4.28169) (1.37,4.29893) (1.38,4.3161)
    (1.39,4.33317) (1.4,4.35016) (1.41,4.36707) (1.42,4.3839) (1.43,4.40064)
    (1.44,4.41729) (1.45,4.43387) (1.46,4.45036) (1.47,4.46677) (1.48,4.48311)
    (1.49,4.49936) (1.5,4.51553) (1.51,4.53162) (1.52,4.54764) (1.53,4.56358)
    (1.54,4.57944) (1.55,4.59522) (1.56,4.61092) (1.57,4.62655) (1.58,4.64211)
    (1.59,4.65759) (1.6,4.673) (1.61,4.68833) (1.62,4.70359) (1.63,4.71877)
    (1.64,4.73389) (1.65,4.74893) (1.66,4.7639) (1.67,4.7788) (1.68,4.79363)
    (1.69,4.80839) (1.7,4.82308) (1.71,4.8377) (1.72,4.85225) (1.73,4.86674)
    (1.74,4.88115) (1.75,4.8955) (1.76,4.90979) (1.77,4.92401) (1.78,4.93816)
    (1.79,4.95225) (1.8,4.96627) (1.81,4.98023) (1.82,4.99413) (1.83,5.00796)
    (1.84,5.02173) (1.85,5.03544) (1.86,5.04908) (1.87,5.06267) (1.88,5.07619)
    (1.89,5.08965) (1.9,5.10305) (1.91,5.1164) (1.92,5.12968) (1.93,5.14291)
    (1.94,5.15607) (1.95,5.16918) (1.96,5.18223) (1.97,5.19522) (1.98,5.20816)
    (1.99,5.22104) (2.,5.23387) };
    \addlegendentry{$\rho(r)$ (\cref{def:function-upper})}
    \addplot[
    color=red, ] coordinates {(0,1) (0.01,1.0001) (0.02,1.0004) (0.03,1.0009)
    (0.04,1.0016) (0.05,1.0025) (0.06,1.00359) (0.07,1.00489) (0.08,1.00638)
    (0.09,1.00807) (0.1,1.00995) (0.11,1.01203) (0.12,1.0143) (0.13,1.01676)
    (0.14,1.01941) (0.15,1.02225) (0.16,1.02528) (0.17,1.02849) (0.18,1.03189)
    (0.19,1.03546) (0.2,1.03922) (0.21,1.04316) (0.22,1.04727) (0.23,1.05155)
    (0.24,1.056) (0.25,1.06062) (0.26,1.06541) (0.27,1.07037) (0.28,1.07548)
    (0.29,1.08075) (0.3,1.08618) (0.31,1.09176) (0.32,1.09749) (0.33,1.10337)
    (0.34,1.10939) (0.35,1.11556) (0.36,1.12186) (0.37,1.12831) (0.38,1.13488)
    (0.39,1.14159) (0.4,1.14842) (0.41,1.15538) (0.42,1.16246) (0.43,1.16966)
    (0.44,1.17697) (0.45,1.1844) (0.46,1.19194) (0.47,1.19959) (0.48,1.20734)
    (0.49,1.21519) (0.5,1.22314) (0.51,1.23119) (0.52,1.23933) (0.53,1.24756)
    (0.54,1.25588) (0.55,1.26429) (0.56,1.27277) (0.57,1.28134) (0.58,1.28998)
    (0.59,1.2987) (0.6,1.30748) (0.61,1.31634) (0.62,1.32527) (0.63,1.33426)
    (0.64,1.34331) (0.65,1.35242) (0.66,1.36158) (0.67,1.3708) (0.68,1.38008)
    (0.69,1.3894) (0.7,1.39878) (0.71,1.40819) (0.72,1.41766) (0.73,1.42716)
    (0.74,1.43671) (0.75,1.44629) (0.76,1.4559) (0.77,1.46556) (0.78,1.47524)
    (0.79,1.48495) (0.8,1.4947) (0.81,1.50447) (0.82,1.51426) (0.83,1.52408)
    (0.84,1.53392) (0.85,1.54378) (0.86,1.55366) (0.87,1.56355) (0.88,1.57346)
    (0.89,1.58339) (0.9,1.59333) (0.91,1.60328) (0.92,1.61324) (0.93,1.62321)
    (0.94,1.63318) (0.95,1.64317) (0.96,1.65316) (0.97,1.66315) (0.98,1.67315)
    (0.99,1.68315) (1.,1.69315) (1.01,1.70315) (1.02,1.71315) (1.03,1.72314)
    (1.04,1.73314) (1.05,1.74313) (1.06,1.75311) (1.07,1.76309) (1.08,1.77307)
    (1.09,1.78303) (1.1,1.79299) (1.11,1.80294) (1.12,1.81288) (1.13,1.82281)
    (1.14,1.83274) (1.15,1.84264) (1.16,1.85254) (1.17,1.86243) (1.18,1.8723)
    (1.19,1.88215) (1.2,1.892) (1.21,1.90183) (1.22,1.91164) (1.23,1.92144)
    (1.24,1.93122) (1.25,1.94098) (1.26,1.95073) (1.27,1.96046) (1.28,1.97017)
    (1.29,1.97987) (1.3,1.98954) (1.31,1.9992) (1.32,2.00883) (1.33,2.01845)
    (1.34,2.02805) (1.35,2.03762) (1.36,2.04718) (1.37,2.05671) (1.38,2.06623)
    (1.39,2.07572) (1.4,2.08519) (1.41,2.09464) (1.42,2.10406) (1.43,2.11347)
    (1.44,2.12285) (1.45,2.13221) (1.46,2.14154) (1.47,2.15086) (1.48,2.16015)
    (1.49,2.16941) (1.5,2.17865) (1.51,2.18787) (1.52,2.19707) (1.53,2.20624)
    (1.54,2.21539) (1.55,2.22451) (1.56,2.23361) (1.57,2.24268) (1.58,2.25173)
    (1.59,2.26076) (1.6,2.26976) (1.61,2.27874) (1.62,2.28769) (1.63,2.29662)
    (1.64,2.30552) (1.65,2.3144) (1.66,2.32325) (1.67,2.33208) (1.68,2.34088)
    (1.69,2.34966) (1.7,2.35841) (1.71,2.36714) (1.72,2.37584) (1.73,2.38452)
    (1.74,2.39317) (1.75,2.4018) (1.76,2.4104) (1.77,2.41898) (1.78,2.42753)
    (1.79,2.43606) (1.8,2.44456) (1.81,2.45304) (1.82,2.46149) (1.83,2.46992)
    (1.84,2.47833) (1.85,2.48671) (1.86,2.49506) (1.87,2.50339) (1.88,2.51169)
    (1.89,2.51997) (1.9,2.52823) (1.91,2.53646) (1.92,2.54466) (1.93,2.55285)
    (1.94,2.561) (1.95,2.56914) (1.96,2.57725) (1.97,2.58533) (1.98,2.59339)
    (1.99,2.60143) (2.,2.60944) };
    \addlegendentry{\cref{thm:lower-bound-randomized}}
\end{axis}
\end{tikzpicture}}
\caption{The robust approximation ratio for deterministic (left) and randomized
(right, blue) selling mechanisms for a single $(\mu,\sigma)$-distributed item, for
small values of the coefficient of variation $r=\sigma/\mu$. The former is tight and
given in~\cref{th:deterministic-upper_and_lower}. The latter is the upper bound
given by~\cref{th:randomized-upper-single}; it is asymptotically matching the lower
bound (red) of~\cref{thm:lower-bound-randomized}.}
\label{fig:plot_small}
\end{figure}
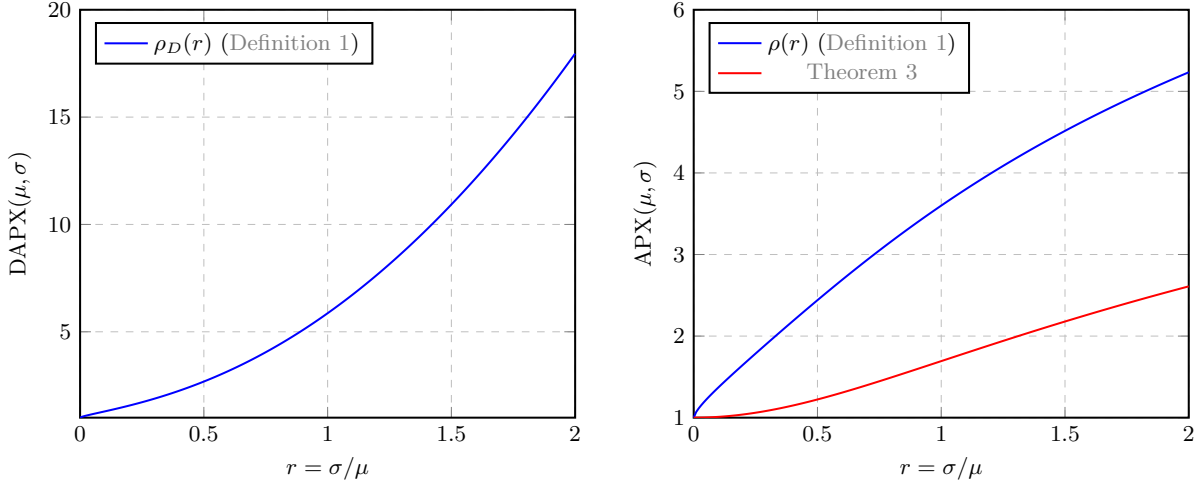

To state our bounds, it will be convenient to define the following auxiliary
functions. We will use function $\rho_D$ in \cref{sec:deterministic-upper} and $\rho$ in \cref{sec:randomized-upper-single,sec:many-items,sec:further_results}.
\begin{definition}[Functions $\rho_D$, $\rho$]
\label{def:function-upper-deterministic}
\label{def:function-upper} For any $r\geq0$, let $\rho_D(r)=\rho$, resp.
$\rho(r)=\rho$, be the unique positive solution of equation
\[\frac{(\rho-1)^3}{(2\rho-1)^2}=r^2,\qquad\text{resp.}\qquad\frac{1}{\rho^2}\left(2e^{\rho
-1}-1\right)=r^2+1.\]
\end{definition}
Plots of these functions, for small values of $r$, can be seen in
\cref{fig:plot_small}. Their asymptotic behaviour is given in the following lemma,
whose proof is deferred to \cref{sec:technical_lemmas}
(\cref{lem:exact_deterministic,lem:random_approxsol}).
\begin{lemma}
\label{lemma:upper_deterministic_bounds}
\label{lemma:upper_randomized_asymptotic} For the functions $\rho_D$, $\rho$ defined
in~\cref{def:function-upper-deterministic}, we have the bounds and asymptotics,
\[1+4r^2\leq\rho_D(r)\leq2+4r^2\qquad\text{for all }r\geq0;\qquad\qquad\rho(r) =
1+\left(1+o(1)\right)\ln(1+r^2).\]
\end{lemma}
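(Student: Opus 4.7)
The plan is to treat both bounds via the same template: each defining equation presents $\rho$ implicitly as the unique solution (in $[1,\infty)$) of $g(\rho)=r^2$ or $h(\rho)=1+r^2$ for strictly increasing, continuous auxiliary functions $g,h$. Bounds on $\rho_D$ and $\rho$ then reduce to bracketing them between concrete expressions by verifying the analytic inequalities $g(\rho_-)\leq r^2\leq g(\rho_+)$ (respectively for $h$) and invoking monotonicity. The preparatory step for both parts is therefore to establish strict monotonicity of $g$ and $h$ on $[1,\infty)$, and to identify the right substitutions.

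For the deterministic bound, let $g(\rho) := (\rho-1)^3/(2\rho-1)^2$. A direct differentiation gives $g'(\rho)=(\rho-1)^2(2\rho+1)/(2\rho-1)^3$, which is nonnegative on $[1,\infty)$, and $g(1)=0$, so $\rho_D(r)\geq 1$ is well defined. For the lower bound I would plug in $\rho=1+4r^2$ and check $g(1+4r^2)\leq r^2$; after clearing denominators this reduces to the trivial polynomial inequality $r^2+16r^4\geq 0$. For the upper bound I would plug $\rho=2+4r^2$ into $g(\rho)\geq r^2$; expansion collapses it to $1+3r^2\geq 0$. By monotonicity, these two inequalities exactly sandwich $\rho_D(r)$.

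For the asymptotic of $\rho(r)$, set $h(\rho):=\rho^{-2}(2e^{\rho-1}-1)$. Monotonicity on $[1,\infty)$ follows from noting that $h'(\rho)$ has the sign of $\phi(\rho):=e^{\rho-1}(\rho-2)+1$; since $\phi(1)=0$ and $\phi'(\rho)=e^{\rho-1}(\rho-1)\geq 0$, we conclude $\phi\geq 0$. Writing $L:=\ln(1+r^2)$, my target sandwich for sufficiently large $r$ is $1+L\leq\rho(r)\leq 1+L+2\ln L$. On the lower side, substituting $\rho=1+L$ yields $e^{\rho-1}=1+r^2$, hence $h(1+L)=(1+2r^2)/(1+L)^2$, which is much smaller than $1+r^2$ asymptotically, so monotonicity gives $\rho(r)>1+L$. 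On the upper side, substituting $\rho=1+L+2\ln L$ gives $e^{\rho-1}=(1+r^2)L^2$, so the numerator of $h$ is $2(1+r^2)L^2-1$ while the denominator is $(1+L+2\ln L)^2=L^2(1+o(1))$; thus $h(\rho)\to 2(1+r^2)(1+o(1))$, which strictly exceeds $1+r^2$ for all large $r$. Combining the two brackets gives $\rho(r)-1=L+O(\ln L)=(1+o(1))\ln(1+r^2)$, as claimed. The most delicate point is verifying that the $2\ln L$ correction in the upper bracket already beats $1+r^2$ eventually (rather than only in the limit), which requires tracking the lower-order $1$ in the numerator and the cross terms in $(1+L+2\ln L)^2$ to ensure the inequality $h(1+L+2\ln L)>1+r^2$ holds uniformly for $r$ past some threshold.
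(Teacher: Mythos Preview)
Your proposal is correct and follows the same overall template as the paper---establish that the implicit defining function is strictly increasing on $[1,\infty)$, then sandwich $\rho$ by plugging candidate expressions into that function. The concrete manipulations differ slightly, however.

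For $\rho_D$, the paper does not substitute the endpoints $1+4r^2$ and $2+4r^2$ directly; instead it rewrites the defining equation as
\[
\rho = 1 + 4r^2 + \frac{(\rho-\tfrac{3}{4})(\rho-1)}{(\rho-\tfrac{1}{2})^2},
\]
and observes that the trailing fraction lies in $[0,1]$ for $\rho\geq 1$. Your endpoint-substitution argument is equally short and arguably more transparent, since it avoids the need to spot that partial-fraction identity.

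For the asymptotic of $\rho$, the paper (\cref{lem:random_approxsol}) proves only the upper direction, using the multiplicative perturbation $1+(1+\varepsilon)\ln(1+r^2)$ together with the crude bound $1+(1+\varepsilon)\ln(1+r^2)\leq (1+r^2)^{\varepsilon/2}$ for large $r$. Your additive correction $1+L+2\ln L$ achieves the same goal with a more explicit computation, and you also supply the matching lower bound $\rho(r)\geq 1+L$ directly (the paper leaves this direction implicit). Both routes work; yours is slightly more self-contained.
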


The asymptotics for $\rho$ hold for large enough values of $r$ and that is why there is a gap in \cref{fig:plot_small} between the upper bound of $\rho$ and the lower bound of $1+\ln(1+r^2)$. The bounds asymptotically match; to be precise, they are within an $(1+o(1))$-constant factor.

\section{Single Item: Deterministic Pricing}\label{sec:deterministic-upper}

In this section we begin our study of robust revenue maximization by looking at the
simplest case: one item and deterministic pricing rules. Note that
\citet{Azar:2013aa} already established a lower bound of $1+r^2$ for this setting,
together with an upper bound which can be shown to be
$1+\left(\frac{27}{4}+o(1)\right)r^2$ (they actually characterized the upper bound
via the solution of a cubic equation; we provide the exact asymptotics of that
solution in \cref{sec:det_azar_micali}). Our result
(\cref{th:deterministic-upper_and_lower}) is a refined analysis that captures the
exact robustness ratio (and in particular the ``correct'' constant in the quadratic
term).

Our first observation (\cref{lem:deterministic_inner_opt}) will be that the
worst-case adversarial response (for a specific selling price) can be characterized
in terms of a two-point mass distribution, which allows the problem to be solved
exactly. These types of distributions have appeared already in the results of
\citet{Azar:2012aa} and \citet{Carrasco2018}, and we will start by introducing some
notation to reason about them.

A two-point mass distribution $F$ takes some value $x$ with probability $\alpha$ and
some value $y$ with probability $1-\alpha$, where without loss $x<y$. When the
distribution is constrained to have mean $\mu$ and variance exactly equal to
$\sigma^2$, only one free parameter remains, i.e.\ $F$ can be characterized by the
position $x$ of its first point mass. The other two parameters can be obtained as
\[y(x) = \mu + \cfrac{\sigma^2}{\mu -x}\quad\text{and}\quad\alpha(x)
=\cfrac{\sigma^2}{\sigma^2 + \left( \mu - x \right)^2},\]
by solving the first and second moment conditions $\mu=\alpha x +(1-\alpha) y$ and
$\mu^2+\sigma^2=\alpha x^2 +(1-\alpha) y^2$.
For the remainder, we let $F_x$, $x\in[0,\mu)$, denote this distribution. Note that
the limiting case $x\rightarrow\mu$ corresponds to $\alpha(x)\rightarrow 1$ and
$y(x)\rightarrow\infty$, meaning that $F_x$ weakly converges to $\mu$.

By first solving the innermost optimization problem in \eqref{eq:apxopt},
i.e.\ by characterizing the worst-case adversarial response against a specific
\emph{deterministic} pricing, we can derive the robustness ratio for deterministic mechanisms:

\begin{lemma}\label{lem:deterministic_inner_opt} For any choice of mean $\mu$ and
variance $\sigma^2$, and any deterministic pricing scheme, the worst-case robust
approximation ratio is achieved over a limiting two-point mass distribution.
Formally, for any $\mu,\sigma$, and any price $p$,
\begin{enumerate}
    \item if $p\geq\mu$, then the worst-case response corresponds to playing $F_x$
    with $x\rightarrow\mu^-$, and
    \[\sup_{F\in\fbb_{\mu,\sigma}}\frac{\opt(F)}{\rev(p;F)}=\infty;\]
    \item if $0<p<\mu$, then the worst-case response corresponds to playing $F_x$
    with $x\rightarrow p^-$, and
    \[\sup_{F\in\fbb_{\mu,\sigma}}\frac{\opt(F)}{\rev(p;F)}  
    =\max\left\{1+\frac{\sigma^2}{(\mu-p)^2},\frac{\mu}{p}+\frac{\sigma^2}{p(\mu-p)}\right\}.\]
\end{enumerate}
\end{lemma}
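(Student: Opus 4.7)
I would handle the two statements separately, and within the second treat the lower- and upper-bound directions with complementary techniques. For statement~1 ($p\geq\mu$) it suffices to exhibit a family of admissible distributions along which the ratio diverges, and the two-point family $F_x$ with $x\to\mu^-$ does the job: since $p>x$ and $y(x)\to\infty$ we have $\rev(p;F_x)=p(1-\alpha(x))=p(\mu-x)^2/[\sigma^2+(\mu-x)^2]\to 0$, while $\opt(F_x)\geq x\to\mu>0$.

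For statement~2 ($0<p<\mu$), the lower-bound direction again follows by evaluating on $F_x$ with $x\to p^-$. Now $x<p<y(x)$ gives $\rev(p;F_x)=p(1-\alpha(x))\to p(\mu-p)^2/[\sigma^2+(\mu-p)^2]$, while $\opt(F_x)=\max\{x,\,y(x)(1-\alpha(x))\}$ produces, in the limit, the two candidate ratios $1+\sigma^2/(\mu-p)^2$ and $y(p)/p=\mu/p+\sigma^2/[p(\mu-p)]$, whose maximum is exactly the claimed expression.

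The heart of the argument is the matching upper bound: for every $F\in\fbb_{\mu,\sigma}$, the ratio $\opt(F)/\rev(p;F)$ is at most that maximum. Using~\eqref{eq:myerson_single} to write $\opt(F)=\sup_q q\prob{v\geq q}$, and noting that the denominator $p\prob{v\geq p}$ is independent of $q$, it is enough to bound $q\prob{v\geq q}/(p\prob{v\geq p})$ uniformly in $q\geq 0$. For $q\leq p$, the crude estimate $q\prob{v\geq q}\leq p$ combined with Cantelli's one-sided inequality $\prob{v\geq p}\geq(\mu-p)^2/[\sigma^2+(\mu-p)^2]$ yields the first candidate. For $q>p$, I would write $q\prob{v\geq q}\leq\expect{v\mathbf{1}[v\geq p]}=pS+\expect{(v-p)\mathbf{1}[v\geq p]}$ with $S:=\prob{v\geq p}$; Cauchy--Schwarz applied to $(v-p)^+$ and $\mathbf{1}[v\geq p]$, together with $\expect{(v-p)^2}\leq\sigma^2+(\mu-p)^2$, gives $\expect{(v-p)\mathbf{1}[v\geq p]}\leq\sqrt{(\sigma^2+(\mu-p)^2)\,S}$. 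Dividing by $pS$ and applying Cantelli to $S$ once more simplifies (after squaring under the root) to the second candidate $\mu/p+\sigma^2/[p(\mu-p)]$.

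The subtle point is that Cantelli and Cauchy--Schwarz have to be tight \emph{simultaneously} on the same extremal distribution for the upper and lower bounds to coincide; the two-point family $F_x$ with $x\to p^-$ achieves exactly this---it is the equality case of Cantelli with $k=\mu-p$ and variance $\sigma^2$, and of Cauchy--Schwarz because all mass above $p$ is concentrated at the single point $y(p)$---which is precisely why these elementary moment inequalities produce the correct denominators $(\mu-p)^2$ and $p(\mu-p)$ in the two candidate bounds.
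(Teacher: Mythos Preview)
Your proof is correct and follows the same structure as the paper's: both handle $p\geq\mu$ by the same degenerating two-point family, and for $0<p<\mu$ both split on whether the optimal price lies below or above $p$, use Cantelli's inequality for the denominator, and bound the contribution of the mass above $p$ in the second case. The only difference is that the paper quotes Mallows' inequality $\mathbb{E}[X\mid E]-\mu\leq\sigma\sqrt{(1-\Pr[E])/\Pr[E]}$ to bound the conditional expectation $\mathbb{E}[X\mid X\geq p]$ directly, whereas you obtain the equivalent bound by writing $\mathbb{E}[v\mathbf{1}[v\geq p]]=pS+\mathbb{E}[(v-p)^+]$ and applying Cauchy--Schwarz centered at $p$; since Mallows' inequality is itself a Cauchy--Schwarz consequence, the two arguments are essentially interchangeable and both collapse to $y(p)/p$ after inserting the Cantelli bound on $S$.
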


\begin{proof}
If $p\geq\mu$, then the worst-case robust approximation ratio can become arbitrarily
large by taking $x\rightarrow\mu^-$, that is, $x$ arbitrarily close to $\mu$, so
that $\alpha(x)\rightarrow 1$. Indeed, we have that $\rev(p;F_x) \leq p(1-\alpha(x))
\rightarrow 0$, whereas $\opt(F_x) \geq x \rightarrow \mu$, so that the supremum of
the ratio is unbounded.

Next, let us suppose that $0<p<\mu$. First, we compute the limit of the
approximation ratio for distribution $F_x$, as $x\rightarrow p^-$. Observe that
$\opt(F_x)=\max\{x,(1-\alpha(x))y(x)\}$; and since $x<p$, we sell the item with
probability $1-\alpha(x)$, to obtain $\rev(p;F_x)=p(1-\alpha(x))$. Therefore,
\begin{align*}
\lim_{x\rightarrow p^-}\frac{\opt(F_x)}{\rev(p,F_x)} &=\lim_{x\rightarrow
p^-}\frac{\max\{x,(1-\alpha(x))y(x)\}}{p(1-\alpha(x))}\\
&=\max\left\{\frac{1}{1-\alpha(p)},\frac{y(p)}{p}\right\}\\ &=
\max\left\{1+\frac{\sigma^2}{(\mu-p)^2},\frac{\mu}{p}+\frac{\sigma^2}{p(\mu-p)}\right\}.
\end{align*}
Thus, it only remains to show that for any random variable $X$ drawn from a
$(\mu,\sigma)$ distribution $F$, we have that
\[\frac{\opt(F)}{\rev(p;F)} \leq
\max\left\{\frac{1}{1-\alpha(p)},\frac{y(p)}{p}\right\}.\]

We first derive a lower bound on the probability of selling the item at price $p$
via a one-sided version of Chebyshev's inequality, also called Cantelli's
inequality\footnote{Although the original statement of Cantelli's inequality is for
a random variable with variance \emph{equal to} $\sigma^2$, by monotonicity the same
holds if $\sigma^2$ is instead an upper bound on the variance.} (see, e.g.,
\citep[p.~46]{BLM:2013}),
\begin{equation}\label{eq:Cantelli}
	\probability[X \geq p] = \probability[X -\mu \geq - (\mu - p)] \geq 1 -
	\frac{\sigma^2}{\sigma^2+(\mu - p)^2} = 1 - \alpha(p).
\end{equation} Let $p^\ast$ denote the optimal take-it-or-leave-it price for
distribution $F$, so that $\opt(F)=p^\ast \probability[x\geq p^\ast]$. Again, we
consider two cases: if $p^\ast \leq p$, then we have
  \[ \frac{\opt(F)}{\rev(p,F)} = \cfrac{p^\ast \probability[X \geq p^\ast]}{p
\probability[X \geq p]} \leq  \frac{1}{1-\alpha(p)} \leq
\max\left\{\frac{1}{1-\alpha(p)},\frac{y(p)}{p}\right\} \] where in the first
inequality we used \eqref{eq:Cantelli} and the bounds $p^\ast \leq p$,
$\probability[X \geq p^\ast] \leq 1$.

Next, consider the case $p^\ast > p$. By looking at the conditional random variable
$(X|X\geq p)$, we observe that
\begin{equation}
\label{eq:welfare_bound}
\frac{p^\ast \probability[X \geq p^\ast]}{\probability[X \geq p]} = p^\ast \prob{X
\geq p^\ast |  X \geq p} = \rev( p^\ast;F | X \geq p)
\leq \expectation  \left[X | X \geq p \right];
\end{equation} the inequality holds because the social welfare is always an upper
bound to the revenue.

In order to bound the conditional expectation, we use a result in
\citet[Eq.~(1.2)]{Mallows:1969aa}. It states that if $X$ is a real-valued random
variable with mean $\mu$ and variance $\sigma^2$ and $E$ is a non-zero probability
event, then
\[\expectation[ X \mid E ] - \mu \leq \sigma
\sqrt{\cfrac{1-\probability[E]}{\probability[E]}.}\] In our case, we use $E = (X
\geq p)$, together with the lower bound in \eqref{eq:Cantelli}, to get
\begin{align*}
 \expectation \left[X | X \geq p \right]
 \leq \mu + \sigma \sqrt{\frac{1}{\probability[X \geq p]} -1}
 \leq \mu + \sigma \sqrt{\cfrac{1}{1 - \alpha(p)} -1} =\mu+ \frac{\sigma^2}{\mu - p
 } = y(p).
\end{align*}
Finally, combining the above with \eqref{eq:welfare_bound} yields
\[\frac{\opt(F)}{\rev(p,F)} = \frac{p^\ast \probability[X \geq p^\ast]}{p
 \probability[X \geq p]} \leq \frac{\expectation  \left[X | X \geq p \right]}{p}
 \leq \frac{y(p)}{p} \leq \max\left\{\frac{1}{1-\alpha(p)},\frac{y(p)}{p}\right\},\]
 which concludes the proof.
\end{proof}

\begin{theorem} \label{th:deterministic-upper_and_lower} The \emph{deterministic}
robust approximation ratio of selling a single $(\mu,\sigma)$-distributed item is
\emph{exactly} equal to
\[\dapx(\mu,\sigma)=\rho_D(r) \approx 1+4\cdot r^2,\] where $r=\sigma/\mu$ and
function $\rho_D(\cdot)$ is given in~\cref{def:function-upper-deterministic}. In
particular, this is achieved by offering a take-it-or-leave-it price of
$p=\frac{\rho_D(r)}{2\rho_D(r)-1}\cdot\mu$.
\end{theorem}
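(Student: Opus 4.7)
The plan is to invoke the inner-problem characterization from \cref{lem:deterministic_inner_opt} to reduce the robust-optimization problem to a one-dimensional minimization, and then explicitly solve it. By that lemma, prices $p\geq\mu$ are ruled out because they give infinite ratio, so
\[
\dapx(\mu,\sigma) \;=\; \inf_{0<p<\mu}\; m(p),\qquad m(p):=\max\{f_1(p),f_2(p)\},
\]
with $f_1(p) = 1+\sigma^2/(\mu-p)^2$ and $f_2(p)=\mu/p+\sigma^2/[p(\mu-p)]$. Note $f_1$ is strictly increasing on $(0,\mu)$ from $1+r^2$ to $\infty$, while $f_2$ blows up at both endpoints, so $m$ attains its infimum in the interior.

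For the upper bound, I would guess — motivated by the observation that interior optima of a maximum of two smooth functions typically lie at an intersection — that the optimal price $p^\ast$ satisfies $f_1(p^\ast)=f_2(p^\ast)=\rho$. Solving $f_1(p)=\rho$ gives $p=\mu-\sigma/\sqrt{\rho-1}$; substituting into $f_2(p)=\rho$ and simplifying yields $\mu(\rho-1)^{3/2}=\sigma(2\rho-1)$, i.e.\ $(\rho-1)^3/(2\rho-1)^2=r^2$. By \cref{def:function-upper-deterministic}, this uniquely identifies $\rho=\rho_D(r)$, and back-substituting shows $p^\ast=\tfrac{\rho_D(r)}{2\rho_D(r)-1}\mu$. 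A direct plug-in at this $p^\ast$ then verifies $f_1(p^\ast)=f_2(p^\ast)=\rho_D(r)$, establishing $\dapx(\mu,\sigma)\le \rho_D(r)$.

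For the lower bound, I need $m(p)\ge \rho_D(r)$ for every $p\in(0,\mu)$. Since $f_1$ is strictly increasing, the case $p\ge p^\ast$ is immediate from $f_1(p)\ge f_1(p^\ast)=\rho_D(r)$. The delicate case is $p<p^\ast$, where $f_1(p)<\rho_D(r)$ and I must show $f_2(p)\ge \rho_D(r)$; the natural way is to prove $f_2$ is strictly decreasing on $(0,p^\ast]$. A direct computation gives
\[
f_2'(p) = \frac{-\bigl[\mu(\mu-p)^2+\sigma^2(\mu-2p)\bigr]}{\bigl[p(\mu-p)\bigr]^2},
\]
so the sign is controlled by $\phi(p):=\mu(\mu-p)^2+\sigma^2(\mu-2p)$. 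Since $\phi$ is strictly decreasing with $\phi(0)>0$ and $\phi(\mu)<0$, monotonicity of $f_2$ on $(0,p^\ast]$ reduces to checking $\phi(p^\ast)\ge 0$. Using the closed forms $p^\ast=\tfrac{\rho_D}{2\rho_D-1}\mu$ and $(\mu-p^\ast)^2=\sigma^2/(\rho_D-1)$, this simplifies to $\phi(p^\ast)=\mu\sigma^2\rho_D/[(\rho_D-1)(2\rho_D-1)]>0$, which completes the argument.

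The main obstacle is precisely this last step: one has to rule out the possibility that $f_2$ dips below $\rho_D(r)$ somewhere on $(0,p^\ast)$, so that the intersection one identified is indeed the global minimum of $m$ rather than a local saddle between two valleys. All other steps are elementary algebra or an appeal to \cref{lem:deterministic_inner_opt}; the final asymptotics $\rho_D(r)\approx 1+4r^2$ come from \cref{lemma:upper_deterministic_bounds}.
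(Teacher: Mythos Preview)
Your proposal is correct and follows essentially the same route as the paper: invoke \cref{lem:deterministic_inner_opt}, reduce to $\inf_{0<p<\mu}\max\{f_1,f_2\}$, locate the crossing point $p^\ast$ of the two branches, and show $f_1$ is increasing after $p^\ast$ while $f_2$ is decreasing before it. The only cosmetic difference is in how you establish that $f_2$ is decreasing on $(0,p^\ast]$: you argue that $\phi$ is monotone and evaluate $\phi(p^\ast)$ explicitly via the closed forms for $p^\ast$ and $(\mu-p^\ast)^2$, whereas the paper uses the equivalent characterization $(\mu-p)^3\ge\sigma^2(2p-\mu)$ for $p\le p^\ast$ to bound the derivative's numerator directly by $-p(\mu-p)^2<0$ without computing $\phi(p^\ast)$; both work.
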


\begin{proof} For fixed $\mu$ and $\sigma$, \cref{lem:deterministic_inner_opt} gives
the worst-case approximation ratio for any choice of $p$. Thus, from the seller's
perspective, it is clear that one should offer a price below the mean, and
furthermore the outermost optimization problem reduces to finding
\[\rho=\inf_{0<
p<\mu}\max\left\{1+\frac{\sigma^2}{(\mu-p)^2},\frac{\mu}{p}+\frac{\sigma^2}{p(\mu-p)}\right\}.\]
In \cref{lem:exact_deterministic} (see~\cref{sec:technical_lemmas}), we prove that
this quantity is minimized when both branches coincide; that it corresponds to the
unique positive solution of the equation
\[\frac{(\rho-1)^3}{(2\rho-1)^2}=\left(\frac{\sigma}{\mu}\right)^2;\] the desired
asymptotics; and also the characterization of the selling price $p$ in terms of
$\rho$.
\end{proof}

\section{Single Item: Lotteries}\label{sec:randomized-upper-single}

In this section, we continue to focus on a single-item setting, but now we study the
robust approximation ratio that can be achieved by a randomized mechanism, i.e.\ by
randomizing over posted prices. We first define a specific randomized selling mechanism, which essentially
corresponds to the lottery proposed by~\citet[Prop.~4]{Carrasco2018}:
\begin{definition}[Log-Lottery]
\label{def:log-pricing} Fix any $\mu>0$ and $\sigma\geq 0$. A \emph{log-lottery} is
a randomized mechanism that sells at a price $\plog_{\mu,\sigma}$, which is
distributed over the nonnegative interval support $[\pi_1,\pi_2]$ according to the
cdf
$$
F_{\mu,\sigma}^{\mathrm{log}}(x)=\frac{\pi_2\ln\frac{x}{\pi_1}-(x-\pi_1)}{\pi_2\ln\frac{\pi_2}{\pi_1}-(\pi_2-\pi_1)},
$$ 
where parameters $\pi_1,\pi_2$ are the (unique) solutions of the system
\begin{subnumcases}{}
\pi_1 \left(1+\ln\frac{ \pi_2 }{ \pi_1 }\right)=\mu
\label{eq:randomized_upper_parameter_1}\\
\quad \pi_1 (2 \pi_2 - \pi_1 )=\mu^2+\sigma^2.
\label{eq:randomized_upper_parameter_2}
\end{subnumcases}
\end{definition}
We will sometimes slightly abuse notation and use $\plog_{\mu,\sigma}$ to refer both
to the log-lottery mechanism and the corresponding random variable of the prices. 

\Citet{Carrasco2018} have given the explicit
solution to the robust \emph{absolute} revenue problem,
\begin{equation}\label{eq:vanilla_rev}
    \adjustlimits\sup_{A\in\abb_1}\inf_{F\in\fbb_{\mu,\sigma}}\rev(A;F).
\end{equation} We state below a proposition that can be directly derived from their
work and which would be very useful for our setting (the detailed derivation can be
found in~\cref{sec:vanilla_carrasco_appendix}).
\begin{proposition}\label{prop:vanilla_carrasco} For $\mu>0$, $\sigma\geq 0$, the
value of the maximin problem \eqref{eq:vanilla_rev} is given by
\[\adjustlimits\sup_{A\in\abb_1}\inf_{F\in\fbb_{\mu,\sigma}}\rev(A;F)=\pi_1,\] where
$\pi_1$ is derived by the unique solution of the system
\eqref{eq:randomized_upper_parameter_1}-\eqref{eq:randomized_upper_parameter_2}.
Moreover, this value is achieved by the log-lottery $\plog_{\mu,\sigma}$ described
in~\cref{def:log-pricing}.
\end{proposition}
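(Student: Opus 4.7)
The plan is to translate the known solution of the maximin absolute-revenue problem of \citet{Carrasco2018} into our notation. Since the proposition asserts that the result can be \emph{directly derived} from theirs, the work is essentially bookkeeping: (i) reducing our variance constraint to theirs; (ii) identifying randomized mechanisms with price distributions; and (iii) matching their optimal lottery with our \cref{def:log-pricing}.

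First, I would reduce the constraint ``standard deviation at most $\sigma$'' to ``exactly $\sigma$''. By \cref{lem:exactsigmawlog} the maximin over $\fbb_{\mu,\sigma}$ agrees with the maximin over $\fbb^=_{\mu,\sigma}$; intuitively, the adversary has no reason to waste variance, as extra spread spreads mass further from any posted price and can only help the seller. After this reduction we are in the exact setting analyzed in \citet{Carrasco2018}.

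Second, I would recall the correspondence between truthful single-item mechanisms and distributions over posted prices, already invoked in \cref{sec:model} via Myerson's characterization. Under this identification, $\rev(A;F)=\expectation_{p\sim A}[\,p\cdot(1-F(p-))\,]$, and \citet[Prop.~4]{Carrasco2018} explicitly exhibit the optimal lottery, supported on a bounded interval $[\pi_1,\pi_2]$, with a logarithmic cdf whose expression coincides with $F^{\mathrm{log}}_{\mu,\sigma}$ after a straightforward rearrangement. The endpoints $\pi_1,\pi_2$ arise from the Lagrangian/KKT optimality conditions of the inner minimization: the adversary's worst response must saturate both the first- and the second-moment constraints, which upon simplification translate verbatim into \eqref{eq:randomized_upper_parameter_1}--\eqref{eq:randomized_upper_parameter_2}. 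The system being well-posed (unique positive solution with $\pi_1\le\mu\le\pi_2$) also follows from the analysis in their paper, so no new fixed-point argument is needed.

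Third, I would evaluate the maximin value. By construction the log-lottery is \emph{equalizing}: the integrand $p\,(1-F(p-))$, averaged against $\plog_{\mu,\sigma}$, attains the same value $\pi_1$ on every $F$ in the boundary family of worst-case distributions that \citet{Carrasco2018} identify (an equal-revenue-style distribution on $[\pi_1,\pi_2]$ together with appropriate point masses at $0$ and at $\pi_2$). A short computation, integrating by parts and using the two moment equations, shows that this common value is exactly $\pi_1$; this is the worst case, and against any other $F\in\fbb^=_{\mu,\sigma}$ the revenue is at least $\pi_1$ by the dual feasibility already certified in \citep{Carrasco2018}.

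The main obstacle is purely notational: carefully rewriting the solution of \citet{Carrasco2018} (which uses a slightly different parameterization) so that the support endpoints and the cdf match ours, and so that the two Lagrange multipliers become exactly the moment equations \eqref{eq:randomized_upper_parameter_1}--\eqref{eq:randomized_upper_parameter_2}. Once this translation is pinned down, the identification of the maximin value with $\pi_1$ is immediate from their saddle-point certificate.
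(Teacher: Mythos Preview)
Your first reduction step is where the proposal slips. \cref{lem:exactsigmawlog} is a statement about the \emph{ratio} $\opt(F)/\rev(A;F)$ (and its $\sup$ over $F$), not about $\inf_F \rev(A;F)$; it does not say that the maximin absolute revenue over $\fbb_{\mu,\sigma}$ coincides with that over $\fbb^=_{\mu,\sigma}$, so you cannot invoke it here. Your accompanying intuition is also stated backwards: you write that extra spread ``can only help the seller,'' which would imply the adversary \emph{should} waste variance, the opposite of your conclusion. The correct heuristic is that larger variance gives the adversary more freedom to shift mass to $0$, lowering revenue at any price below $\mu$; but turning this into a clean ``the infimum is attained on $\fbb^=_{\mu,\sigma}$'' claim requires its own argument, not \cref{lem:exactsigmawlog}.

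The paper sidesteps all of this. It simply quotes three explicit objects from \citet{Carrasco2018}: the optimal allocation rule $x(v)$, the Lagrange multipliers $\lambda_0,\lambda_1,\lambda_2$ expressed in closed form in terms of $\pi_1,\pi_2$, and the value formula $\lambda_0+\lambda_1\mu+\lambda_2(\mu^2+\sigma^2)$. Substituting the $\lambda_i$'s shows that $x(v)$ is exactly the cdf in \cref{def:log-pricing}, and a three-line algebraic simplification (using \eqref{eq:randomized_upper_parameter_1}--\eqref{eq:randomized_upper_parameter_2} to replace $\mu$ and $\mu^2+\sigma^2$) collapses the value to $\pi_1$. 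No separate reduction to exact variance, no integration by parts, and no re-derivation of the equalizing property are needed; the saddle-point certificate is taken wholesale from \citet{Carrasco2018} and only its numerical value is rewritten. Your steps (ii)--(iii) are fine in spirit but vaguer than what the paper actually does, and step (i) should be dropped.
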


 An intuitive interpretation of the result is the following: Since the seller is playing a game against an adversary, and since the seller is randomizing over prices in the equilibrium, they should expect the same revenue regardless of which value of the randomizing interval is sampled. In particular, we can evaluate the maximin expected revenue taking the lowest possible price $\pi_1$, in which case the item is always sold, and, thus, the resulting revenue is $\pi_1$. The above characterization can be directly used to derive a logarithmic upper bound on the robust approximation ratio:
\begin{theorem} \label{th:randomized-upper-single} The robust approximation ratio of
selling a single $(\mu,\sigma)$-distributed item is at most
$$
\apx(\mu,\sigma) \leq
\rho(r)\approx 1+\ln(1+r^2),
$$
where $r=\sigma/\mu$ and function $\rho$ is given in~\cref{def:function-upper}. In
particular, this is achieved by the log-lottery  described
in~\cref{def:log-pricing}.
\end{theorem}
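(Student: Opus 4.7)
The plan is to combine the welfare bound for single-item auctions with the exact worst-case revenue guarantee that the log-lottery enjoys from \cref{prop:vanilla_carrasco}, and then do a small algebraic manipulation to match the characterization of $\rho(r)$ in \cref{def:function-upper}.

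First I would observe that for any single-item, single-additive-bidder distribution $F\in\fbb_{\mu,\sigma}$ one has the welfare bound $\opt(F)\leq\val(F)=\mu$. This is immediate: by Myerson, $\opt(F)=\sup_{p\geq 0}p\cdot(1-F(p-))$, and Markov's inequality gives $p(1-F(p-))\leq p\cdot \expectsmall{v}/p=\mu$. Consequently, plugging in $A=\plog_{\mu,\sigma}$ into the definition \eqref{eq:apxopt},
\[
\apx(\mu,\sigma)\leq\sup_{F\in\fbb_{\mu,\sigma}}\frac{\opt(F)}{\rev(\plog_{\mu,\sigma};F)}\leq\frac{\mu}{\inf_{F\in\fbb_{\mu,\sigma}}\rev(\plog_{\mu,\sigma};F)}=\frac{\mu}{\pi_1},
\]
where the last equality is precisely \cref{prop:vanilla_carrasco}.

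Next I would verify that $\mu/\pi_1$ coincides with $\rho(r)$. Setting $\rho:=\mu/\pi_1$, equation \eqref{eq:randomized_upper_parameter_1} rearranges to $\ln(\pi_2/\pi_1)=\rho-1$, i.e.\ $\pi_2=\pi_1 e^{\rho-1}=(\mu/\rho)\,e^{\rho-1}$. Substituting both expressions into \eqref{eq:randomized_upper_parameter_2} yields
\[
\frac{\mu}{\rho}\left(\frac{2\mu}{\rho}e^{\rho-1}-\frac{\mu}{\rho}\right)=\mu^2+\sigma^2\quad\Longleftrightarrow\quad\frac{1}{\rho^2}\bigl(2e^{\rho-1}-1\bigr)=1+r^2,
\]
which is exactly the defining equation of $\rho(r)$ from \cref{def:function-upper}. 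Since \cref{prop:vanilla_carrasco} guarantees that the system \eqref{eq:randomized_upper_parameter_1}-\eqref{eq:randomized_upper_parameter_2} has a unique solution, $\rho=\rho(r)$.

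The asymptotic estimate $\rho(r)=1+(1+o(1))\ln(1+r^2)$ is then just \cref{lemma:upper_randomized_asymptotic}, which has already been stated. There is essentially no hard step: the heavy lifting (constructing the log-lottery and computing its worst-case revenue) has been outsourced to \cref{prop:vanilla_carrasco}, and the only thing that could go subtly wrong is whether the welfare bound $\opt(F)\leq\mu$ is tight enough to yield the correct constant. Since our lower bound \cref{thm:lower-bound-randomized} (asymptotically matching $\rho(r)$) is what confirms that this bound is not lossy in the relevant regime, no improvement is needed here.
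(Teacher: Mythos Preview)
Your proposal is correct and follows essentially the same route as the paper: bound $\opt(F)$ by $\mu$, invoke \cref{prop:vanilla_carrasco} to lower-bound the log-lottery's revenue by $\pi_1$, then rewrite the system \eqref{eq:randomized_upper_parameter_1}--\eqref{eq:randomized_upper_parameter_2} in terms of $\rho=\mu/\pi_1$ to recover the defining equation of $\rho(r)$ and appeal to \cref{lemma:upper_randomized_asymptotic} for the asymptotics.
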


\begin{proof} 
By \cref{prop:vanilla_carrasco}, if $A$ is the log-lottery from
\cref{def:log-pricing}, then for any $(\mu,\sigma)$ distribution $F$ we have that
$\rev(A;F)\geq\pi_1$. Thus, using the trivial upper bound of $\opt(F)\leq
\mu$ for the optimal revenue, we can derive an upper bound of $\frac{\mu}{\pi_1}$ on
the approximation ratio. For convenience, let us denote this by
$\rho\equiv\mu/\pi_1$.

Manipulating~\eqref{eq:randomized_upper_parameter_1} we get
$$
\pi_1 \left(1+\ln\frac{ \pi_2 }{ \pi_1 }\right)=\mu
\quad\ifif\quad
\ln\frac{\pi_2}{\pi_1}=\frac{\mu}{\pi_1}-1
\quad\ifif\quad
\frac{\pi_2}{\pi_1} = e^{\rho-1}
$$
and so from~\eqref{eq:randomized_upper_parameter_2} we can derive
\begin{equation*}
\quad \pi_1 (2 \pi_2 - \pi_1 )=\mu^2+\sigma^2
\quad\ifif\quad
\frac{\pi_1^2}{\mu^2}\left(2\frac{\pi_2}{\pi_1}-1\right)=\frac{\sigma^2}{\mu^2}+1
\quad\ifif\quad
\frac{1}{\rho^2}\left(2e^{\rho -1}-1\right)=r^2+1,
\label{eq:randomized_upper_implicit}
\end{equation*} 
which is exactly the equation in~\cref{def:function-upper}. The asymptotic behaviour
follows from~\cref{lemma:upper_randomized_asymptotic}.
\end{proof}
By looking at the proof of the previous theorem, it is not difficult to see that our
upper bound is also an upper bound with respect to welfare (which for a single
$(\mu,\sigma)$ distribution is simply given by $\mu$). If we were interested in
comparing the revenue of our auction to the maximum welfare, then it immediately
follows from \cref{prop:vanilla_carrasco} that the bound is exact and tight.
However, our main goal in the present paper is to provide tight bounds with respect
to the optimal \emph{revenue}, and achieving this requires some extra
work. The rest of our section is devoted to proving and discussing the following
lower bound, which asymptotically matches that of \cref{th:randomized-upper-single}. Note that even though there is a gap between the lower and upper bound for small values of $r$ (see also \cref{fig:plot_small}), the two bounds are asymptotically within a $(1+o(1))$-factor of each other.

\begin{theorem}\label{thm:lower-bound-randomized}
For a single $(\mu,\sigma)$-distributed item, the robust approximation ratio is at least
\[\apx(\mu,\sigma)\geq 1+\ln(1+r^2),\]
where $r=\sigma/\mu$.
\end{theorem}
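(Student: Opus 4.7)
The plan is to prove the bound via a Yao-style minimax argument. I will exhibit a single probability measure $\mathcal{D}$ on $\fbb_{\mu,\sigma}$ with two properties: (i) every $F\in\mathrm{supp}(\mathcal{D})$ satisfies $\opt(F)\geq v$; and (ii) $\mathbb{E}_{F\sim\mathcal{D}}[\rev(p;F)]\leq c$ for every deterministic price $p\geq 0$. Since any $A\in\abb_1$ is simply a distribution over posted prices (see the discussion after \eqref{eq:myerson_single}), Fubini gives $\mathbb{E}_{F\sim\mathcal{D}}[\rev(A;F)]=\mathbb{E}_{p\sim A}\mathbb{E}_F[\rev(p;F)]\leq c$, so at least one $F\in\mathrm{supp}(\mathcal{D})$ must have $\rev(A;F)\leq c$; the approximation ratio at that $F$ is then at least $v/c$. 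I will engineer $v=\mu$ and $c=\mu/L$, where $L:=1+\ln(1+r^2)$, yielding exactly the claimed bound.

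For the ``basis'' distributions of the construction I use a very simple family. Set $T:=\mu(1+r^2)=\mu+\sigma^2/\mu$; for each $\tau\in[\mu,T]$, let $F^\tau$ place mass $\mu/\tau$ at $\tau$ and the complementary mass $1-\mu/\tau$ at $0$. Routine computations yield that $F^\tau$ has mean $\mu$ and variance $\mu(\tau-\mu)\leq\mu(T-\mu)=\sigma^2$, so $F^\tau\in\fbb_{\mu,\sigma}$; and $\opt(F^\tau)=\mu$, since posting price $\tau$ sells with probability $\mu/\tau$ for revenue $\mu$. Hence the uniform lower bound $v=\mu$ holds automatically across the whole family.

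The mixing strategy $\mathcal{D}$ is the finely-tuned object at the heart of the proof: I use the continuous density $g(\tau)=C/\tau$ on $[\mu,T]$ together with an additional point mass of weight $C$ concentrated at $\tau=T$, with $C:=1/L$ forced by the normalization identity $\int_\mu^T C/\tau\,d\tau+C=C(\ln(T/\mu)+1)=CL=1$. A direct integration shows that the induced mixture $\bar F:=\mathbb{E}_{\tau\sim\mathcal{D}}[F^\tau]$ has (a) a point mass of weight $\ln(T/\mu)/L$ at $0$; (b) density $\mu/(L\tau^2)$ on $[\mu,T]$; and (c) a point mass of weight $\mu/(LT)$ at $T$. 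Crucially, $\rev(p;\bar F)$ is \emph{constant} and equal to $\mu/L$ for every $p\in[\mu,T]$---this is the ``truncated equal-revenue style'' behaviour alluded to in the introduction---while it grows linearly up to $\mu/L$ on $(0,\mu]$ and vanishes for $p>T$. By linearity of $\rev(p;\cdot)$ in $F$, we get $\mathbb{E}_F[\rev(p;F)]=\rev(p;\bar F)$, and the desired uniform upper bound $c=\mu/L$ follows.

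The main obstacle is precisely the ``fine-tuning'' of the previous paragraph: the correct shape of $g$ must be $g(\tau)\propto 1/\tau$, and the compensating atom of mass $C$ at the top endpoint $\tau=T$ is essential, since without it the truncation of the $1/\tau^2$ tail would let the seller extract strictly more than $\mu/L$ by pricing close to $T$. Once this tuning is set up, combining it with the Yao bookkeeping from the first paragraph and plugging in $v=\mu$ and $c=\mu/L$ immediately yields $\apx(\mu,\sigma)\geq v/c = L = 1+\ln(1+r^2)$, completing the argument.
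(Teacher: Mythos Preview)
Your proof is correct and follows essentially the same approach as the paper: a Yao-type argument over a family of two-point distributions (mass at $0$ and at a variable high point), mixed so that the posterior is a truncated equal-revenue distribution with an atom at the top. The only difference is a change of variables: you parameterize by the location $\tau$ of the nonzero mass, whereas the paper parameterizes by its probability $\varepsilon=\mu/\tau$; under this substitution your density $C/\tau$ on $[\mu,T]$ with an atom at $T$ is exactly the paper's density $c/\varepsilon$ on $[\varepsilon_0,1]$ with an atom at $\varepsilon_0$, and the normalizing constants $C=c=1/(1+\ln(1+r^2))$ coincide.
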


Before we go into the actual construction of our lower bound instances, we need some
technical preliminaries and to recall Yao's principle (see, e.g.,
\citep[Sec.~8.3]{Borodin1998a} or~\citep[Sec.~2.2.2]{Motwani1995a}). As we already
mentioned (see~\cref{sec:model}), a randomized mechanism $A\in\abb_1$ can be
interpreted as a randomization over prices $p\sim A$.
From \eqref{eq:apxopt}, we are interested in the value of a game in which the
mechanism designer plays first, randomizing over posted prices, and the adversary
plays second, choosing a worst-case distribution. Intuitively, Yao's principle states
that this is at least the value of another game in which the adversary plays first,
randomizing over their choices, and the mechanism designer plays second, choosing a
deterministic response, i.e.\ a single posted price. 

However, to define this second
game formally, we would have to first explain what it means for the adversary to
randomize over probability distributions, which form an infinite-dimensional space.
In order to avoid technical or measure-theoretical issues, we focus on a
specific model of randomization, which in the literature gives rise to the concept
of \emph{mixture} or \emph{contagious} distribution (see, e.g.,~\citet[Ch.~III.4]{MGB:1974}).

\begin{definition} Let $\mathfrak{F}$ be a class of cumulative distribution
functions over the nonnegative reals, and consider any measure space over a ground
set $T$. By an \emph{$\mathfrak{F}$-mixture with parameter space $T$}, we mean a
pair $(\varTheta,F)$, where $\varTheta$ is a probability measure in $T$, and $F$ is
a measurable function of type $F:\R_{\geq0}\times T\rightarrow \R$, whose
sections are in $\mathfrak{F}$; i.e.\ for any parameter $\theta\in T$, the function
\[F_\theta:\R_{\geq0}\rightarrow \R,\quad F_\theta(x)=F(x;\theta),\] is a
cumulative distribution in $\mathfrak{F}$.

Given an $\mathfrak{F}$-mixture $(\varTheta,F)$, we denote its \emph{posterior}
distribution by $\ebb_{\theta\sim\varTheta}[F_\theta]$; this is specified by the cdf
\[\expectation_{\theta\sim\varTheta}[F_\theta](z)=\int
F(z;\theta)d\varTheta(\theta)=\expectation_{\theta\sim\varTheta}[F_\theta(z)].\]
\end{definition}
When $\mathfrak{F}=\fbb_{\mu,\sigma}$ is the class of $(\mu,\sigma)$ distributions,
we shall let $\Delta_{\mu,\sigma}$ denote the class of $(\mu,\sigma)$ mixtures, that
is, the class of mixtures over $\fbb_{\mu,\sigma}$ (with arbitrary, unspecified
parameter space).
We can interpret $(\varTheta,F)$ as a convex combination of distributions, so that
the cdf of $\expectation_{\theta\sim\varTheta}[F_\theta]$ is the convex combination
of the corresponding cdfs; alternatively, $\expectation_{\theta\sim\varTheta}[F]$
can be seen as the cdf of a random variable that first samples a distribution
$F_\theta$ according to $\theta\sim\varTheta$, and then samples a value $z$
according to $F_\theta$.

Now that we have carefully described the adversarial model, we can formally state a
version of Yao's principle (\cref{lem:yao_lem_mixtures} below) that will help us
prove lower bounds. Since this applies on ``non-standard'' continuous spaces, for
completeness we need to formally derive it ``from scratch''; we give such a
self-contained proof in~\cref{sec:yao-appendix}.

\begin{lemma}\label{lem:yao_lem_mixtures} For any $\mu,\sigma$, we have the
following lower bound on the robust approximation ratio,

\[\adjustlimits\inf_{A\in\abb_1}\sup_{F\in\fbb_{\mu,\sigma}}\frac{\opt(F)}{\rev(A;F)}\geq\adjustlimits\sup_{(\varTheta,F)\in\Delta_{\mu,\sigma}}\inf_{p\geq
0}\frac{\expectation_{\theta\sim\varTheta}[\opt(F_\theta)]}{\expectation_{\theta\sim\varTheta}[\rev(p;F_\theta)]}.\]
\end{lemma}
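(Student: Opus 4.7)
The plan is to prove the inequality pointwise: for every randomized mechanism $A\in\abb_1$ and every mixture $(\varTheta,F)\in\Delta_{\mu,\sigma}$,
\[\sup_{F\in\fbb_{\mu,\sigma}}\frac{\opt(F)}{\rev(A;F)}\;\geq\;\inf_{p\geq 0}\frac{\expect[\theta\sim\varTheta]{\opt(F_\theta)}}{\expect[\theta\sim\varTheta]{\rev(p;F_\theta)}}.\]
The lemma then follows by taking $\inf_A$ on the left and $\sup_{(\varTheta,F)}$ on the right. The guiding intuition is the one behind Yao's principle: the revenue $\rev(\cdot;\cdot)$ is linear under mixtures of distributions and under mixtures of prices, while $\opt=\sup_p\rev(p;\cdot)$ is only convex in its argument, so averaging the adversary's choice can only help the mechanism designer.

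The first step is to observe that each section $F_\theta$ lies in $\fbb_{\mu,\sigma}$ by the very definition of a mixture, so it is an admissible choice for the outer adversary. Setting $C:=\sup_{F\in\fbb_{\mu,\sigma}}\opt(F)/\rev(A;F)$, the desired inequality is vacuous when $C=\infty$, so I may assume $C<\infty$; then $C\cdot\rev(A;F_\theta)\geq\opt(F_\theta)$ holds for every $\theta$, and integrating against $\varTheta$ gives
\[C\cdot\expect[\theta\sim\varTheta]{\rev(A;F_\theta)}\;\geq\;\expect[\theta\sim\varTheta]{\opt(F_\theta)}.\]

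The second step uses the representation of $A$ as a random take-it-or-leave-it price $p\sim A$ (\cref{sec:model}) together with Fubini--Tonelli, which is justified by the nonnegativity and joint measurability of $(p,\theta)\mapsto\rev(p;F_\theta)=p(1-F_\theta(p-))$:
\[\expect[\theta\sim\varTheta]{\rev(A;F_\theta)}=\expect[\theta\sim\varTheta]{\expect[p\sim A]{\rev(p;F_\theta)}}=\expect[p\sim A]{\expect[\theta\sim\varTheta]{\rev(p;F_\theta)}}\leq\sup_{p\geq 0}\expect[\theta\sim\varTheta]{\rev(p;F_\theta)}.\]
Combining with the previous display and rearranging,
\[C\;\geq\;\frac{\expect[\theta\sim\varTheta]{\opt(F_\theta)}}{\sup_{p\geq 0}\expect[\theta\sim\varTheta]{\rev(p;F_\theta)}}\;=\;\inf_{p\geq 0}\frac{\expect[\theta\sim\varTheta]{\opt(F_\theta)}}{\expect[\theta\sim\varTheta]{\rev(p;F_\theta)}},\]
which is exactly the pointwise bound I sought.

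The main obstacle is the careful measure-theoretic bookkeeping in this continuous adversarial setting: one needs the joint measurability of $(p,\theta)\mapsto\rev(p;F_\theta)$ (which follows from the measurability built into the definition of a mixture) in order to invoke Fubini, and one has to dispense with degenerate corner cases where a denominator vanishes, which are either absorbed by the $C=\infty$ reduction above or can be excluded without loss from the supremum on the right-hand side. Everything else is a straightforward packaging of the ``average is no worse than the worst case'' principle into the ratio form demanded by the approximation benchmark.
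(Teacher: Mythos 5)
Your proof is correct and is essentially the same argument as the paper's: both rely on the Fubini--Tonelli interchange $\expectation_{\theta\sim\varTheta}\bigl[\expectation_{p\sim A}[\rev(p;F_\theta)]\bigr]=\expectation_{p\sim A}\bigl[\expectation_{\theta\sim\varTheta}[\rev(p;F_\theta)]\bigr]$ together with the observation that each section $F_\theta$ is an admissible adversarial response, followed by bounding the averaged mechanism revenue by its pointwise supremum. The only difference is organizational: the paper factors the core step out into an abstract measure-theoretic Yao-type lemma (\cref{lem:yao_lem}) and then instantiates it with $h(p,\theta)=\rev(p;F_\theta)$, $g(\theta)=\opt(F_\theta)$, whereas you inline the same computation directly.
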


Note that, by using~\eqref{eq:myerson_single}, we can rewrite the denominator of the
previous quantity as follows:
\begin{align*}
\adjustlimits\sup_{p\geq 0}\expectation_{\theta\sim\varTheta}\left[\rev(p;F_\theta)\right]
&=\adjustlimits\sup_{p\geq 0}\expectation_{\theta\sim\varTheta}\left[p(1-F_\theta(p-))\right]\\
&=\sup_{p\geq 0}p\left(1-\expectation_{\theta\sim\varTheta}[F_\theta(p-)]\right)\\
&=\sup_{p\geq 0}p\left(1-\expectation_{\theta\sim\varTheta}[F_\theta](p-)\right)\\
&=\sup_{p\geq 0}\rev\left(p;\expectation_{\theta\sim\varTheta}[F_\theta]\right)\\
&=\opt\left(\expectation_{\theta\sim\varTheta}[F_\theta]\right).
\end{align*}
The second equality comes from linearity of expectation and the third one follows
from the definition of a mixture distribution. Putting all these together, we arrive at the following key technical result:
\begin{lemma} 
For any $\mu,\sigma$, the robust approximation ratio is lower bounded by
\begin{equation}\label{eq:apxlowerbound}\apx(\mu,\sigma)\geq\sup_{(\varTheta,F)\in\Delta_{\mu,\sigma}}\frac{\expectation_{\theta\sim\varTheta}[\opt(F_\theta)]}{\opt(\expectation_{\theta\sim\varTheta}[F_\theta])}.\end{equation}
\end{lemma}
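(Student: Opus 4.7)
The plan is to chain Lemma~\ref{lem:yao_lem_mixtures} with the algebraic manipulation that already appears in the excerpt just before the statement. All the ingredients have been assembled, so the proof essentially amounts to bookkeeping.

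First, I would start from Yao's principle in the form given by Lemma~\ref{lem:yao_lem_mixtures}:
\[\apx(\mu,\sigma)\geq \sup_{(\varTheta,F)\in\Delta_{\mu,\sigma}}\inf_{p\geq 0}\frac{\expectation_{\theta\sim\varTheta}[\opt(F_\theta)]}{\expectation_{\theta\sim\varTheta}[\rev(p;F_\theta)]}.\]
The key elementary observation is that the numerator does not depend on the seller's posted price $p$. Therefore, for any fixed mixture $(\varTheta,F)$, the inner infimum can be rewritten as
\[\inf_{p\geq 0}\frac{\expectation_{\theta\sim\varTheta}[\opt(F_\theta)]}{\expectation_{\theta\sim\varTheta}[\rev(p;F_\theta)]}=\frac{\expectation_{\theta\sim\varTheta}[\opt(F_\theta)]}{\sup_{p\geq 0}\expectation_{\theta\sim\varTheta}[\rev(p;F_\theta)]}.\]

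Second, I would identify the denominator $\sup_{p\geq 0}\expectation_{\theta\sim\varTheta}[\rev(p;F_\theta)]$ with $\opt(\expectation_{\theta\sim\varTheta}[F_\theta])$. This is exactly the chain of equalities displayed in the excerpt immediately above the lemma. Concretely, I would expand $\rev(p;F_\theta)=p(1-F_\theta(p-))$ via~\eqref{eq:myerson_single}, pull the scalar $p$ outside the expectation by linearity, interchange expectation with the left-limit to write $\expectation_{\theta\sim\varTheta}[F_\theta(p-)]=\expectation_{\theta\sim\varTheta}[F_\theta](p-)$ (using the defining property of the posterior cdf of a mixture distribution), and finally take the supremum over $p$ to recover the Myerson operator applied to the posterior $\expectation_{\theta\sim\varTheta}[F_\theta]$.

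Combining these two steps and taking the $\sup$ over mixtures on both sides yields exactly the inequality~\eqref{eq:apxlowerbound}. The only mild technical point is the interchange of expectation and the left-limit in the third step, but this is justified either by monotone convergence (each $F_\theta$ is a bounded, nondecreasing cdf, so take $x_n\nearrow p$ and apply the dominated convergence theorem) or directly from the defining property of the posterior cdf of an $\fbb_{\mu,\sigma}$-mixture. I do not expect any substantive obstacle beyond this routine measure-theoretic check; the heart of the argument—the minimax swap—has already been carried out in Lemma~\ref{lem:yao_lem_mixtures}.
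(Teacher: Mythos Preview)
Your proposal is correct and follows essentially the same approach as the paper: invoke \cref{lem:yao_lem_mixtures}, move the infimum over $p$ into a supremum in the denominator (since the numerator is independent of $p$), and then use the displayed chain of equalities to identify that supremum with $\opt(\expectation_{\theta\sim\varTheta}[F_\theta])$. Your brief justification of the expectation/left-limit interchange is a nice addition, but otherwise this is exactly the paper's argument.
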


From a practical perspective, the above result has a positive consequence. It allows
us to obtain lower bounds by constructing a single $(\mu,\sigma)$ mixture,
$(\varTheta,F)$, and calculating the expected optimal revenue before and after the
realization of $\theta\sim \varTheta$. Our goal is to make this ratio as
high as possible and, ideally, match the competitive ratio of the log-lottery
pricing. From this, we can gain some insight into how to construct a ``good'' 
mixture. By looking at the right-hand side of the inequality in
\cref{lem:yao_lem_mixtures}, we would intuitively expect that different posted
prices $p$ yield similar revenues of
$\expectation_{\theta\sim\varTheta}\left[\rev(p;F_\theta)\right]=\rev\left(p;\expectation_{\theta\sim\varTheta}[F_\theta]\right)$.
Thus, we would aim for a mixture $(\varTheta,F)$ for which the posterior
distribution has this property for at least some subset of its support.

From a theoretical perspective, the quantity
in~\eqref{eq:apxlowerbound} is interesting by itself. One can check that the Myerson
operator is convex, that is, the revenue achieved by a convex combination of
distributions can only be smaller than the convex combinations of the corresponding
revenues. Thus, by Jensen's inequality, the ratio in \eqref{eq:apxlowerbound} is
always at least 1. On the other hand, for a linear functional $\lcal$, we have that
$\expectation_{\theta\sim \varTheta}[\lcal(F_\theta)]=\lcal(\expectation_{\theta\sim
\varTheta}[F_\theta])$. Thus,
\eqref{eq:apxlowerbound} somehow attempts to quantify the extent to which $\opt$ is
nonlinear, or in other words, it can be understood as a \emph{measure of convexity
of the Myerson operator}. In any case, we can use this result to construct lower
bound instances and prove the main result of this section:

\begin{proof}[Proof of \cref{thm:lower-bound-randomized}]
We shall construct a $(\mu,\sigma)$ mixture over two-point mass distributions. Each
two-point mass distribution $F_\varepsilon$ is given by a unique choice of parameter
$\varepsilon\in(0,1]$; $F_\varepsilon$ returns $0$ with probability $1-\varepsilon$
and $\mu/\varepsilon$ with probability $\varepsilon$. Note that $F_\varepsilon$ has
mean $\mu$ and variance $\mu^2(1/\varepsilon-1)$. The upper bound of $\sigma^2$ on
the variance implies that we can only take values of
$\varepsilon\geq\varepsilon_0\equiv\frac{1}{1+r^2}$, where $r$ is the coefficient of
variation (our quantity of interest).

Our next step is to describe the convex mixture of these distributions. Define a
random variable with support $[\varepsilon_0,1]$ and distributed according to $B$ as
follows:
\begin{itemize}
    \item $B$ has a point mass at $\varepsilon_0$ of size $c$;
    \item $B$ is continuous over $(\varepsilon_0,1]$, with density
    $\beta(\varepsilon)=c/\varepsilon$.
\end{itemize}
The value of $c$ is given by $c=\frac{1}{1+\ln\left(1+r^2\right)}$ and is chosen as
a normalizing constant; indeed,
\[1=\expectation_{\varepsilon\sim B}[1]=c+c\ln\frac{1}{\varepsilon_0}=c\left(1+\ln\left(1+r^2\right)\right).\]

Our $(\mu,\sigma)$ mixture distribution thus corresponds to sampling $F_\varepsilon$
where $\varepsilon\sim B$. Next, we describe the posterior distribution
$G=\expectation_{\varepsilon\sim B}[F_\varepsilon]$. Its cumulative function can be
seen in~\cref{fig:randomlowbound-mixture}.
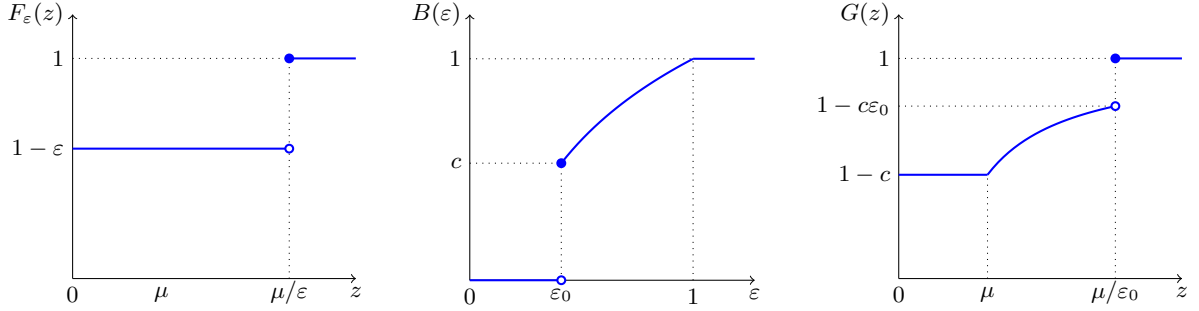
\begin{figure}
    \centering
    \begin{subfigure}[t]{0.31\textwidth}
        \resizebox{!}{4.2cm}{\footnotesize
\begin{tikzpicture}[scale = 3]
\tikzmath{\c=1.2; \m = 0.4; \r =1/(1 + ln(1 + \c^2)); \e0=1/(1+\c^2); \z=\m/\e0 + 0.3;}

\draw[->] (0, 0) -- (\z, 0);
\node[below] at (\z, 0) {$z$};
\draw[->] (0, 0) -- (0, 1.2);
\node[left] at (0, 1.2) {$F_{\varepsilon}(z)$};

\node[below] at (0, 0) {$0$};

\node[below] at (\m, 0) {$\mu$};

\node[below] at (\m/\e0, 0.03) {$\mu/\varepsilon$};
\draw[dotted] (0, 1) -- ({\m/\e0}, 1);
\draw[dotted] (\m/\e0, 0) -- (\m/\e0, 1);

\node[left] at (0, 1) {$1$};

\node[left] at (0, {1- \e0}) {$1 - \varepsilon$};

\draw[blue,thick,domain=0:{\m/\e0}] plot (\x,{1-\e0});
\draw[blue,thick,fill=white] ({\m/\e0}, {1-\e0}) circle (0.5pt);
\draw[blue,thick,domain={\m/\e0}:{\z}] plot (\x, {1});
\draw[blue,thick,fill=blue] ({\m/\e0}, {1}) circle (0.5pt);

\end{tikzpicture}
			}
			\caption{\footnotesize Two-point mass distribution with one mass at 0
			and another at $\mu/\varepsilon$.}
		\label{fig:randomlowbound-two-point}
    \end{subfigure}
    ~
        \begin{subfigure}[t]{0.31\textwidth}
  		    \resizebox{!}{4.2cm}{\footnotesize
\begin{tikzpicture}[scale = 3]
\tikzmath{\c=1.2; \m = 0.4; \r =1/(1 + ln(1 + \c^2)); \e0=1/(1+\c^2); \z=\m/\e0 + 0.3;}

\draw[->] (0, 0) -- (\z, 0);
\node[below] at (\z, 0) {$\varepsilon$};
\draw[->] (0, 0) -- (0, 1.2);
\node[left] at (0, 1.2) {$B(\varepsilon)$};

\node[below] at (0, 0) {$0$};

\node[below] at (\e0, 0) {$\varepsilon_0$};
\draw[dotted] (\e0, 0) -- (\e0, \r);

\node[below] at (1, 0) {$1$};
\draw[dotted] (0, 1) -- (1, 1);
\draw[dotted] (0, \r) -- (\e0, \r);

\node[left] at (0, 1) {$1$};

\node[left] at (0, {\r}) {$c$};

\draw[dotted] (1, 0) -- (1,1);

\draw[blue,thick,domain=0:\e0] plot (\x,0);
\draw[blue,thick,domain={\e0}:1] plot (\x, {\r*(1+ln(\x/\e0))});
\draw[blue,thick,fill=white] ({\e0}, {0}) circle (0.5pt);
\draw[blue,thick,domain=1:{\z}] plot (\x, {1});
\draw[blue,thick,fill=blue] ({\e0}, {\r}) circle (0.5pt);

\end{tikzpicture}
			}
			\caption{\footnotesize Mixing in parameter space; each $\varepsilon$
			corresponds to a two-point mass distribution as in
			\cref{fig:randomlowbound-two-point}. Note that this distribution has a
			mass at $\varepsilon_0$.}
		\label{fig:randomlowbound-mixing}
    \end{subfigure}
	~
    \begin{subfigure}[t]{0.31\textwidth}
        \resizebox{!}{4.2cm}{\footnotesize
\begin{tikzpicture}[scale = 3]
\tikzmath{\c=1.2; \m = 0.4; \r =1/(1 + ln(1 + \c^2)); \e0=1/(1+\c^2); \z=\m/\e0 + 0.3;}

\draw[->] (0, 0) -- (\z, 0);
\node[below] at (\z, 0) {$z$};
\draw[->] (0, 0) -- (0, 1.2);
\node[left] at (0, 1.2) {$G(z)$};

\node[below] at (0, 0) {$0$};

\node[below] at (\m, 0) {$\mu$};
\draw[dotted] (\m, 0) -- ({\m}, 1-\r);

\node[below] at (\m/\e0, 0.03) {$\mu/\varepsilon_0$};
\draw[dotted] (0, 1) -- ({\m/\e0}, 1);
\draw[dotted] (\m/\e0, 0) -- (\m/\e0, 1);

\node[left] at (0, 1) {$1$};

\node[left] at (0, {1- \r}) {$1 - c$};

\node[left] at (0, {1- \r * \e0}) {$1 - c\varepsilon_0$};
\draw[dotted] (0, {1- \r * \e0}) -- (\m/\e0, {1- \r * \e0});

\draw[blue,thick,domain=0:\m] plot (\x,{1-\r});
\draw[blue,thick,domain={\m}:{\m/\e0}] plot (\x, {1- \r*\m / \x)});
\draw[blue,thick,fill=white] ({\m/\e0}, {1- \r * \e0}) circle (0.5pt);
\draw[blue,thick,domain={\m/\e0}:{\z}] plot (\x, {1});
\draw[blue,thick,fill=blue] ({\m/\e0}, {1}) circle (0.5pt);

\end{tikzpicture}
			}
			\caption{\footnotesize Posterior distribution from the mixture obtained
			via \cref{fig:randomlowbound-two-point,fig:randomlowbound-mixing}.}
		\label{fig:randomlowbound-mixture}
    \end{subfigure}
    \caption{The cdfs of the various distributions used in the lower bound
    construction of \cref{thm:lower-bound-randomized}.} \label{fig:randomlowbound}
\end{figure}
\begin{itemize}
    \item Mass at 0: as each $F_\varepsilon$ has a point mass at $0$, so does $G$. The
    value of this mass is given by
    \[\expectation_{\varepsilon\sim B}[\text{mass of }F_\varepsilon\text{ at
    }0]=\int_{\varepsilon_0}^1(1-\varepsilon)\beta(\varepsilon)d\varepsilon+(1-\varepsilon_0)c=c\ln\frac{1}{\varepsilon_0}=1-c;\]
    \item Mass at $\mu/\varepsilon_0$: as $B$ has a point mass at $\varepsilon_0$ and
    $F_{\varepsilon_0}$ has a point mass at $\mu/\varepsilon_0$, this implies that $G$ has
    a point mass at $\mu/\varepsilon_0$ of size $c\varepsilon_0$;
    \item cdf in $[\mu,\mu/\varepsilon_0)$: for each $z\in[\mu,\mu/\varepsilon_0)$,
    $F_\varepsilon(z)$ is $(1-\varepsilon)$ for $\varepsilon<\mu/z$ and $1$ for $\varepsilon\geq
    \mu/z$; thus the cdf of $G$ can be computed as
    \[G(z)=\int_{\varepsilon_0}^{\mu/z}(1-\varepsilon)\beta(\varepsilon)d\varepsilon+\int_{\mu/z}^1\beta(\varepsilon)d\varepsilon+(1-\varepsilon_0)c=1-\frac{c\mu}{z}.\]
\end{itemize}
We can interpret $G(z)$ as a \emph{truncated equal-revenue} distribution over the
interval $[\mu,\mu/\varepsilon_0)$, \emph{with additional point masses} at $0$ and
$\mu/\varepsilon_0$. In particular, every posted price in $[\mu,\mu/\varepsilon]$
yields the same (optimal) revenue, and $\opt(G)=c\mu=\frac{\mu}{1+\ln(1+r^2)}$. On
the other hand, note that for every $\varepsilon>0$ we have
$\opt(F_\varepsilon)=\mu$, so $\expectation_{\varepsilon\sim
B}[\opt(F_\varepsilon)]=\mu$. Plugging these into
\eqref{eq:apxlowerbound} yields a lower bound of $1/c=1+\ln(1+r^2)$ as desired.
\end{proof}

From the previous proof, some further discussion and remarks are in order. Note that
our mixture uses distributions $F_\varepsilon$, which for
$\varepsilon>\varepsilon_0$ have a variance \emph{strictly smaller} than $\sigma^2$.
Since we have defined our adversarial model to play $(\mu,\sigma)$ distributions,
such instances are allowed. However, one may wish to ensure that the adversary only
picks distributions in $\fbb^=_{\mu,\sigma}$ (i.e.\ with \emph{exact} equality on
the variance); this might be relevant, for example, if the seller had extra
information about the exact value of $\sigma$; or, from a theoretical perspective,
such a restriction of the adversary would only make our lower bound more ``clear''
and powerful. We shall now argue that indeed our assumption on having just a bound
on the standard deviation, is not only a technical convenience (and, arguably, more
realistic), but also is without loss of generality for our bounds. Intuitively, for
any mechanism $A$ and any $(\mu,\sigma)$ distribution $F$, one can ``perturb'' $F$
into a distribution in $\fbb^=_{\mu,\sigma}$ having nearly the same approximation
ratio. Below we formalize this intuition for single-item settings, although it is
not hard to see how to generalize it to higher dimensions.

\begin{lemma}\label{lem:exactsigmawlog}
For single-item settings, the restriction of the robust approximation problem from
$(\mu,\sigma)$ distributions to distributions in $\fbb^=_{\mu,\sigma}$ does not
change its value. Formally, for any $\mu>0$, $\sigma\geq 0$, and any mechanism $A$,
we have
\[\sup_{F\in\fbb_{\mu,\sigma}}\frac{\opt(F)}{\rev(A;F)}=\sup_{F\in\fbb^=_{\mu,\sigma}}\frac{\opt(F)}{\rev(A;F)};\]
and hence
\[\adjustlimits\inf_{A\in\abb_1}\sup_{F\in\fbb_{\mu,\sigma}}\frac{\opt(F)}{\rev(A;F)}=\adjustlimits\inf_{A\in\abb_1}\sup_{F\in\fbb^=_{\mu,\sigma}}\frac{\opt(F)}{\rev(A;F)}.\]
\end{lemma}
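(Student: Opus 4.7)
The easy direction is immediate: since $\fbb^=_{\mu,\sigma}\subseteq\fbb_{\mu,\sigma}$, the sup over the larger class dominates. The real content is the reverse inequality, namely that any $F\in\fbb_{\mu,\sigma}$ with variance strictly less than $\sigma^2$ can be perturbed, without meaningfully improving the ratio $\opt(F)/\rev(A;F)$, into a distribution whose variance is exactly $\sigma^2$. The plan is to do this by mixing $F$ with a carefully chosen ``high-mass'' auxiliary distribution that keeps the mean fixed and inflates the second moment by the required amount.

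Concretely, fix $F\in\fbb_{\mu,\sigma}$ with $\mathrm{Var}(F)=\tau^2\leq\sigma^2$ (if $\tau=\sigma$ we are already done). For each $M>\mu$ large, define the two-point auxiliary distribution $H_M$ that places mass $1-\mu/M$ on $0$ and mass $\mu/M$ on $M$; note $H_M$ has mean $\mu$ and second moment $\mu M$. Set
\[
F'_M=(1-\varepsilon_M)\,F+\varepsilon_M\,H_M,\qquad \varepsilon_M=\frac{\sigma^2-\tau^2}{\mu(M-\mu)-\tau^2}.
\]
A direct computation of first and second moments shows that $F'_M$ has mean $\mu$ and variance exactly $\sigma^2$, hence $F'_M\in\fbb^=_{\mu,\sigma}$. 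Moreover, $\varepsilon_M\to 0$ as $M\to\infty$.

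The next step is to bound the ratio on $F'_M$ in terms of the ratio on $F$. Using the optimal Myerson price $p^\ast$ for $F$ on the mixture $F'_M$ yields $\opt(F'_M)\geq(1-\varepsilon_M)\opt(F)$, since $H_M$'s contribution is nonnegative. On the other hand, linearity of revenue across a mixture together with the welfare bound $\rev(A;H_M)\leq\val(H_M)=\mu$ gives
\[
\rev(A;F'_M)\leq(1-\varepsilon_M)\,\rev(A;F)+\varepsilon_M\,\mu.
\]
Combining the two,
\[
\frac{\opt(F'_M)}{\rev(A;F'_M)}\geq\frac{(1-\varepsilon_M)\opt(F)}{(1-\varepsilon_M)\rev(A;F)+\varepsilon_M\mu},
\]
and letting $M\to\infty$ the right-hand side tends to $\opt(F)/\rev(A;F)$. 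Taking a sup over the constructed family therefore shows that the sup over $\fbb^=_{\mu,\sigma}$ is at least $\opt(F)/\rev(A;F)$; since $F\in\fbb_{\mu,\sigma}$ was arbitrary, the two sups coincide. The displayed min-max identity is then immediate by taking an infimum over $A\in\abb_1$ on both sides.

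The only subtlety — and the main place to be careful — is the edge case $\rev(A;F)=0$, where the ratio on $F$ is $+\infty$ and the limit argument above is vacuous. In that case one still has $\rev(A;F'_M)\leq\varepsilon_M\mu$ while $\opt(F'_M)\geq(1-\varepsilon_M)\opt(F)>0$ (using $\opt(F)\geq 0$ with $\opt(F)>0$ whenever $\mu>0$; if $\opt(F)=0$ the ratio is undefined and $F$ can be discarded). Hence the ratio on $F'_M$ diverges as $M\to\infty$, showing that both sides of the claimed equality are simultaneously $+\infty$. This completes the reduction from $\fbb_{\mu,\sigma}$ to $\fbb^=_{\mu,\sigma}$.
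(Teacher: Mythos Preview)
Your proof is correct and essentially identical to the paper's: both mix $F$ with a two-point ``rare event'' distribution (mass at $0$ and at a large value, mean $\mu$) and send the mixing weight to zero, using exactly the same bounds $\opt(F')\geq(1-\varepsilon)\opt(F)$ and $\rev(A;F')\leq(1-\varepsilon)\rev(A;F)+\varepsilon\mu$. The only differences are cosmetic---you parametrize by the high value $M$ and solve for the weight $\varepsilon_M$, whereas the paper parametrizes by the weight $\delta$ and solves for the rare-event parameter---and you additionally spell out the $\rev(A;F)=0$ edge case.
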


\begin{proof}
Let $\mu$ and $\sigma$ be given, and let $A$ be any mechanism and $F_0$ any
$(\mu,\sigma)$ distribution. Suppose that the variance of $F_0$ is
$\tilde{\sigma}^2<\sigma^2$. For each $\delta\in(0,1]$, let us define the perturbed
distribution $F_\delta$ as the following convex combination of distributions:
\begin{itemize}
    \item with probability $1-\delta$, sample a value according to $F_0$;
    \item with probability $\delta$, sample a value according to the rare event
    distribution that is $0$ with probability $1-\varepsilon$ and $\mu/\varepsilon$
    with probability $\varepsilon$;
    \item the value of $\varepsilon$ is chosen so that $F_\delta$ has variance
    \emph{exactly equal to} $\sigma^2$; in other words, it is obtained by solving
    the system
    \[(1-\delta)(\mu^2+\tilde{\sigma}^2)+\delta\mu^2/\varepsilon=\mu^2+\sigma^2\quad\then\quad\varepsilon=\frac{\delta\mu^2}{\delta\mu^2+\sigma^2-(1-\delta)\tilde{\sigma}^2}.\]
\end{itemize}

Note that, for each $\delta$, $F_\delta$ has the desired mean of $\mu$ as it is the
convex combination of two distributions of mean $\mu$. Moreover, as
$\delta\rightarrow 0$, also $\varepsilon\rightarrow 0$, so that $F_\delta$ weakly
converges to $F_0$. Finally, we have the trivial bounds
\[\rev(A;F_\delta)\leq (1-\delta)\rev(A;F_0)+\delta\mu;\qquad \opt(F_\delta)\geq(1-\delta)\opt(F_0),\]
which can be combined to yield
\[\frac{\opt(F_\delta)}{\rev(A;F_\delta)}\geq\frac{(1-\delta)\opt(F_0)}{(1-\delta)\rev(A;F_0)+\delta\mu}.\]
By letting $\delta$ go to $0$, we have
\[\sup_{F\in\fbb^=_{\mu,\sigma}}\frac{\opt(F)}{\rev(A;F)}\geq\lim_{\delta\rightarrow0}\frac{(1-\delta)\opt(F_0)}{(1-\delta)\rev(A;F_0)+\delta\mu}=\frac{\opt(F_0)}{\rev(A;F_0)}.\]
Taking suprema over $F_0$ on the right-hand side yields the first statement of our
lemma; and taking infima over $A$ on both sides yields the last statement.
\end{proof}

It should also be mentioned that, in principle, we could accommodate the proof of
\cref{thm:lower-bound-randomized} to handle distributions with exact equality with
respect to $\sigma$, with minor technical modifications. More precisely, one would
define $F_{\varepsilon,\delta}$ as a perturbation of $F_\varepsilon$ as in the proof
of \cref{lem:exactsigmawlog}. This would yield an approximation ratio that depends
on $\delta$, which would then be taken in the limit $\delta\rightarrow 0$.

Another observation is that the ``bad instances'' that we used for
\cref{thm:lower-bound-randomized} were two-point mass distributions, with one of the
points being 0. Note that these differ from the instances we used in the
deterministic lower bounds (\cref{lem:deterministic_inner_opt},
\cref{th:deterministic-upper_and_lower}), which were two-point mass distributions
with exact variance of $\sigma^2$. These latter instances were actually shown in
\citep{Carrasco2018} to be worst-case distributions for their objective function,
and they were also used in \citet{Azar:2013aa} to prove maximin optimality in their
model.
Thus, it would be natural to wonder whether such instances could have been actually
enough to prove a matching lower bound in the randomized setting. Below we answer
this question in the negative; in other words, we prove a \emph{constant} upper
bound when the adversary is forced to pick one of these distributions.

\begin{proposition}\label{prop:lower-randomized-simplify}
For every choice of $\mu,\sigma$, there is a randomized mechanism $A$ that achieves
(at least) a $\frac{1}{4}$-fraction of the optimal revenue on \emph{any}
distribution $F$ that is a two-point mass with mean $\mu$ and variance $\sigma^2$.
In particular, $A$ is the mechanism that offers price $\frac{1}{2}\mu$ with
probability $\frac{1}{2}$ and $\mu+\frac{\sigma^2}{\mu}$ with probability
$\frac{1}{2}$.
\end{proposition}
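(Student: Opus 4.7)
The plan is to split on the location of the small point $x$ of the two-point support, so that for each two-point instance $F_x$ at least one of the two prices $p_1=\mu/2$ and $p_2=\mu+\sigma^2/\mu$ in the support of $A$ can be analyzed cleanly. I will use the parametrization of two-point distributions by $x\in[0,\mu)$ introduced in~\cref{sec:deterministic-upper}, i.e. $F_x$ puts mass $\alpha(x)=\sigma^2/(\sigma^2+(\mu-x)^2)$ at $x$ and the remaining mass at $y(x)=\mu+\sigma^2/(\mu-x)$. A first useful observation is that $y(x)\geq \mu+\sigma^2/\mu=p_2$ for all $x\in[0,\mu)$, so price $p_2$ is always accepted exactly by the upper atom, giving $\rev(p_2;F_x)=p_2(1-\alpha(x))$.

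For the case $x\geq \mu/2$ the argument is immediate: price $p_1=\mu/2\leq x$ is accepted with probability $1$, hence $\rev(A;F_x)\geq \tfrac12\rev(p_1;F_x)=\mu/4$, and the welfare bound $\opt(F_x)\leq\val(F_x)=\mu$ yields the desired $\rev(A;F_x)\geq \opt(F_x)/4$.

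For the case $x<\mu/2$ I will first verify algebraically that the ``upper-atom'' revenue dominates, i.e. $(1-\alpha(x))y(x)>x$; equivalently, the numerator $(\mu-x)^3+\sigma^2(\mu-2x)$ that appears after reducing $(1-\alpha(x))y(x)-x$ to a common denominator is strictly positive when $x<\mu/2$. This identifies $\opt(F_x)=(1-\alpha(x))y(x)$. Then I compare the revenue of $p_2$ to the optimum via the simple ratio
\[
\frac{\rev(p_2;F_x)}{\opt(F_x)}=\frac{p_2}{y(x)}=\frac{\mu+\sigma^2/\mu}{\mu+\sigma^2/(\mu-x)},
\]
and show that, using $\mu-x>\mu/2$, this ratio is at least $1/2$ (a one-line cross-multiplication reducing to $\mu^2(\mu-x)+(2(\mu-x)-\mu)\sigma^2\geq 0$). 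Consequently $\rev(A;F_x)\geq \tfrac12\rev(p_2;F_x)\geq \opt(F_x)/4$.

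The two cases exhaust $[0,\mu)$ and together give the claimed $1/4$-approximation. I do not expect any real obstacle: the only substantive step is the comparison $p_2/y(x)\geq 1/2$ in the second case, and this collapses to a short algebraic inequality once $\mu-x>\mu/2$ is plugged in. The proof also transparently explains the choice of the two prices: $p_1=\mu/2$ is calibrated so that it is accepted whenever the lower atom is at least half the mean, while $p_2=\mu+\sigma^2/\mu$ is the smallest price guaranteed to be below every possible upper atom $y(x)$, which is exactly what makes the $p_2/y(x)\geq 1/2$ bound work in the complementary regime.
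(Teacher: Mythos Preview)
Your proposal is correct and follows essentially the same approach as the paper: the same case split at $x=\mu/2$, the same use of $p_1=\mu/2$ and the welfare bound in the first case, and the same identification of $\opt(F_x)=(1-\alpha(x))y(x)$ followed by the ratio $p_2/y(x)\geq 1/2$ in the second case. The only cosmetic difference is that the paper bounds $y(x)\leq \mu+2\sigma^2/\mu$ via monotonicity before taking the ratio, whereas you cross-multiply directly; these amount to the same use of $\mu-x>\mu/2$.
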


\begin{proof}
Let us analyse the performance of $A$ on a two-point mass distribution $F_x$, say
with a point mass at $x$ of size $\alpha(x)$ and another at $y(x)$ of size
$1-\alpha(x)$, with $x<\mu<y(x)$. If $\frac{1}{2}\mu\leq x$ then the mechanism
chooses with probability $1/2$ a price that always sells, guaranteeing revenue of
$\frac{\mu}{4}$, which is also a $1/4$-fraction of $\opt(F)$. Next, suppose that
$x\leq\frac{1}{2}\mu$. This implies
\[1-\alpha(x)=\frac{(\mu-x)^2}{\sigma^2+(\mu-x)^2},\quad
y(x)=\mu+\frac{\sigma^2}{\mu-x}\leq\mu+2\frac{\sigma^2}{\mu},\] since $y(x)$ is a
nondecreasing function. Moreover, we have that
\[(1-\alpha(x))y(x)=\frac{\sigma^2(\mu-x)+(\mu-x)^2\mu}{\sigma^2+(\mu-x)^2}\geq\frac{\mu}{2}\frac{\sigma^2+2(\mu-x)^2}{\sigma^2+(\mu-x)^2}\geq\frac{\mu}{2}\geq
x,\] so that $\opt(F)$ is achieved by pricing at $y(x)$. Our mechanism $A$ chooses
with probability $1/2$ a price of $\mu+\frac{\sigma^2}{\mu}$, which sells with
probability $1-\alpha(x)$. Thus the approximation ratio is at least
\[\frac{\frac{1}{2}\left(1-\alpha(x)\right)\left(\mu+\frac{\sigma^2}{\mu}\right)}{(1-\alpha(x))y(x)}\geq\frac{1}{2}\frac{\mu+\frac{\sigma^2}{\mu}}{\mu+2\frac{\sigma^2}{\mu}}=\frac{1}{4}\frac{\sigma^2+\mu^2}{\sigma^2+\frac{1}{2}\mu^2}>\frac{1}{4};\]
so that the mechanism achieves a $1/4$-fraction of $\opt(F)$ in this case as well.
\end{proof}

The proposition above implies that the lower bound from
\cref{thm:lower-bound-randomized} would break down, if the adversary is restricted to the family of two-point mass distributions with exact variance of
$\sigma ^2$.

\section{Multiple Items}\label{sec:many-items}

In this section we finally consider the more general setting of a single additive
buyer with valuations for $m$ items. As it turns out, the main tools developed in
\cref{sec:randomized-upper-single} can be leveraged very naturally to produce
similar upper and lower bounds. We begin by proving upper bounds for both correlated
and independent item valuations.

\begin{theorem} \label{th:randomized-upper-many}
The robust approximation ratio of selling $m$ (possibly correlated)
$(\vec\mu,\vec\sigma)$-distributed items is at most
$$
\apx(\vec\mu,\vec\sigma) 
\leq  \rho(r_{\max}),
\qquad\qquad\text{where}\;\;\; r_{\max} =
\max_{j=1,\dots,m}r_j,\;\; r_j = \frac{\sigma_j}{\mu_j}
$$
and function $\rho$ is given in~\cref{def:function-upper}. This is achieved by
selling each item $j$ \emph{separately} using the log-lottery
$\plog_{\mu_j,\sigma_j}$ from~\cref{def:log-pricing}.

Furthermore, if the items are \emph{independently} distributed, the above bound
improves to
	$$
	\apx(\vec\mu,\vec\sigma) \leq \rho (\bar{r}), \qquad\qquad\text{where}\;\;\; \bar{r}=
	\frac{\bar{\sigma}}{\bar{\mu}},\;\; \bar{\mu}=\sum_{j=1}^m \mu_j,\;\; \bar{\sigma}=\sqrt{\sum_{j=1}^m \sigma_j^2},
	$$
	achieved by selling the items in a single \emph{full-bundle} using the
	log-lottery $\plog_{\bar{\mu},\bar{\sigma}}$ from~\cref{def:log-pricing}.
\end{theorem}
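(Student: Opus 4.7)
The plan is to reduce both parts of the theorem to the single-item analysis of \cref{th:randomized-upper-single,prop:vanilla_carrasco}, exploiting the additivity of the buyer together with the trivial welfare upper bound $\opt(F)\leq\val(F)=\sum_j\mu_j$.

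For the (possibly correlated) case, I would consider the mechanism $A$ that offers each item $j$ independently using the single-item log-lottery $\plog_{\mu_j,\sigma_j}$. This is truthful for an additive buyer because the buyer's utility decomposes coordinate-wise, so truth-telling is optimal in each component. By linearity of expectation, the total expected revenue decomposes into the sum of single-item revenues against each marginal $F_j$. Invoking \cref{prop:vanilla_carrasco} for each item, this sum is at least $\sum_j \pi_1^{(j)}=\sum_j \mu_j/\rho(r_j)$, where the second equality is the relation $\rho=\mu/\pi_1$ already established in the proof of \cref{th:randomized-upper-single}. Since $\rho$ is monotonically increasing in $r$ (an elementary check from the defining equation of \cref{def:function-upper}), each term is at least $\mu_j/\rho(r_{\max})$; summing and comparing with the welfare bound $\opt(F)\leq\sum_j\mu_j$ yields the desired ratio $\rho(r_{\max})$.

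For the independent case, I would collapse the items into a single ``virtual'' bundle of random value $V=\sum_j v_j$. By independence, $V$ has mean $\bar\mu$ and variance $\sum_j\mathrm{Var}(v_j)\leq \bar\sigma^2$, so its distribution lies in $\fbb_{\bar\mu,\bar\sigma}$. The full-bundle log-lottery $\plog_{\bar\mu,\bar\sigma}$ draws a random price $p$ and awards all items precisely when $V\geq p$; this is truthful for an additive buyer by a standard argument (the allocation depends only on whether $V\geq p$, and misreporting cannot flip this event in the buyer's favor). The resulting revenue is exactly the single-item log-lottery revenue against the distribution of $V$, which by \cref{prop:vanilla_carrasco} is at least $\bar\mu/\rho(\bar r)$. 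Together with $\opt(F)\leq\val(F)=\bar\mu$, this gives the bound $\rho(\bar r)$.

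The only subtle points are (i) verifying the monotonicity of $\rho$, which follows by implicit differentiation of $g(\rho):=(2e^{\rho-1}-1)/\rho^2=r^2+1$ and checking $g'(\rho)>0$ for $\rho>1$; (ii) the truthfulness of the composite mechanisms, which reduces to additivity; and (iii) the fact that \cref{prop:vanilla_carrasco} only requires an \emph{upper} bound on the variance, which is exactly what independence buys us for $V$, and what we already have for each marginal $F_j$ in the general case. None of these pose a serious obstacle; the only genuinely clever step is recognizing that bundling converts the $\max_j$ aggregation of variances into the more favorable sum $\bar\sigma^2=\sum_j\sigma_j^2$ whenever independence is available.
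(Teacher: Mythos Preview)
Your proposal is correct and follows essentially the same route as the paper: welfare bound $\opt(F)\leq\sum_j\mu_j$, per-item revenue $\geq\mu_j/\rho(r_j)$ from the single-item result, and monotonicity of $\rho$; then for the independent case, bundle into $V=\sum_j v_j\in\fbb_{\bar\mu,\bar\sigma}$ and apply the single-item bound once more. The one small omission is that the theorem asserts the independent bound \emph{improves} on the general one, so you should also include the short verification $\bar r\leq r_{\max}$ (which the paper does by writing $\bar r\leq\sum_j\mu_jr_j/\sum_j\mu_j\leq r_{\max}$).
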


\begin{proof}
Let $X_j$, $j=1,\dots,m$, be $(\mu_j,\sigma_j)$-distributed random variables
corresponding to the marginals of the joint $m$-dimensional valuation distribution
$F$. Their sum $Y=\sum_{i=1}^m X_i$ has an expected value of
$\expect{Y}=\sum_{j=1}^m\mu_j=\bar{\mu}=\val(F)$. Furthermore, if $X_1,\dots,X_j$
are independent, its variance is
$\var{Y}=\sum_{j=1}^m\var{X_j}\leq\sum_{j=1}^m\sigma_j^2=\bar{\sigma}^2$. Denote the
distribution of $Y$ by $F_Y$.
Also, recall that the optimal revenue of $F$ cannot exceed the expected welfare,
thus we have the trivial upper bound of
\begin{equation*}
\label{eq:upper_bound_opt_many_trivial}
\opt(F) \leq \val(F) = \sum_{j=1}^m\mu_j,
\end{equation*} 
no matter if the distributions are independent or not.
	
For our general upper bound first, observe that selling item $j$ using a lottery
$A_j$, where $A_j=\plog_{\mu_j,\sigma_j}$ is the log-lottery
of~\cref{def:log-pricing}, guarantees (\cref{th:randomized-upper-single}) a revenue
of at least
\begin{equation}
    \label{eq:helper7}
\rev(A_j;F_j) \geq \frac{\mu_j}{\rho(r_j)}.
\end{equation}
Thus, if $A$ is the mechanism that sells independently each item $j$ using $A_j$, we
can get the following approximation ratio upper bound for our total revenue 
$$
\frac{OPT(F)}{\rev(A;F)} =
\frac{OPT(F)}{\sum_{j=1}^m \rev(A_j;F_j)}
\leq \frac{\sum_{j=1}^m \mu_j}{\sum_{j=1}^m \frac{\mu_j}{\rho(r_j)}}
\leq \rho(r_{\max}),
$$
where the last inequality holds due to the monotonicity of $\rho(\cdot)$: $\rho(r_j)\leq \rho (r_{\max}) $ for all $j$.

For the case of independent valuations, observe that a feasible selling mechanism
for our items is to bundle them all together and treat them as a single item, i.e.\
price their sum of valuations $Y$. Since $Y$ is
$(\bar{\mu},\bar{\sigma})$-distributed, offering a log-lottery
$A=\plog_{\bar{\mu},\bar{\sigma}}$ for $Y$ results in an approximation ratio
guarantee of
$$
\apx(\vec\mu,\vec\sigma) \leq
\frac{OPT(F)}{\rev(A;F_Y)} \leq
\frac{\expect{Y}}{\frac{1}{\rho(\bar{r})}\expectsmall{Y}}
=\rho(\bar{r}),
$$
for $\bar{r}=\bar{\sigma}/\bar{\mu}$.

Finally, to verify that $\rho(\bar{r})\leq\rho(r_{\max})$, due to the monotonicity
of $\rho(\cdot)$ it is enough to see that
$$
\bar{r}
=\frac{\bar{\sigma}}{\bar{\mu}}
=\frac{\left(\sum_{j=1}^m\sigma_j^2\right)^{1/2}}{\bar{\mu}}
\leq \frac{\sum_{j=1}^m\sigma_j}{\bar{\mu}}
= \frac{\sum_{j=1}^m\mu_jr_j}{\sum_{j=1}^m\mu_j}
$$
is a weighted average of $r_1,r_2,\dots,r_m$, and thus at most $r_{\max}$.
\end{proof}

\begin{corollary}\label{th:upper-bound-many-iid}
The robust approximation ratio of selling $m$ independently $(\mu,\sigma)$-distributed items is at most
	$$
	\apx(\vec\mu,\vec\sigma) \leq \rho\left(\frac{r}{\sqrt{m}}
	\right),
	$$
	where $r=\sigma/\mu$,
	achieved by selling the items in a single \emph{full-bundle} using the mechanism
	given in~\cref{th:randomized-upper-single}.
\end{corollary}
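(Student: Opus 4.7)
The plan is to observe that this corollary is an immediate specialization of the independent-items part of \cref{th:randomized-upper-many} to the case when all marginals share the same mean $\mu$ and the same (upper bound on the) standard deviation $\sigma$. No new argument is needed; only a short calculation of $\bar r$ under the i.i.d.\ assumption.

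Concretely, I would first apply the second part of \cref{th:randomized-upper-many}, which guarantees that selling the items as a single full bundle via the log-lottery $\plog_{\bar\mu,\bar\sigma}$ achieves approximation ratio at most $\rho(\bar r)$, with $\bar\mu=\sum_{j=1}^m\mu_j$ and $\bar\sigma=\sqrt{\sum_{j=1}^m\sigma_j^2}$. Substituting $\mu_j=\mu$ and $\sigma_j=\sigma$ for every $j$ gives
\[
\bar\mu = m\mu,\qquad \bar\sigma = \sqrt{m\sigma^2} = \sigma\sqrt{m},
\]
so
\[
\bar r \;=\; \frac{\bar\sigma}{\bar\mu} \;=\; \frac{\sigma\sqrt{m}}{m\mu} \;=\; \frac{1}{\sqrt{m}}\cdot\frac{\sigma}{\mu} \;=\; \frac{r}{\sqrt{m}}.
\]
Plugging this value of $\bar r$ into the bound $\rho(\bar r)$ yields the claimed ratio $\rho\!\left(r/\sqrt{m}\right)$.

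There is essentially no technical obstacle here: independence is used exactly as in the parent theorem (so that the variance of the sum of item valuations equals the sum of variances, hence $Y=\sum_j v_j$ is $(\bar\mu,\bar\sigma)$-distributed and the log-lottery $\plog_{\bar\mu,\bar\sigma}$ can be applied to the bundle). The only thing worth emphasizing in the write-up is that, because $\rho$ is monotone, the improvement over the ``sell separately'' bound $\rho(r_{\max})=\rho(r)$ is genuine and in fact shrinks the approximation ratio to $1$ as $m\to\infty$, which is the main qualitative content of the corollary.
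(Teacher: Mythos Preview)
Your proposal is correct and follows essentially the same approach as the paper: specialize the independent-items bound of \cref{th:randomized-upper-many} by plugging in $\mu_j=\mu$, $\sigma_j=\sigma$ to get $\bar\mu=m\mu$, $\bar\sigma=\sqrt{m}\sigma$, and hence $\bar r=r/\sqrt{m}$.
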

\begin{proof}
In the proof of~\cref{th:randomized-upper-many}, if $X_1,\dots,X_m$ are independent random
variables with mean $\mu$ and standard deviation at most $\sigma$, then for their sum $Y$ we
have $\bar{\mu}=m\cdot \mu$ and $\bar{\sigma}\leq\sqrt{m\sigma^2}=\sqrt{m}\sigma$.
\end{proof}

\begin{remarknonum}
For deterministic mechanisms, it is not difficult to see that the robust
approximation ratio of selling $m$ (possibly correlated)
$(\vec\mu,\vec\sigma)$-distributed items is at most $\dapx(\vec{\mu},\vec{\sigma})
\leq \tilde{\rho}(r_{\max})$ (where $\tilde\rho$ is given
in~\cref{sec:det_azar_micali}); just replace $\rho$ by $\tilde\rho$ in the proof
of~\cref{th:randomized-upper-many}. In particular, the validity
of~\eqref{eq:helper7} is implied by~\eqref{eq:azar-micali-approx}.
\end{remarknonum}

We make a few observations at this point. Notice that when moving from a single item
to many items, our approximation guarantees do not degrade; in particular, the
robust approximation ratio is at most that of the ``worst'' item (i.e.\ the item
with the highest coefficient of variation). In fact, for $m$ independently
$(\mu,\sigma)$-distributed items the approximation ratio even converges to
optimality (\cref{th:upper-bound-many-iid}); this can be seen as a reinterpretation
of the known result that full-bundling is asymptotically  optimal for an additive
bidder and many i.i.d.\ items (see \citet[A.5.]{Hart:2017aa}), but in our framework
of minimal statistical information.

Although the mechanisms presented in \cref{th:randomized-upper-many} are extremely
simple (lotteries over separate pricing or bundle pricing), we can actually show
asymptotically matching lower bounds for \emph{any} choice of the coefficients of
variation:
\begin{theorem}\label{th:lower-randomized-no-iid}
Fix any positive integer $m$ and positive real numbers $r_1,\ldots,r_m$, and let
$r=\max_j r_j$. Then, for any $\varepsilon>0$, there exist
$\vec\mu=(\mu_1,\dots,\mu_m)\in\R_{>0}^m$,
$\vec\sigma=(\sigma_1,\dots,\sigma_m)\in\R_{\geq 0}^m$ with $r_j=\sigma_j/\mu_j$,
such that
$$
\apx(\vec\mu,\vec\sigma)\geq 1 -\varepsilon +\ln(1+r^2).
$$
Furthermore, this lower bound is achieved by \emph{independent} $(\mu_j,\sigma_j)$-distributions.
\end{theorem}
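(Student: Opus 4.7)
My plan is to reduce the multi-item lower bound to the single-item construction of \cref{thm:lower-bound-randomized}, by ``sacrificing'' all but one coordinate. Let $j^*\in\arg\max_j r_j$ (so $r_{j^*}=r$), and take $\mu_{j^*}=M$ for a large parameter $M$ to be sent to infinity, while $\mu_j=1$ for $j\neq j^*$; the forced values $\sigma_j=r_j\mu_j$ then give $\sigma_{j^*}=rM$. The adversary's mixture will sample the product
\[F_\theta\;=\;F^{(j^*)}_\theta\,\times\prod_{j\neq j^*}\delta_{\mu_j},\]
where $F^{(j^*)}_\theta$ ranges over exactly the two-point distributions used in the proof of \cref{thm:lower-bound-randomized}, rescaled to mean $M$ and variance at most $r^2M^2$, and every other coordinate is a Dirac mass at its mean. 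The Dirac marginals have standard deviation $0\leq\sigma_j$, which is admissible since $\sigma_j$ only upper-bounds the standard deviation in the class $\fbb_{\vec\mu,\vec\sigma}$; and each $F_\theta$ is a product, so the adversary's worst-case realization is an \emph{independent} distribution as required.

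The next step is to invoke the multi-item analogue of \cref{lem:yao_lem_mixtures}: the linearity-of-revenue argument used there extends verbatim to $m$ dimensions (if $\opt(F_\theta)/\rev(A;F_\theta)<\beta$ for all $\theta$, then taking expectations gives $\mathbb{E}_\theta[\opt(F_\theta)]<\beta\,\rev(A;\bar F)$, using $\mathbb{E}_\theta[\rev(A;F_\theta)]=\rev(A;\bar F)$), yielding
\[\apx(\vec\mu,\vec\sigma)\;\geq\;\frac{\mathbb{E}_\theta[\opt(F_\theta)]}{\opt(\bar F)},\qquad \bar F\;=\;G\,\times\prod_{j\neq j^*}\delta_{\mu_j},\]
where $G=\mathbb{E}_\theta[F^{(j^*)}_\theta]$ is the truncated equal-revenue posterior from the proof of \cref{thm:lower-bound-randomized}, with $\opt(G)=M/(1+\ln(1+r^2))$. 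Note $\bar F$ is automatically a product because only the $j^*$-marginal varies with $\theta$.

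The main technical step --- and the one I expect to be the real obstacle --- is a \emph{Dirac factorization} of optimal revenue:
\[\opt\Bigl(H\,\times\prod_{j}\delta_{c_j}\Bigr)\;=\;\opt(H)\,+\,\sum_{j}c_j\qquad\text{for any single-item }H\text{ and }c_j\geq0.\]
The ``$\geq$'' direction is immediate: offer each Dirac item at take-it-or-leave-it price $c_j$ (always accepted) and run Myerson on the remaining coordinate. For ``$\leq$'', fix any truthful $(x,\pi)$; since each Dirac marginal has single-point support, we may restrict attention to mechanism ``slices'' $v\mapsto(x(v),\pi(v))$ evaluated at the realized Dirac values. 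Redefining $\tilde\pi(v)=\pi(v)-\sum_j c_j\,x_j(v)$, the multi-item IC/IR constraints on $(x,\pi)$ translate, after this substitution, exactly into single-item IC/IR on $(x_{j^*},\tilde\pi)$ over $H$; hence $\mathbb{E}[\tilde\pi]\leq\opt(H)$ and therefore $\mathbb{E}[\pi]=\mathbb{E}[\tilde\pi]+\sum_j c_j\mathbb{E}[x_j(v)]\leq\opt(H)+\sum_j c_j$.

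Applying the factorization to each realization gives $\opt(F_\theta)=\opt(F^{(j^*)}_\theta)+(m-1)=M+(m-1)$ (the two-point $F^{(j^*)}_\theta$ has $\opt$ equal to its mean $M$), and applied to the posterior it gives $\opt(\bar F)=M/(1+\ln(1+r^2))+(m-1)$. Substituting,
\[\apx(\vec\mu,\vec\sigma)\;\geq\;\frac{M+(m-1)}{M/(1+\ln(1+r^2))+(m-1)}\;\xrightarrow[M\to\infty]{}\;1+\ln(1+r^2),\]
so for any prescribed $\varepsilon>0$ a sufficiently large choice of $M$ delivers the required inequality.
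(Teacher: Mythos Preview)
Your proof is correct and takes a genuinely different, cleaner route than the paper's.

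The paper fixes $\mu_1=1$ and shrinks the remaining means to $\delta/(m-1)$, giving each item $j\geq 2$ a two-point mass distribution with \emph{exact} variance $\sigma_j^2$. Because these marginals are random, the argument must condition on every realization pattern $S\subseteq\{2,\dots,m\}$, build an induced single-item mechanism $A_S$ for each $S$, apply \cref{thm:lower-bound-randomized} to the convex combination $\bar A=\sum_S p_S A_S$, invoke the Hart--Nisan revenue-decomposition inequality, and then perform a case split on whether $\rev(A;F)$ exceeds a threshold. You instead blow up $\mu_{j^*}=M\to\infty$ and collapse all other marginals to Dirac masses at their means (legal since $\sigma_j$ is only an upper bound). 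Because the non-$j^*$ coordinates are now deterministic, there is only one ``slice'' to consider and your Dirac factorization $\opt(H\times\prod_j\delta_{c_j})=\opt(H)+\sum_j c_j$ replaces both the subset conditioning and the case analysis; the multi-item Yao step then goes through exactly as in the single-item case. What you lose relative to the paper is that your small items have variance $0$ rather than the prescribed $\sigma_j^2$, but the paper's own construction already uses variance-$\leq\sigma_1^2$ distributions for item $1$ (inherited from \cref{thm:lower-bound-randomized}), so neither proof achieves exact variances everywhere and the stated theorem is identical. One minor point worth tightening: your shifted payment $\tilde\pi$ can be negative (at $v_{j^*}=0$, IR only gives $\tilde\pi(0)\leq 0$), so $(x_{j^*},\tilde\pi)$ need not lie in $\abb_1$ as defined; but since any IC/IR single-item mechanism earns at most the revenue of its Myerson-payment counterpart with the same allocation rule, the bound $\mathbb{E}[\tilde\pi]\leq\opt(H)$ still holds.
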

\begin{proof}
Let $m,r_1,\ldots,r_m,\varepsilon$ be as in the statement of the theorem, and
without loss assume $\max_j r_j=r_1$. Let $\delta>0$ be chosen such that
$\delta\ln(1+r^2)(1+\ln(1+r^2))^2<\varepsilon$. We shall choose the values for the
mean and variance as
\begin{align*}\mu_1=1,&\quad \sigma_1=r_1,\\
\mu_j=\frac{\delta}{m-1},&\quad\sigma_j=r_j\frac{\delta}{m-1}\quad\text{for }j\geq 2.\end{align*}

The idea is that we create a ``bad'' instance in which items $2,\ldots,m$ are rare
event distributions with very little welfare and so their contribution to the
revenue will be negligible. To that end, we must first introduce some notation. For
every item $j\geq 2$, denote
\[p_j=\frac{1}{1+r_j^2},\quad\alpha_j=(1+r^2_j)\frac{\delta}{m-1},\]
and for every $S\subseteq\{2,\ldots,n\}$, i.e.\ for every subset of the ``low'' items,
\[p_S=\prod_{j\in S} p_j \cdot \prod_{j\not\in S\union\ssets{1}}(1-p_j).\]
Also, define the event
$$
E_S = \left[\bigland_{j\in S}{(v_j=\alpha_j)}\right]\wedge\left[\bigland_{j\notin S\union\ssets{1}}{(v_j=0)}\right].
$$

Next, let $A$ be any $m$-dimensional truthful mechanism, i.e.\ a mechanism for
selling $m$ items to a single bidder. For each $S\subseteq\{2,\ldots,n\}$, let $A_S$
be the 1-dimensional mechanism induced by event $E_S$; intuitively, this mechanism
allocates according to $A$ with the values $v_j$ set as in $E_S$, but discounting
the payment by the welfare from items in $S$. Formally, if $A$ is defined by
allocation and payment rules, $A=(\vec{x},\pi)$, then $A_S=(x_S,\pi_S)$ can be
defined as
\[x_S(v_1)=x_1(v_1,\vec{v}_{-1}),\qquad\pi_S(v_1)=\pi(v_1,\vec{v}_{-1})-\vec{v}_{-1}\cdot
\vec{x}_{-1}(0,\vec{v}_{-1}),\] where $\vec{v}_{-1}=(v_2,\ldots,v_m)$ and, for
$j\geq 2$, we have $v_j=\alpha_j$ if $j\in S$; and $v_j=0$ if $j\not\in S$. One can
directly check that $A_S$ defines a truthful mechanism.

Now define $\bar{A}=\sum_S p_SA_S$ to be the convex combination of mechanisms $A_S$.
This can be interpreted as the one-dimensional mechanism that samples a subset
$S\subseteq\{2,\ldots,n\}$ with probability $p_S$ and then runs mechanism $A_S$.
Finally, we apply \cref{thm:lower-bound-randomized} that ensures the existence of a
``bad'' single-item distribution for mechanism $A_S$, i.e.\ a distribution $F_1$
with mean $\mu_1$ and standard deviation $\sigma_1$ such that

\begin{equation}
\label{eq:distro_1_helper_lower_multi}
\rev(\bar{A};F_1) \leq \frac{\opt(F_1)}{1+\ln(1+r^2)} .
\end{equation}

Each of the remaining distributions, $F_j$ for $j=2,\dots,m$, is a rare event
distribution that assigns a mass of $p_j$ on value $\alpha_j$, and a mass of $1-p_j$
on value $0$. It is not hard to see that $F_j$ has the desired mean of $\mu_j$ and
variance of $\sigma_j^2$. To conclude the proof, let $F=F_1\times\cdots\times F_m$
be the product distribution corresponding to item-independent valuations; it only
remains to show that
\[\frac{\opt(F)}{\rev(A;F)}\geq 1-\varepsilon+\ln(1+r_1^2).\]

We first recall a standard revenue-decomposition inequality (see the proof of \
\citet[Lemma~8]{Hart:2017aa}). For any $S\subseteq\{2,\ldots,n\}$, we know that
\[\rev(A;F_1\times \cdots\times F_m|E_S)\leq\rev(A_S;F_1)+\val(F_2\times\cdots\times F_m|E_S).\]
By the construction of our two-point mass distributions $F_j$, $j\geq 2$, we know
that $E_S$ form a partition of all possible valuation profiles, each event occurring
with probability $p_S$; in this way, we can sum over the conditional expected
revenues,
\begin{align}
    \rev(A;F)&=\sum_{S}p_S\rev(A;F_1\times \cdots\times F_m|E_S)\nonumber \\
    &\leq\sum_Sp_S\left(\rev(A_S;F_1)+\val(F_2\times\cdots\times F_m|E_S)\right)\nonumber \\
    &=\rev\left(\sum_Sp_SA_S;F_1\right)+\sum_Sp_S\val(F_2\times\cdots\times F_m|E_S)\nonumber \\
    &\leq \frac{\opt(F_1)}{1+\ln(1+r^2)}+\val(F_2\times\cdots\times F_m).\label{eq:lower_rand_mult_1}
\end{align}

Next, we consider two cases. If $\rev(A;F)\leq\frac{1}{(1+\ln(1+r^2))^2}$, then
recall that by the mechanism presented in \cref{th:randomized-upper-single} one can
extract revenue of at least $\frac{1}{1+\ln(1+r^2)}$ from $F_1$, hence
\[\frac{\opt(F)}{\rev(A;F)}\geq\frac{1/(1+\ln(1+r^2))}{1/(1+\ln(1+r^2))^2}=1+\ln(1+r^2).\]

Hence we can assume that $\rev(A;F)\geq\frac{1}{(1+\ln(1+r^2))^2}$. Note that by
selling the items separately, and in particular using a price of $\alpha_j$ for
items $j=2,\dots,m$ we can lower bound the optimal revenue by
\begin{equation}
\label{eq:opt_bound_lower_multi_helper}
\opt(F_1,F_2,\dots,F_m) \geq \opt(F_1) + \sum_{j=2}^m \opt(F_j)=\opt(F_1) + \val(F_2\times\cdots\times F_m).
\end{equation}
Using this bound, together with the derivation in \eqref{eq:lower_rand_mult_1} and
the fact that $\val(F_2\times\cdots\times F_m)=\delta$, yields
\begin{align*}
    \frac{\opt(F)}{\rev(A;F)}&\geq \frac{\opt(F_1)+\delta}{\rev(A;F)}\\
    &\geq\frac{(1+\ln(1+r^2))(\rev(A;F)-\delta)+\delta}{\rev(A;F)}\\
    &=1+\ln(1+r^2)-\delta\frac{\ln(1+r^2)}{\rev(A;F)}\\
    &\geq 1+\ln(1+r^2)-\delta\ln(1+r^2)(1+\ln(1+r^2))^2\\
    &\geq 1+\ln(1+r^2)-\varepsilon,
\end{align*}
as we wanted to prove.
\end{proof}

One observation at this point is that our result for multiple items is in line with the main result of \citet{Carroll:2017aa}, but for the robust approximation ratio objective and in our framework of minimal statistical information. \citeauthor{Carroll:2017aa} also considers a multi-dimensional setting with $m$ items and a single additive buyer. In contrast to ours, the seller has full knowledge of the marginal distributions (but again does not know the joint distribution) and wants to optimize the maximin expected revenue. A crucial common point with our model is that the seller knows nothing about the correlation between the items. Similar to our main result, he proves that selling the items separately is maximin optimal. In other words, with no information regarding correlations, the seller chooses to \emph{never} bundle items. A possible interpretation of this result is the following: We know that for some correlation structures, bundling works fine, while for others, it can be very bad. Thus, the seller, who wants to be robust against an unknown, possibly correlated joint distribution, might hesitate to sell as a single unit items with no information about their correlation. At the same time, the seller can calculate the expected revenue from selling each item separately in Carroll's model. Combining these two facts intuitively makes selling separately a natural candidate for maximin optimality of the expected revenue. Our result supports this interpretation for the ratio objective and partial distributional knowledge of the marginals. Even when facing uncertainty for the revenue from a single item, the seller still chooses not to bundle items when the correlation structure is entirely unknown.  

\section{Further Results}\label{sec:further_results}

\subsection{Parametric Auctions with Lazy Reserves}\label{sec:parametric}

In this section, we present (\cref{th:lazy_many_items}) an additional immediate
consequence of our results to the setting of \citet{Azar2013}. Since this is not the
main focus of our work, we refer to the above papers, as well as
\citet[Ch.~4]{Hartlinea} for formal definitions. The key components are that we
consider a single-dimensional, matroid-constrained environment with $n$ bidders,
meaning that the set of feasible allocations forms a matroid over $\{1,\ldots,n\}$.
A class of mechanisms of particular interest are called Lazy-VCG with reserve prices
$(P_1,\ldots,P_n)$, where $P_1,\ldots,P_n$ are nonnegative random variables. This
auction works by first selecting a welfare-maximizing set $W$ of candidate winners
(i.e.\ running a VCG auction) and then offering to an agent $i\in W$ a
take-it-or-leave-it price sampled according to $P_i$. An important result in this
setting is the following black-box reduction from many bidders to one bidder with
good performance guarantees (see also \citet[Thm.~A.3]{Chawla2014a}):

\begin{theorem}[\citet{Azar2013}]
\label{th:lazy_many_daskalakis}
Assume a single-dimensional, matroid-constrained environment with $n$ bidders having
valuations drawn independently from regular distributions $F_1,F_2,\dots,F_n$. If
$P_1,\dots,P_n$ are nonnegative random variables such that for all
players $i$
\begin{equation*}
\expect[p\sim P_i]{\rev(p;F_i)} \geq c_1\cdot \opt(F_i)\qquad\text{and}\qquad
\expect[p\sim P_i]{\wel(p;F_i)} \geq c_2\cdot \val(F_i)
\end{equation*}
for constants $c_1,c_2\in[0,1]$, then Lazy-VCG with random reserves
$(P_1,\dots,P_n)$ guarantees (in expectation) a $\frac{1}{2}c_1$-fraction of the
optimal revenue and a $c_2$-fraction of the optimal welfare.
\end{theorem}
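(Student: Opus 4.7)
My plan is to reduce the $n$-bidder matroid setting to per-bidder statements about each $F_i$, handling welfare and revenue separately. The crucial structural input is the matroid exchange property: for any fixed profile $v_{-i}$ of the other bidders, the VCG winner set includes bidder $i$ iff $v_i \geq t_i(v_{-i})$, where $t_i(v_{-i})$ is a threshold determined by $v_{-i}$ and the matroid. Combined with independence across the $F_i$'s, this turns both the revenue and welfare of Lazy-VCG-with-reserves into sums of ``single-bidder at threshold'' contributions, which can then be related to the per-bidder hypotheses about $P_i$.

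For the welfare bound, Lazy-VCG first picks the VCG winner set and then filters out winners failing the reserve test, so its expected welfare equals $\sum_i \expectation_{v_{-i}}\!\left[\expectation_{v_i,p_i}[v_i \cdot \mathbf{1}\{v_i \geq t_i(v_{-i})\} \cdot \mathbf{1}\{v_i \geq p_i\}]\right]$, which by independence simplifies to $\sum_i \expectation_{v_{-i}}\!\left[\expectation_{p_i}[\wel(\max(t_i(v_{-i}), p_i); F_i)]\right]$. I would prove that, for every $v_{-i}$ and threshold $t$, $\expectation_{p_i}[\wel(\max(t, p_i); F_i)] \geq c_2\cdot \wel(t; F_i)$. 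Summing this inequality then yields welfare at least $c_2\cdot \val(F)$, since by the threshold characterization $\val(F) = \sum_i \expectation_{v_{-i}}[\wel(t_i(v_{-i}); F_i)]$. Regularity of $F_i$ is precisely the tool needed to boost the hypothesis $\expectation_{p_i}[\wel(p_i; F_i)] \geq c_2 \mu_i$ (the $t = 0$ case) to the per-threshold version, via concavity of the revenue-quantile curve.

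For the revenue bound, I would invoke Hartline--Roughgarden's equivalence: in matroid environments, Lazy- and Eager-VCG with identical (possibly random) reserves yield the same expected revenue. Eager-VCG first discards bidders failing $v_i \geq p_i$ and then runs VCG on the survivors, so its revenue decomposes bidder-by-bidder. Applying a Dhangwatnotai--Roughgarden--Yan style argument per bidder---which, using regularity, compares reserve-based revenue under matroid competition to $\tfrac{1}{2}$ times the single-bidder Myerson revenue on $F_i$---yields a contribution of at least $\tfrac{1}{2}\expectation_{p_i}[\rev(p_i; F_i)] \geq \tfrac{c_1}{2}\opt(F_i)$. Summing across bidders and combining with the matroid-level bound $\opt(F)\leq \sum_i \opt(F_i)$ (in the appropriate sense provided by Myerson's theory under regularity) delivers the $\tfrac{c_1}{2}$-fraction guarantee against the optimal Myersonian revenue.

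The hard part will be the welfare step: the hypothesis controls $\wel(\cdot; F_i)$ only at threshold $0$, while the reduction demands control at every threshold $t_i(v_{-i})$ induced by the other bidders. The likely remedy is to exploit regularity of $F_i$ via a monotonicity or concavity property of the revenue-quantile curve so that truncating at $t$ scales welfare uniformly by the factor $c_2$; alternatively, one could bypass the per-threshold inequality with a prophet-inequality-style coupling argument, which is the route commonly taken in matroid reductions in the mechanism-design literature.
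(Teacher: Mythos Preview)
The paper does not prove this theorem. It is quoted verbatim as a black-box result from \citet{Azar2013} (with a pointer to \citet[Thm.~A.3]{Chawla2014a}) and is then used only to derive \cref{th:lazy_many_items} as an immediate corollary. So there is no ``paper's own proof'' to compare against; your proposal is effectively an attempt to reconstruct the argument of the cited source, not of the present paper.

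That said, your outline is broadly in the spirit of how such reductions are proved in the literature: you correctly identify the matroid threshold structure of VCG, the Lazy/Eager equivalence (Hartline--Roughgarden), and the Dhangwatnotai--Roughgarden--Yan single-bidder comparison under regularity. Two places where your sketch is loose: (i) in the revenue part, the inequality you invoke, ``$\opt(F)\leq\sum_i\opt(F_i)$ in the appropriate sense,'' is not how the original argument runs---the comparison is made directly against Myerson's virtual-welfare decomposition on the matroid, not via a crude per-bidder upper bound on $\opt(F)$; and (ii) in the welfare part, you yourself flag that the hypothesis only controls $\wel(p_i;F_i)$ at threshold $0$, and your proposed fix (``regularity gives concavity that scales welfare uniformly by $c_2$ at every threshold $t$'') is not obviously correct as stated. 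In the cited works this is handled more carefully, typically by comparing the eager mechanism's per-bidder welfare contribution to the VCG contribution via the fact that filtering by a reserve never removes a bidder whose value exceeds both the reserve and the threshold.
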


As an immediate consequence, since our log-lotteries
from~\cref{sec:randomized-upper-single} satisfy the conditions of
\cref{th:lazy_many_daskalakis} with a suitable choice of $c_1,c_2$, we get the
following:
\begin{corollary}\label{th:lazy_many_items}
Assume a single-dimensional, matroid-constrained environment with $n$ bidders having
independent regular valuations with mean $\mu_{i}$ and standard deviation
$\sigma_{i}$. Then Lazy-VCG with a reserve for player $i$ drawn from the log-lottery
$\plog_{\mu_i,\sigma_i}$ (see~\cref{def:log-pricing}) guarantees a
$2\rho(r)$-approximation to the optimal revenue and a $\rho(r)$-approximation to the
optimal welfare, where $r=\max_{i}\frac{\sigma_{i}}{\mu_{i}}$ and function
$\rho(\cdot)$ is defined in~\cref{def:function-upper}.
\end{corollary}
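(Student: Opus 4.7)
The plan is to apply the black-box reduction of \cref{th:lazy_many_daskalakis} with $P_i = \plog_{\mu_i,\sigma_i}$, so the task reduces to verifying its two hypotheses with suitable constants $c_1,c_2$. Fix a bidder $i$ and let $r_i = \sigma_i/\mu_i$. By \cref{th:randomized-upper-single} (or more concretely, by \cref{prop:vanilla_carrasco}) the log-lottery guarantees
\[
\expectation_{p\sim \plog_{\mu_i,\sigma_i}}\!\left[\rev(p;F_i)\right] \;\geq\; \frac{\mu_i}{\rho(r_i)}\;=\;\frac{\val(F_i)}{\rho(r_i)}\;\geq\;\frac{\val(F_i)}{\rho(r)},
\]
where the last inequality uses the monotonicity of $\rho$ (\cref{def:function-upper}) together with $r_i\leq r$. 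Since $\val(F_i)=\mu_i$ is a welfare upper bound on $\opt(F_i)$, this immediately gives $\expectation_{p}[\rev(p;F_i)] \geq \frac{1}{\rho(r)}\opt(F_i)$, so the first hypothesis of \cref{th:lazy_many_daskalakis} holds with $c_1 = 1/\rho(r)$.

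For the welfare hypothesis, I use the elementary fact that welfare dominates revenue pointwise (individual rationality), so
\[
\expectation_{p\sim \plog_{\mu_i,\sigma_i}}\!\left[\wel(p;F_i)\right] \;\geq\; \expectation_{p\sim \plog_{\mu_i,\sigma_i}}\!\left[\rev(p;F_i)\right] \;\geq\; \frac{\val(F_i)}{\rho(r)},
\]
which establishes the second hypothesis with $c_2 = 1/\rho(r)$.

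Plugging $c_1=c_2=1/\rho(r)$ into \cref{th:lazy_many_daskalakis} yields that Lazy-VCG with reserves drawn from the log-lotteries $\plog_{\mu_i,\sigma_i}$ extracts at least a $\frac{1}{2\rho(r)}$-fraction of the optimal revenue and at least a $\frac{1}{\rho(r)}$-fraction of the optimal welfare, which is exactly the $2\rho(r)$ and $\rho(r)$ approximation ratios in the statement. There is no genuine obstacle here beyond the bookkeeping of the two constants; the entire corollary is a direct combination of our single-item log-lottery guarantee with the reduction of \citet{Azar2013}, and the only care needed is to pass through $\val(F_i)=\mu_i$ so that the same constant $1/\rho(r)$ serves both as a revenue-vs-revenue bound and as a welfare-vs-welfare bound.
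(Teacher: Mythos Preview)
Your proof is correct and mirrors the paper's own argument essentially verbatim: set $c_1=c_2=1/\rho(r)$ by combining the single-item guarantee $\rev(\plog_{\mu_i,\sigma_i};F_i)\geq \mu_i/\rho(r_i)\geq \mu_i/\rho(r)$ (using monotonicity of $\rho$) with $\val(F_i)=\mu_i\geq\opt(F_i)$ for the revenue hypothesis, and with $\wel\geq\rev$ for the welfare hypothesis, then invoke \cref{th:lazy_many_daskalakis}.
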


\begin{proof}
Take $c_1=c_2=\frac{1}{\rho(r)}\leq \frac{1}{\rho(\sigma_i/\mu_i)}$ for all $i$.
Note that the welfare bounds come ``for free'' since for any mechanism $A\in\abb_1$
we have $\wel(A;F_i)\geq\rev(A;F_i)$ and the upper bound in
\cref{th:randomized-upper-single} was derived with respect to $\val(F_i)=\mu_i$.
\end{proof}

\subsection{Regularity vs Dispersion}\label{sec:lambda-regularity}

Note that regularity plays an important role in the previous
\cref{th:lazy_many_items}, as it enables the black-box reduction of \citet{Azar2013}
to achieve meaningful upper bounds on the robust approximation ratio for a class of
multi-bidder auctions. Given this observation, an obvious question would be whether
additional knowledge of regularity can help us design better auctions, even for the
single-item, single-bidder setting of
\cref{sec:deterministic-upper,sec:randomized-upper-single}. In this section, we
consider the notion of $\lambda$-regularity, which has already been studied in the
context of revenue maximization, e.g.\ by \citet{SchweizerSzech2019} and
\citet{Cole2017}\footnote{To be precise, \citet{Cole2017} use the notion of
$\alpha$-strong regularity, originally introduced by~\citet{Cole2014a}; this
corresponds exactly to the notion of $\lambda$-regularity used
in~\citep{SchweizerSzech2019} and this paper, for $\alpha=1-\lambda$.}, prove some
basic results (\cref{cor:regular-upper}) and discuss some interesting implications.

Consider a continuous distribution $F$ supported over an interval $D_F$ of
nonnegative reals, and a real parameter $\lambda\in[0,1]$. Let $f$ denote the
density function of $F$. We will say that $F$ is \emph{$\lambda$-regular} if its
\emph{$\lambda$-generalized virtual valuation} function $$\phi_\lambda(x) \equiv
\lambda \cdot x-\frac{1-F(x)}{f(x)}$$ is monotonically nondecreasing in $D_F$.

It is not difficult to see that, for any $0\leq\lambda \leq
\lambda'\leq 1$, any $\lambda$-regular distribution is also $\lambda'$-regular.
For the special case of $\lambda=1$, the above definition recovers exactly the
standard notion of regularity à la~\citet{Myerson1981}. On the other extreme of the
range, for $\lambda=0$ we get the definition of \emph{Monotone Hazard Rate (MHR)}
distributions. Intuitively, MHR distributions have exponentially decreasing tails.
Although they represent the strictest class within the $\lambda$-regularity
hierarchy, they are still general enough to give rise to a wide family of natural
distributions, such as the uniform, exponential, (truncated) normal and gamma.

We will also need the following auxiliary results for $\lambda$-regular
distributions, which follow from Propositions~2 and~4, and their corresponding
proofs, of \cite{SchweizerSzech2019}.

\begin{proposition}[\citet{SchweizerSzech2019}] \ 
\label{prop:lambdaregaux}
\begin{enumerate}
    \item Let $F$ be $\lambda$-regular for some $\lambda\in[0,1)$. Then $F$ has a
    finite mean, say $\mu$, and we have the inequality
    \[P(X>\mu)\geq (1-\lambda)^{\frac{1}{\lambda}}\quad\text{for }\lambda\neq
    0,\quad P(X>\mu)\geq\frac{1}{e}\quad\text{for }\lambda=0.\]
    \item Let $F$ be $\lambda$-regular for some $\lambda\in[0,1/2)$. Then $F$ has a
    finite variance, say $\sigma^2$, and we have the inequality
    \[\sigma^2\leq\frac{\mu^2}{1-2\lambda}.\]
\end{enumerate}
\end{proposition}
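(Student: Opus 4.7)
The plan is to exploit the monotonicity of $\phi_\lambda$ in order to derive a Pareto-type tail bound on the survival function $G(x)=1-F(x)$, and then to pin down the sharp constants by comparing $F$ with the extremal Lomax (shifted Pareto) distribution.

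First I would locate, under the assumption $\lambda>0$, the unique root $x_0>0$ of $\phi_\lambda$ (which exists because $\phi_\lambda(0)\le 0$ while $\phi_\lambda(x)\to\infty$). For $x\ge x_0$ the inequality $\phi_\lambda(x)\ge 0$ rearranges to the differential inequality $(-\ln G)'(x)\ge 1/(\lambda x)$, which upon integration from $x_0$ to $x$ yields the key tail bound
\[G(x)\le G(x_0)\,\bigl(x_0/x\bigr)^{1/\lambda}\quad\text{for all } x\ge x_0.\]
This single estimate drives everything that follows.

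For part~1, since $1/\lambda>1$ the tail bound makes $G$ integrable, and splitting $\mu=\int_0^\infty G$ at $x_0$ yields $\mu\le x_0/(1-\lambda)$, hence $x_0\ge(1-\lambda)\mu$. To upgrade this to the sharp inequality $P(X>\mu)\ge(1-\lambda)^{1/\lambda}$ I would compare $F$ to the Lomax distribution $G^L(x)=(a/(x+a))^{1/\lambda}$ with $a=(1-\lambda)\mu/\lambda$: a direct check gives $\phi_\lambda\equiv-\lambda a$ (so $\lambda$-regular), mean exactly $\mu$, and $G^L(\mu)=(1-\lambda)^{1/\lambda}$ with equality. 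The cleanest vehicle for the comparison is the quantile reparameterization $\tilde T(s):=s\,V(s^{1/\lambda})$, where $V=G^{-1}$; a short computation gives $\tilde T'(s)=\phi_\lambda(V(s^{1/\lambda}))/\lambda$, so $\lambda$-regularity is equivalent to concavity of $\tilde T$ on $[0,1]$, with the Lomax being exactly the ``minimally concave'' case where $\tilde T$ is affine. Combining concavity with the quantile form of the mean, $\mu=(1/\lambda)\int_0^1 s^{1/\lambda-2}\tilde T(s)\,ds$, then yields the required lower bound on $s_\mu:=G(\mu)^\lambda$. The boundary case $\lambda=0$ (MHR) must be handled separately, and the bound $G(\mu)\ge 1/e$ is a classical consequence of the log-concavity of $G$ under MHR.

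For part~2, the same tail bound with $1/\lambda>2$ (since $\lambda<1/2$) ensures $\int_0^\infty xG(x)\,dx<\infty$, giving $E[X^2]<\infty$ and hence finite variance. The analogous integral split now produces $E[X^2]\le x_0^2\,[1+2\lambda G(x_0)/(1-2\lambda)]$, and combining with the welfare bound $x_0G(x_0)\le\mu$ together with the mean estimate from part~1 leads to $\sigma^2\le\mu^2/(1-2\lambda)$. A direct calculation verifies that the same Lomax achieves equality here as well, confirming it is the extremal distribution for both parts.

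The main obstacle is passing from the crude bound $x_0\ge(1-\lambda)\mu$ to the sharp probability and variance estimates; this is where the concavity of $\tilde T$ in the quantile domain is indispensable. Some extra care is required near the endpoints of $[0,1]$, since $\tilde T(0)$ and $\tilde T(1)$ need not vanish in general (for the Lomax, $\tilde T(s)=a(1-s)$), so the standard trick of exploiting subadditivity for concave functions that vanish at the origin does not apply directly and one must argue through the chord/tangent inequalities of concavity instead.
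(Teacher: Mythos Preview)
The paper does not supply its own proof of this proposition; it merely states the result and attributes it to Propositions~2 and~4 (and their proofs) in \citet{SchweizerSzech2019}. There is therefore nothing in the paper to compare your proposal against, and your sketch already does strictly more than the paper attempts.

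On the substance of your sketch: the tail bound $G(x)\le G(x_0)(x_0/x)^{1/\lambda}$ for $x\ge x_0$ is correct and immediately yields finiteness of the mean (for $\lambda<1$) and of the variance (for $\lambda<1/2$). Your quantile reparameterization is also correct---the computation $\tilde T'(s)=\phi_\lambda\bigl(V(s^{1/\lambda})\bigr)/\lambda$ does make $\lambda$-regularity equivalent to concavity of $\tilde T$ on $[0,1]$---and the Lomax family with $a=(1-\lambda)\mu/\lambda$ is indeed the equality case for both parts (your calculations check out: $G^L(\mu)=(1-\lambda)^{1/\lambda}$ and $\sigma^2=\mu^2/(1-2\lambda)$, with $\tilde T$ affine).

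Where the proposal remains a plan rather than a proof is exactly the step you yourself flag as the ``main obstacle'': you have set up the constraints (concavity of $\tilde T$, $\tilde T(1)=0$, and the moment identity $\mu=(1/\lambda)\int_0^1 s^{1/\lambda-2}\tilde T(s)\,ds$) and the target ($\tilde T(1-\lambda)\ge(1-\lambda)\mu$, equivalently $s_\mu\ge 1-\lambda$), but you have not actually supplied the chord/variational argument that forces the inequality. A naive sign-splitting at $s=1-\lambda$ does not work uniformly in $\lambda$, since the weight $s^{1/\lambda-2}$ is increasing for $\lambda<1/2$ and decreasing for $\lambda>1/2$, so some additional structure (beyond concavity and $\tilde T(1)=0$) must be invoked. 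The same remark applies to part~2. So the proposal is structurally sound and correctly pinpoints the extremal distribution, but the sharp-constant step is still owed.
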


Now we can state our main result in this section:
\begin{corollary}\label{cor:regular-upper} 
Consider a single-item, single-bidder setting in which the seller has knowledge of
the mean $\mu$ and an upper bound on the regularity $\lambda\in(0,1]$ of
distribution $F$. Then we can achieve a robust approximation ratio of
$(1-\lambda)^{-1/\lambda}$ by offering the mean as a selling price.
\end{corollary}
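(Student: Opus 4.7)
The plan is a short and direct calculation combining the welfare bound with the tail estimate from \cref{prop:lambdaregaux}(1). First I would let $F$ be any continuous $\lambda$-regular distribution with mean $\mu$, and observe that offering the take-it-or-leave-it price $p=\mu$ yields
\[\rev(\mu;F)=\mu\cdot \probs{X\geq \mu},\]
where $X\sim F$. Since $F$ is continuous, $\probs{X\geq\mu}=\probs{X>\mu}$, so I can invoke \cref{prop:lambdaregaux}(1) to conclude $\rev(\mu;F)\geq \mu\cdot(1-\lambda)^{1/\lambda}$ whenever $\lambda\in(0,1)$.

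Next, I would bound the optimal revenue from above via the welfare bound: $\opt(F)\leq \val(F)=\mu$, since in a single-item setting the expected welfare equals the expectation of $X$. Combining the two estimates immediately gives
\[\frac{\opt(F)}{\rev(\mu;F)}\leq \frac{\mu}{\mu\cdot(1-\lambda)^{1/\lambda}}=(1-\lambda)^{-1/\lambda},\]
which is the claimed robust approximation ratio. Taking the supremum over all $\lambda$-regular $F$ with mean $\mu$ preserves this bound.

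Finally, I would address the boundary case $\lambda=1$ separately: here $(1-\lambda)^{-1/\lambda}=\infty$, so the statement is vacuous and no argument is required. For $\lambda\in(0,1)$ everything is immediate from \cref{prop:lambdaregaux}; there is essentially no obstacle, since the entire technical content (the tail inequality $\probs{X>\mu}\geq(1-\lambda)^{1/\lambda}$) is imported as a black box. The only point worth a brief comment in the write-up is that the mean price $\mu$ is feasible even though the adversary does not disclose $\sigma$, which is precisely why this result is meaningful in the relaxed model of \cref{sec:lambda-regularity} where variance information is dropped.
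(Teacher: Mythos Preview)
Your proof is correct and follows essentially the same approach as the paper: bound $\opt(F)$ above by $\mu$ via the welfare bound, bound $\rev(\mu;F)$ below by $\mu(1-\lambda)^{1/\lambda}$ via \cref{prop:lambdaregaux}(1), and take the ratio. Your treatment is in fact slightly more careful than the paper's (you explicitly note the continuity of $F$ to pass from $\probs{X\geq\mu}$ to $\probs{X>\mu}$, and you handle the vacuous case $\lambda=1$), but the argument is the same.
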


\begin{proof}
Using an upper bound of $\mu$ on the revenue of an optimal auction, and the lower
bound on the selling probability given by \cref{prop:lambdaregaux}, the result
immediately follows as
\[\frac{\opt(F)}{\rev(\mu;F)}\leq\frac{\mu}{\mu(1-\lambda)^{1/\lambda}}=(1-\lambda)^{-1/\lambda}.\]
\end{proof}

\begin{figure}
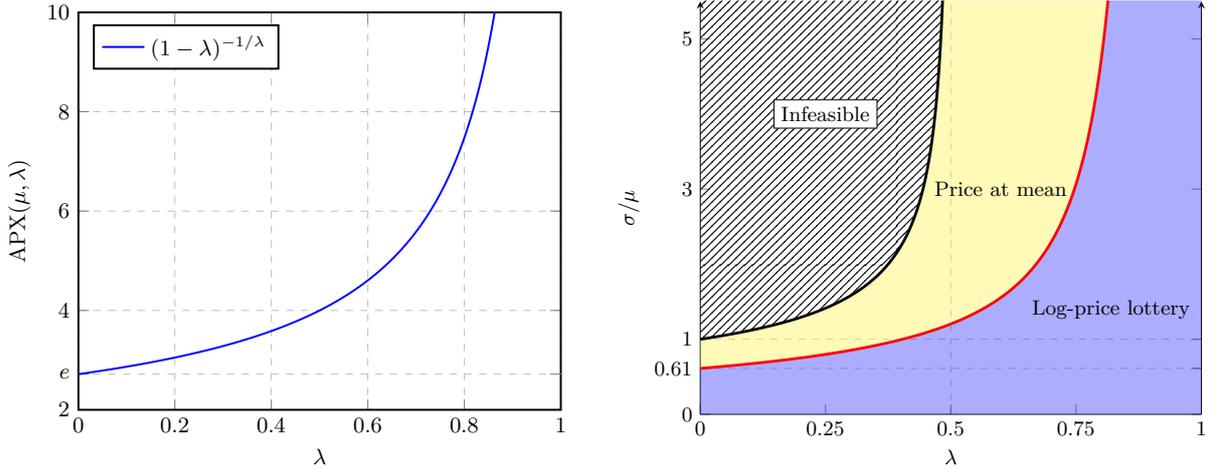

    \centering
    \begin{subfigure}[t]{0.48\textwidth}
        \resizebox{\textwidth}{!}{\footnotesize \input{ratio_regular}}
			\caption{\footnotesize Approximation ratio for $\lambda$-regular distributions.}
		\label{fig:plot_regular}
    \end{subfigure}
    ~
    \begin{subfigure}[t]{0.48\textwidth}
        \resizebox{1.05\textwidth}{!}{\footnotesize \input{regularity_dispersion}}
			\caption{\footnotesize Knowledge of $\lambda$-regularity vs variance $\sigma^2$.}
		\label{fig:regvsdisp}
    \end{subfigure}
    \caption{(a) The robust approximation ratio upper bound when pricing at
    the mean $\mu$ of a $\lambda$-regular distribution. 
    (b) Description of our proposed single-item, single-bidder mechanism under
    knowledge of $(\mu,\sigma,\lambda)$. Note that $\lambda<1/2$ already implies an
    upper bound on the coefficient of variation $\sigma/\mu$ (black curve).
    Moreover, if $\sigma/\mu$ is    sufficiently small (in particular, smaller than
    the function of $\lambda$ given by the red curve), then offering a lottery over
    prices (blue area) guarantees a better approximation ratio than simply  pricing
    at the mean (yellow area).} \label{fig:regularity-both}
\end{figure}

Note that \cref{cor:regular-upper} gives an upper bound that degrades from $e$ at
$\lambda=0$ (MHR), to $\infty$ at $\lambda=1$ (regular); see \cref{fig:plot_regular}
for a plot of this quantity. Next, we compare this ratio against the logarithmic
ratio from \cref{th:randomized-upper-single}. In other words, consider a model in
which the bidder has information about three quantities of the distribution $F$: its
mean $\mu$, an upper bound of $\sigma^2$ on its variance, and an upper bound of
$\lambda$ on its ``regularity''. Combining our results so far, we can postulate a
selling strategy, summarized in \cref{fig:regvsdisp}.
The first observation is that some triples $(\mu,\sigma,\lambda)$ are infeasible in
the following sense: if the seller knows an upper bound on $\lambda$, and
furthermore $\lambda<1/2$, then this immediately implies an upper bound on the
coefficient of variation by \cref{prop:lambdaregaux}; in particular, the seller
would know that 
\begin{equation}
    \label{eq:bound_CV_regular}
    \sigma/\mu\leq\sqrt{1/(1-2\lambda)}
\end{equation}
Thus, we can assume without
loss that triple $(\mu,\sigma,\lambda)$ obeys this inequality.

Next, we compare the robust approximation ratios of our two candidate strategies, to
determine when the log-lottery of \cref{def:log-pricing} outperforms the
pricing-at-the-mean from \cref{cor:regular-upper}. This amounts to solving the
inequality
$$
\rho\left(\frac{\sigma}{\mu}\right)\leq\frac{1}{(1-\lambda)^{1/\lambda}}.
$$
Since $\rho$ is strictly increasing, this is equivalent to
$$
\frac{\sigma}{\mu}\leq\rho^{-1}\left(\frac{1}{(1-\lambda)^{1/\lambda}}\right) = 
\sqrt{\frac{1}{(1-\lambda)^{2/\lambda}}\left(2e^{(1-\lambda^{1/\lambda}-1}-1\right)-1},
$$
where for the last equality we simply rewrote the equation in
\cref{def:function-upper} in terms of $r$. The conclusion is that the upper bound
for the log-lottery is better than the upper bound for pricing-at-the-mean iff
$\sigma/\mu$ is below a certain cutoff point (which depends on $\lambda$). Note that
\cref{fig:regvsdisp} does not show the actual approximation ratio, but rather it
partitions the $(\lambda,\sigma/\mu)$-space into regions where (the approximation
guarantee of) each mechanism is better.

Some additional observations about~\cref{fig:regvsdisp} are in order. First, in the
limit $\lambda\rightarrow 1$, our best guarantee comes from offering the log-lottery
mechanism (i.e. knowledge of $1$-regularity does not improve the currently best
known approximation guarantees for a single-item and a single-bidder); secondly,
there is a value of $\sigma/\mu$, approximately equal to $0.61$, below which
offering the log-lottery mechanism achieves a better guarantee than that provided by
pricing-at-the-mean, \emph{regardless of} the regularity parameter
$\lambda\in[0,1]$. Intuitively, one could say that \emph{knowing that the standard
deviation of $F$ is at most $61\%$ its mean gives better revenue guarantees than
knowing that $F$ is MHR}, at least for single-item, single-bidder settings.

\section{Discussion and Future Directions}\label{sec:discussion}

In this paper, we studied the robust approximation ratio of revenue maximization
under minimal statistical information of the bidders' prior distribution on the item
valuations. The fundamental quantities of interest turn out to be the coefficients
of variation (CV), $r_j=\sigma_j/\mu_j$, of the marginal distributions. For the
single-item, single-bidder case, we completely characterized this ratio for
deterministic mechanisms (quadratic in $r$) and gave asymptotically tight bounds for
randomized mechanisms (logarithmic in $r$). This yields natural upper bounds for the
multi-item, single-additive-bidder setting. The tight lower bound is particularly powerful as it works for any choice of the $r_j$. Moreover, the results hold for a possibly correlated prior distribution $F$ over the items, with only knowledge of the mean and an upper bound on the standard deviation of each marginal. The optimal mechanism turns out to be very simple: sell the items separately using the optimal randomized mechanism for the single-item case. It is also worth mentioning that although the upper bounds for the single item generalize straightforwardly to multiple items via the welfare bounds (which are trivial upper bounds to the optimal revenue), proving that these are the ``correct'' bounds requires careful technical work. At the heart of our analysis lies a new version of Yao's principle, which applies to the ``non-standard'' continuous spaces that arise in the single-item setting and might be of independent interest. 
As an interesting consequence, we have observed how our results
can be immediately applied to the single-dimensional, multi-bidder setting proposed
by \citet{Azar2013}, and also made a short digression into a setting in which
additional information on the regularity is assumed.

We believe that the general topic of ``robust revenue with minimal statistical
information'' gives rise to many interesting questions and variants; below we
propose directions for possible future work.

\paragraph{Approximation ratio vs absolute revenue} As we already mentioned in the
paper, besides the robust approximation \emph{ratio} in~\eqref{eq:apxopt}, another
quantity of independent interest is that given in~\eqref{eq:vanilla_rev}:
\begin{equation*}\label{eq:vanilla_rev_2}
\adjustlimits\sup_{A\in\abb_1}\inf_{F\in\fbb}\rev(A;F).
\end{equation*} This can be seen as a ``vanilla'' notion of robust revenue
maximization, and it was considered in \citet{Azar:2013aa} (where they proved
maximin optimality for deterministic mechanisms); it was also of central interest in
the work by \citet{Carrasco2018} and in other works in the economics and management science literature (e.g., \citep{Ko_yi_it_2020,Suzdaltsev2020,Carroll:2017aa}).

It is perhaps subjective to ask which, if any, of the two quantities is ``better'',
as both have their merits. From a theoretical perspective, the \emph{absolute
revenue} in \eqref{eq:vanilla_rev} is a simpler quantity (e.g.\ it behaves linearly
with respect to convex combinations of mechanisms and distributions) and thus
probably easier to extend to other settings; furthermore, it might be more appealing
to an economist. On the other hand, the \emph{approximation ratio} in
\eqref{eq:apxopt} is ``dimensionless'' or ``scale-free'', and arguably rather
natural for a computer scientist.

Consider the following thought experiment, that highlights this comparison from a
more practical perspective.
You are the head of a selling platform and your marketing team offers you two
possible selling mechanisms:
\begin{itemize}
    \item Mechanism A (in expectation) guarantees 10\$ on each item for sale, but
    only 25\% of the optimal revenue.
    \item Mechanism B (in expectation) guarantees 50\% of the optimal revenue, but
    only 5\$ on each item for sale.
\end{itemize}

One possible answer could be that ``it (almost) doesn't matter'': for single-item
randomized mechanisms, we proved that the maximin optimal lottery of
\citet{Carrasco2018} yields asymptotically the best possible guarantee for the
robust approximation ratio. However, it is not at all clear if, in general settings,
the maximin optimal auction always achieves a guarantee ``similar to'' (say, a
constant away of) the robust ratio-optimal auction.
Providing an answer to the debate above is of course beyond the scope of the
present paper. Nevertheless, we briefly presented it here as a potentially
stimulating topic for future work and discussion, both from a theoretical and an
empirical/behavioural point of view.

\paragraph{Tighter bounds} Although the single-item, single-bidder case is almost
completely solved, our tight analysis still has an ``$\varepsilon$-gap'' between the
upper (\cref{th:randomized-upper-single}) and lower
(\cref{thm:lower-bound-randomized}) bounds; it would be worth trying to close this
gap, either by providing stronger lower bound instances, or by improving the upper
bound analysis; similar comments apply to the multi-item, single-bidder setting,
where it would be interesting to quantify the finer dependencies of the
approximation ratio on various choices of $(\vec{\mu},\vec{\sigma})$.

\paragraph{Multiple bidders} We would also like to point out a
qualitative change between the many-items and many-bidders settings, when moving to
them from the basic single-bidder, single-item scenario: for a single bidder and
many items, the approximation guarantee does not degrade; it is essentially bounded
by the approximation guarantee of the ``worst'' item
(see~\cref{th:randomized-upper-many}). For a single item and many bidders, however,
even with the assumption of independent, regular distributions, we gain an extra
factor of $2$ (see~\cref{th:lazy_many_items}), coming from the general black box
reduction in \citet{Azar2013}. It would be interesting to see if this factor can be
dropped (or alternatively, provide stronger lower bounds). We believe that a
promising way to attack this question would be to study existing or novel bounds on
the coefficient of variation of the \emph{maximum order statistic} of random
variables, which may be of independent interest to statisticians.
Of course, the most ambitious extension would be to consider multi-dimensional,
multi-bidder settings (a generalization of both our work and that of
\citet{Azar2013,Azar:2013aa}).

Finally, below we propose alternative, or more general, models of limited
statistical information that might be interesting for future work:
\paragraph{Intervals of confidence} In some situations, it may be impractical to
assume \emph{exact} knowledge of the mean $\mu$ of a distribution; instead, one
could assume an estimate on the mean (via historical data), say
$\mu\in[\underline{\mu},\overline{\mu}]$ or $\mu\in(1\pm\varepsilon)\mu_0$. It would
be interesting to incorporate this type of information to the results presented in
the present paper, and study how they affect the approximation guarantees.
\paragraph{Broader classes of value functions} An interesting next case would be to
study the setting of, say, a single unit-demand bidder and many items, or perhaps
more generally, other valuation models such as constrained additivity or
submodularity.
\paragraph{Regularity vs dispersion} In \cref{sec:lambda-regularity}, we
considered a very simple mechanism (price at the mean) which achieves good
performance guarantees for $\lambda$-regular distributions. Note that $\lambda$ was
not used for designing the mechanism, but only for its analysis. Could we perhaps use
the knowledge of $\lambda$ to design (possibly randomized) mechanisms with better
performance? And could we combine knowledge of $\lambda$ and $\sigma$ to design
better mechanisms in a non-trivial way?
    
\paragraph{Higher-order moments} \Citet{Carrasco2018} already looked at a
single-item, single-bidder case for the ``vanilla'' revenue maximization
problem~\eqref{eq:vanilla_rev} under knowledge of the first $N$ moments of the
valuation distribution; they characterized the solution in terms of an
$N$-dimensional optimization problem, and briefly described it for the case of
$N=3$. The most intriguing question in this line of work would be to understand the
dependence of the approximation guarantee on the number of moments $N$ and,
specifically, whether it converges to optimality and at what rate. In other words,
what would be \emph{the ``moment complexity'' of robust revenue maximization}?

\appendix

\section{Technical Lemmas}\label{sec:technical_lemmas}

\begin{lemma}
\label{lem:exact_deterministic}
For any $\mu>0$ and $\sigma\geq0$, let $r=\sigma/\mu$ and
\begin{equation}
\label{eq:technical_branches}
\rho=\inf_{0< p<\mu}\max\left\{1+\frac{\sigma^2}{(\mu-p)^2},\frac{\mu}{p}+\frac{\sigma^2}{p(\mu-p)}\right\}.
\end{equation}
Then
\begin{itemize}
    \item the infimum in~\eqref{eq:technical_branches} occurs at a point $p^\ast$
    that is the unique positive solution of equation
    $$\frac{(\mu-p)^3}{2p-\mu}=\sigma^2.$$
    \item the value of the infimum, $\rho$, is the unique solution over
    $[1,\infty)$ of the equation 
    $$\frac{(\rho-1)^3}{(2\rho-1)^2}=r^2,$$
    and further satisfies
    $$
    1+4r^2\leq\rho\leq2+4r^2
    \qquad\text{and}\qquad
    p^\ast=\frac{\rho}{2\rho-1}\cdot \mu.
    $$
\end{itemize}
\end{lemma}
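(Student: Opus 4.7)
Denote the two branches inside the max by $f_1(p)=1+\sigma^2/(\mu-p)^2$ and $f_2(p)=\mu/p+\sigma^2/(p(\mu-p))$, and set $M(p)=\max\{f_1(p),f_2(p)\}$. My plan is the textbook equalisation strategy for a one-dimensional minimax: show that $M$ is minimised at the unique point $p^\ast$ where the two branches meet, then read off both $p^\ast$ and the value $\rho$ algebraically, and finally sandwich $\rho$ between $1+4r^2$ and $2+4r^2$.

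First I would solve $f_1(p)=f_2(p)$. Clearing denominators and simplifying (the $\sigma^2$ terms group nicely) reduces the equation to $\sigma^2(2p-\mu)=(\mu-p)^3$, establishing the form displayed in the lemma. To see this has a unique root in $(0,\mu)$, I note that $\phi(p):=(\mu-p)^3-\sigma^2(2p-\mu)$ has $\phi'(p)=-3(\mu-p)^2-2\sigma^2<0$, while $\phi(\mu/2)=(\mu/2)^3>0$ and $\phi(\mu)=-\sigma^2\mu\le0$; so there is a unique $p^\ast\in(\mu/2,\mu]$ (with $p^\ast=\mu$ exactly when $\sigma=0$).

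Next I would verify $p^\ast$ is a global minimum of $M$. This is the only step that requires a little care. Observe that $f_1$ is strictly increasing on $(0,\mu)$ with $f_1(0^+)=1+r^2$ and $f_1(\mu^-)=\infty$, while $f_2(0^+)=\infty$. Using the relation $\sigma^2=(\mu-p^\ast)^3/(2p^\ast-\mu)$ at the crossing, a direct computation gives
\[
f_2'(p^\ast)=\tfrac{1}{(p^\ast)^2}\bigl[-\mu+\sigma^2(2p^\ast-\mu)/(\mu-p^\ast)^2\bigr]=\tfrac{1}{(p^\ast)^2}\bigl[-\mu+(\mu-p^\ast)\bigr]=-1/p^\ast<0.
\]
Combined with $f_1'(p^\ast)>0$, this shows $M$ has a strict local minimum at $p^\ast$. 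For globality, note that on $[p^\ast,\mu)$ we have $f_1\ge f_2$ (by the uniqueness of the crossing and the sign of $\phi$), so $M=f_1$ is increasing there; and on $(0,p^\ast]$ we have $f_2\ge f_1$, and if $M=f_2$ ever dipped below $M(p^\ast)$ on this interval there would be a second crossing, contradicting uniqueness. Hence $\rho=M(p^\ast)=f_1(p^\ast)$.

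Finally, to extract the closed form I set $\rho=1+\sigma^2/(\mu-p^\ast)^2$, so $(\mu-p^\ast)^2=\sigma^2/(\rho-1)$. Substituting into $\sigma^2(2p^\ast-\mu)=(\mu-p^\ast)^3$ yields $(2p^\ast-\mu)(\rho-1)=\mu-p^\ast$, whence $p^\ast=\rho\mu/(2\rho-1)$ and $\mu-p^\ast=(\rho-1)\mu/(2\rho-1)$; plugging back gives $(\rho-1)^3/(2\rho-1)^2=r^2$, as claimed. The map $g(\rho)=(\rho-1)^3/(2\rho-1)^2$ has derivative $g'(\rho)=(\rho-1)^2(2\rho+1)/(2\rho-1)^3>0$ for $\rho>1$, so the solution is unique in $[1,\infty)$. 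For the two-sided bound, by monotonicity it suffices to check $g(1+4r^2)\le r^2\le g(2+4r^2)$; both reduce to simple polynomial inequalities ($64r^4\le(1+8r^2)^2$ and $(1+4r^2)^3\ge r^2(3+8r^2)^2$, the latter equivalent to $1+3r^2\ge0$), which are immediate. The main (only) obstacle is the global-minimality verification in the third step; the local derivative computation plus uniqueness of the crossing handles it cleanly.
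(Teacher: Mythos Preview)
Your approach matches the paper's: solve $f_1=f_2$, show the minimum of $M$ occurs at the unique crossing $p^\ast$, then do the algebra. The algebraic steps (the equation for $p^\ast$, the expression $p^\ast=\rho\mu/(2\rho-1)$, the equation $(\rho-1)^3/(2\rho-1)^2=r^2$, and the two-sided bound via $g(1+4r^2)\le r^2\le g(2+4r^2)$) are all fine; the paper derives the same bounds by a polynomial long-division identity instead, but your endpoint check works just as well.

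There is, however, a real gap in the global-minimality step. You only compute $f_2'(p^\ast)=-1/p^\ast$ and then assert that if $f_2$ dipped below $M(p^\ast)$ somewhere on $(0,p^\ast)$ ``there would be a second crossing, contradicting uniqueness.'' That inference is not valid. On $(0,p^\ast]$ you know $f_2\ge f_1$, but nothing prevents $f_2$ from dipping to a value strictly between $f_1(p_0)$ and $f_1(p^\ast)$ at some $p_0<p^\ast$: since $f_1$ is increasing, $f_1(p_0)<f_1(p^\ast)=f_2(p^\ast)$, so $f_2(p_0)<f_2(p^\ast)$ with $f_2(p_0)\ge f_1(p_0)$ is entirely consistent and produces no second crossing of $f_1$ and $f_2$. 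The fix is immediate and is exactly what the paper does: extend your derivative computation to \emph{all} $p\in(0,p^\ast]$. From your own sign analysis of $\phi$ you have $\sigma^2(2p-\mu)\le(\mu-p)^3$ on this interval, so
\[
f_2'(p)=\frac{-\mu(\mu-p)^2+\sigma^2(2p-\mu)}{p^2(\mu-p)^2}\le\frac{-\mu(\mu-p)^2+(\mu-p)^3}{p^2(\mu-p)^2}=-\frac{1}{p}<0,
\]
which shows $f_2$ is strictly decreasing on $(0,p^\ast]$ and gives the global minimum directly.
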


\begin{proof}
We begin by analysing when the first branch of the maximum
in~\eqref{eq:technical_branches} is higher than the second branch. Some algebraic
manipulation yields
\begin{equation}\label{eq:ex_det_1}1+\frac{\sigma^2}{(\mu-p)^2}\geq\frac{\mu}{p}+\frac{\sigma^2}{p(\mu-p)}\quad\ifif\quad(\mu-p)^3\leq\sigma^2(2p-\mu).\end{equation}
When $p\leq\mu/2$, the right expression is nonpositive and hence the second branch
of the maximum is highest. Next, observe that $\frac{(\mu-p)^3}{2p-\mu}$ is
decreasing over $p\in(\mu/2,\mu)$, with a positive pole at $p=\mu/2$, and vanishing
at $p=\mu$. Hence, for any choice of $\mu,\sigma$, there is a unique point $p^\ast$
at which \eqref{eq:ex_det_1} holds with equality. It follows that for $p\geq p^\ast$
the maximum is achieved on the first branch and for $p\leq p^\ast$ the maximum is
achieved on the second branch.

Next, observe that $1+\frac{\sigma^2}{(\mu-p)^2}$ is increasing on
$p\in(p^\ast,\mu)$. To see that the second branch of the maximum
in~\eqref{eq:technical_branches} is decreasing on $p\in(0,p^\ast)$, we take its
derivative
\[\frac{d}{dp}\left(\frac{\mu}{p}+\frac{\sigma^2}{p(\mu-p)}\right)=\frac{-\mu(\mu-p)^2+\sigma^2(2p-\mu)}{p^2(\mu-p)^2}.\]
When $p\leq p^\ast$ we have (by definition of $p^\ast$) that
$\sigma^2(2p-\mu)\leq(\mu-p)^3$ and hence the above quantity is at most $-1/p$,
which is negative. We conclude that the minimum occurs when both branches intersect,
i.e.\ at $p=p^\ast$; using the fact that $(\mu-p^\ast)^3=\sigma^2(2p^\ast-\mu)$, we
can further express the value of the minimum as
\[\rho=1+\frac{\sigma^2}{(\mu-p^\ast)^2}=1+\frac{\mu-p^\ast}{2p^\ast-\mu}=\frac{p^\ast}{2p^\ast-\mu}.\]
We can now use this to express $p^\ast$ in terms of $\rho$,
\[p^\ast=\frac{\rho}{2\rho-1}\cdot\mu;\qquad\mu-p^\ast=\frac{\mu(\rho-1)}{2\rho-1};\qquad
2p^\ast-\mu=\frac{\mu}{2\rho-1}.\]
Putting these together, we get
\[\sigma^2=\frac{(\mu-p^\ast)^3}{2p^\ast-\mu}=\mu^2\frac{(\rho-1)^3}{(2\rho-1)^2}
\quad\ifif\quad 
\frac{(\rho-1)^3}{(2\rho-1)^2}=\left(\frac{\sigma}{\mu}\right)^2\equiv
r^2.\]

One can directly check that the expression $\frac{(\rho-1)^3}{(2\rho-1)^2}$ is
increasing and goes from $0$ at $\rho=1$ to $\infty$ at $\rho\rightarrow\infty$, so
that for any nonnegative $r$ there is a unique solution $\rho\in[1,\infty)$ to the
above equation. Moreover, we can write
\[r^2=\frac{(\rho-1)^3}{(2\rho-1)^2}=\frac{1}{4}\rho-\frac{1}{4}-\frac{(\rho-\frac{3}{4})(\rho-1)}{4(\rho-\frac{1}{2})^2}
\quad\ifif\quad
\rho=1+4r^2+\frac{(\rho-\frac{3}{4})(\rho-1)}{(\rho-\frac{1}{2})^2};\] since the
fraction appearing on the right-hand side takes values between 0 and 1 (for
$\rho\in[1,\infty)$), this gives us the desired global bounds.
\end{proof}

\begin{lemma}
\label{lem:random_approxsol}
For $r>0$, let $\rho(r)$ denote the (unique) positive solution of the equation
$$\frac{1}{\rho^2}\left(2e^{\rho -1}-1\right)=r^2+1.$$ Then, for any
$\varepsilon>0$, it holds that
$$
\rho(r) \leq 1+(1+\varepsilon)\ln(1+r^2)
$$
for large enough values of $r$.
\end{lemma}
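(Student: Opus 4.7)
The plan is to reduce the problem to showing that the candidate upper bound $\rho^\ast = 1 + (1+\varepsilon)\ln(1+r^2)$ makes the left-hand side of the defining equation \emph{strictly exceed} the right-hand side for large $r$, and then invoke monotonicity to conclude $\rho(r) < \rho^\ast$. Concretely, I would define
\[
G(\rho) = \frac{2e^{\rho-1}-1}{\rho^2},
\]
so that $\rho(r)$ is, by \cref{def:function-upper}, the unique $\rho > 0$ with $G(\rho) = r^2 + 1$. A short differentiation,
\[
G'(\rho) = \frac{2e^{\rho-1}(\rho-2) + 2}{\rho^3},
\]
shows that $G$ is strictly increasing on $[2,\infty)$, so it suffices to exhibit some $\rho^\ast \geq 2$ with $G(\rho^\ast) > r^2 + 1$: the monotonicity of $G$ on $[2,\infty)$ then forces $\rho(r) < \rho^\ast$.

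Next I would substitute $\rho^\ast = 1 + (1+\varepsilon)\ln(1+r^2)$; note that $\rho^\ast \geq 2$ for all sufficiently large $r$. Since $e^{\rho^\ast - 1} = (1+r^2)^{1+\varepsilon}$, the inequality $G(\rho^\ast) > r^2 + 1$ is equivalent to
\[
2(1+r^2)^{1+\varepsilon} - 1 \;>\; (1+r^2)\bigl(1 + (1+\varepsilon)\ln(1+r^2)\bigr)^2.
\]
Writing $L = \ln(1+r^2)$ and dividing both sides by $(1+r^2)\bigl(1 + (1+\varepsilon)L\bigr)^2$, the condition becomes, up to lower-order corrections,
\[
\frac{2(1+r^2)^\varepsilon}{\bigl(1+(1+\varepsilon)L\bigr)^2} \;>\; 1 + o(1).
\]
The numerator grows like $e^{\varepsilon L}$ while the denominator grows only polynomially in $L$, so the left-hand side tends to $\infty$ as $r \to \infty$. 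This is the main (and essentially the only) quantitative content of the lemma, and it is really just an application of the fact that exponentials dominate polynomials.

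Putting the pieces together: for every $\varepsilon > 0$ there exists $r_0 = r_0(\varepsilon)$ such that, for all $r \geq r_0$, we have $\rho^\ast \geq 2$ and $G(\rho^\ast) > r^2 + 1 = G(\rho(r))$; monotonicity of $G$ on $[2,\infty)$ then yields $\rho(r) < \rho^\ast = 1 + (1+\varepsilon)\ln(1+r^2)$, which is exactly the claim. I do not foresee any real obstacle beyond bookkeeping; the only step requiring mild care is verifying that $\rho(r)$ itself satisfies $\rho(r) \geq 2$ (so that we apply monotonicity on the correct interval), which follows immediately from observing that $G(2) = 2e - 1 < r^2 + 1$ for $r$ large and $G$ is continuous.
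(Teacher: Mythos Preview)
Your proposal is correct and follows essentially the same approach as the paper: both define the same function $G(\rho)=\frac{2e^{\rho-1}-1}{\rho^2}$, verify its monotonicity, and reduce the claim to showing $G(\rho^\ast)>r^2+1$ for $\rho^\ast=1+(1+\varepsilon)\ln(1+r^2)$ via the ``exponential beats polynomial'' fact. The only cosmetic difference is that the paper bounds the denominator by $(\rho^\ast)^2\leq (1+r^2)^{\varepsilon}$ before comparing, whereas you keep $(\rho^\ast)^2=(1+(1+\varepsilon)L)^2$ and argue $e^{\varepsilon L}/L^2\to\infty$ directly.
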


\begin{proof}
Fix an $\varepsilon >0$. For convenience, define the functions $f,g:(0,\infty)\map\R$ with
$$f(x)=\frac{1}{x^2}\left(2e^{x -1}-1\right)
\qquad\text{and}\qquad
g(x)= 1+(1+\varepsilon)\ln(1+x^2).
$$
By considering their derivatives, it is straightforward to see that both $f$ and $g$
are increasing functions. So, to prove our lemma, it is enough to show that
$$
f(g(r)) \geq r^2+1
$$
for large enough values of $r$.

Indeed, taking $r$ large enough we can guarantee that 
$$
g(r)= 1+(1+\varepsilon)\ln(1+r^2) \leq (1+r^2)^{\varepsilon/2},
$$
since $\ln(1+x)=o(x^{\varepsilon/2})$. Thus we have
$$
f(g(r))=\frac{2e^{g(r)-1}-1}{\left[g(r)\right]^2} 
\geq \frac{2e^{(1+\varepsilon)\ln(1+r^2)}-1}{\left[(1+r^2)^{\varepsilon/2}\right]^2}
=\frac{2(1+r^2)^{1+\varepsilon}-1}{(1+r^2)^{\varepsilon}} 
= 2(1+r^2)-\frac{1}{(1+r^2)^{\varepsilon}}
$$
which is greater than $1+r^2$ for large enough $r$, since $\frac{1}{x^\varepsilon}=o(x)$.
\end{proof}

\section{Proof of \texorpdfstring{\cref{prop:vanilla_carrasco}}{Proposition~1}}
\label{sec:vanilla_carrasco_appendix}
In this proof we refer to multiple points in the paper from \citet{Carrasco2018}.
The optimal mechanism for \eqref{eq:vanilla_rev} is given by the allocation rule
(see their Proposition 4)
\begin{equation}\label{eq:carrasco_mechanism}
x(v)=
\begin{cases} 0, & \text{for}\qquad\;\;\; v\leq\pi_1,\\ \lambda_1\ln\frac{v}{\pi_1}+2\lambda_2(v-\pi_1), & \text{for}\;\;
\pi_1 \leq v \leq \pi_2,\\ 1, & \text{for}\;\; \pi_2 \leq v,
\end{cases}\end{equation}
and the value of the maximin problem \eqref{eq:vanilla_rev} is (see end of page
274\footnote{\citet{Carrasco2018} define their solutions in terms of the moments
$k_1\equiv\mu$ and $k_2\equiv\mu^2+\sigma^2$.})
\begin{equation}\label{eq:carrasco_value_lam}
\adjustlimits\sup_{A\in\abb_1}\inf_{F\in\fbb_{\mu,\sigma}}\rev(A;F)=\lambda_0+\lambda_1\mu+\lambda_2(\mu^2+\sigma^2),
\end{equation}
where the values of $\lambda_0,\lambda_1,\lambda_2$ are given by (see (B.4-B.6))
\begin{equation}\label{eq:lambda_carrasco}
\lambda_0=-\frac{\pi_1(2\pi_2-\pi_1)}{2\left(\pi_2\ln\frac{\pi_2}{\pi_1}-(\pi_2-\pi_1)\right)};\lambda_1=\frac{\pi_2}{\pi_2\ln\frac{\pi_2}{\pi_1}-(\pi_2-\pi_1)};\lambda_2=-\frac{1}{2\left(\pi_2\ln\frac{\pi_2}{\pi_1}-(\pi_2-\pi_1)\right)}.
\end{equation}

Note that, as we explained at the end of \cref{sec:model}, the allocation rule
$x(v)$ from \eqref{eq:carrasco_mechanism} can be interpreted as the cdf of a
randomization over prices which forms an equivalent mechanism. Moreover, by
replacing the values of $\lambda_0,\lambda_1,\lambda_2$ as in
\eqref{eq:lambda_carrasco} we get
\[x(v)=\frac{\pi_2\ln\frac{v}{\pi_1}-(v-\pi_1)}{\pi_2\ln\frac{\pi_2}{\pi_1}-(\pi_2-\pi_1)},\]
which is exactly the log-lottery of \cref{def:log-pricing}.

Finally, by replacing the values of $\lambda_0,\lambda_1,\lambda_2$ from
\eqref{eq:lambda_carrasco}, and the values of $\mu$ and $\sigma$ from
\eqref{eq:randomized_upper_parameter_1},\eqref{eq:randomized_upper_parameter_2},
into \eqref{eq:carrasco_value_lam}, the value of the maximin problem can be greatly
simplified to

\begin{align*}
    \lambda_0 & +\lambda_1\mu+\lambda_2(\mu^2+\sigma^2)\\
    &=-\frac{\pi_1(2\pi_2-\pi_1)}{2\left(\pi_2\ln\frac{\pi_2}{\pi_1}-(\pi_2-\pi_1)\right)}
    +\frac{\pi_2\pi_1\left(1+\ln\frac{\pi_2}{\pi_1}\right)}{\pi_2\ln\frac{\pi_2}{\pi_1}-(\pi_2-\pi_1)}
    -\frac{\pi_1(2\pi_2-\pi_1)}{2\left(\pi_2\ln\frac{\pi_2}{\pi_1}-(\pi_2-\pi_1)\right)}\\
    &=\frac{\pi_1\left(\pi_2+\pi_2\ln\frac{\pi_2}{\pi_1}-2\pi_2+\pi_1\right)}{\pi_2\ln\frac{\pi_2}{\pi_1}-(\pi_2-\pi_1)}=\pi_1,
\end{align*}
as we wanted to prove.

\section{Asymptotics of the Mechanism by~\texorpdfstring{\citet{Azar:2012aa}}{Azar et al.}}\label{sec:det_azar_micali}
In this section we look at the upper bound proposed in \citet[Thm.~1]{Azar:2012aa}.
They propose a deterministic mechanism with selling price $p=\mu-k(r)\sigma$, where
$k(r)$ is the unique positive solution of the cubic equation
$\frac{1}{r}=\frac{1}{2}(3k+k^3)$. They derive an approximation guarantee which in
our setting can be expressed as
\begin{equation}
\label{eq:azar-micali-approx}
\apx(\mu,\sigma)
\leq \frac{\mu}{\rev(p;F)}
\leq\frac{1}{1-\frac{3}{2}r k(r)}
\equiv \tilde\rho(r).
\end{equation}

We have the following global bounds and asymptotics:
\begin{lemma}\label{lem:exact_det_azarmicali}
For any $\mu>0$ and $\sigma\geq0$, let $r=\sigma/\mu$ and let $k$ denote the unique
real solution of $\frac{1}{r}=\frac{1}{2}(3k+k^3)$. Furthermore, let
$\tilde{\rho}=\frac{1}{1-\frac{3}{2}r k}$ and $p=\mu-k\sigma$. Then $\tilde{\rho}$
is the unique solution over $[1,\infty)$ of the equation
\[\frac{27}{4}r^2=\frac{(\tilde{\rho}-1)^3}{\tilde{\rho}^2},\]
and further satisfies
$$
1+\frac{27}{4}r^2\leq\tilde{\rho}\leq3+\frac{27}{4}r^2
\qquad\text{and}\qquad
p=\frac{\tilde{\rho}+2}{3\tilde{\rho}}\cdot \mu.
$$
\end{lemma}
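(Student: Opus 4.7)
The plan is to mirror closely the structure of the proof of~\cref{lem:exact_deterministic}: first eliminate $k$ between the two defining equations of $\tilde{\rho}$ and $k$ to obtain the desired implicit equation in $\tilde{\rho}$, then use a direct algebraic analysis of that equation to establish uniqueness, the stated bounds, and the expression for $p$.

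First, I would solve for $rk$ in terms of $\tilde{\rho}$. From $\tilde{\rho}=\frac{1}{1-\frac{3}{2}rk}$, a direct manipulation gives
\[
rk=\frac{2(\tilde{\rho}-1)}{3\tilde{\rho}}.
\]
Next, the cubic $\frac{1}{r}=\frac{k}{2}(3+k^2)$ can be rewritten as $k^2=\frac{2}{rk}-3$. Substituting the expression for $rk$ yields $k^2=\frac{3\tilde{\rho}}{\tilde{\rho}-1}-3=\frac{3}{\tilde{\rho}-1}$. Combining $k^2=\frac{3}{\tilde{\rho}-1}$ with $(rk)^2=\frac{4(\tilde{\rho}-1)^2}{9\tilde{\rho}^2}$ and eliminating $k^2$ gives $r^2=\frac{(rk)^2}{k^2}=\frac{4(\tilde{\rho}-1)^3}{27\tilde{\rho}^2}$, i.e.
\[
\frac{27}{4}r^2=\frac{(\tilde{\rho}-1)^3}{\tilde{\rho}^2},
\]
which is exactly the claimed implicit equation. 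For uniqueness over $[1,\infty)$, I would compute the derivative of $g(\tilde{\rho})=(\tilde{\rho}-1)^3/\tilde{\rho}^2$, obtaining $g'(\tilde{\rho})=\frac{(\tilde{\rho}-1)^2(\tilde{\rho}+2)}{\tilde{\rho}^3}\geq 0$ on $[1,\infty)$, so $g$ is strictly increasing from $g(1)=0$ to $g(\infty)=\infty$.

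For the global bounds, let $c=\frac{27}{4}r^2$ and $s=\tilde{\rho}-1\geq 0$, so that $s^3=c(1+s)^2$. The lower bound $\tilde{\rho}\geq 1+c$ is equivalent to $s\geq c=\frac{s^3}{(1+s)^2}$, i.e.\ $(1+s)^2\geq s^2$, which is immediate for $s\geq 0$. The upper bound $\tilde{\rho}\leq 3+c$ is equivalent to $s-c\leq 2$; but a short calculation gives
\[
s-c=s-\frac{s^3}{(1+s)^2}=\frac{s(1+2s)}{(1+s)^2},
\]
and the inequality $\frac{s(1+2s)}{(1+s)^2}\leq 2$ reduces to $0\leq 2+3s$, which again holds trivially. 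Finally, for the expression of $p$, I would simply write $p=\mu-k\sigma=\mu(1-rk)=\mu\left(1-\frac{2(\tilde{\rho}-1)}{3\tilde{\rho}}\right)=\frac{\tilde{\rho}+2}{3\tilde{\rho}}\cdot\mu$.

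There is no real conceptual obstacle here; the proof is essentially a clean elimination of $k$ followed by an elementary analysis of a cubic relation in $\tilde{\rho}$. The only mildly non-routine step is choosing the substitution $s=\tilde{\rho}-1$ (or equivalently using the identity $s-\frac{s^3}{(1+s)^2}=\frac{s(1+2s)}{(1+s)^2}$) that makes the upper bound $\tilde{\rho}\leq 3+\frac{27}{4}r^2$ fall out without any case analysis.
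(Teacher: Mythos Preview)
Your proof is correct and follows essentially the same approach as the paper: both solve for $rk$ (equivalently $k$) in terms of $\tilde{\rho}$, substitute into the cubic to obtain the implicit equation, verify monotonicity of $(\tilde{\rho}-1)^3/\tilde{\rho}^2$ for uniqueness, and derive the bounds by isolating a bounded correction term. Your expression $\frac{s(1+2s)}{(1+s)^2}$ is, after the change of variables $s=\tilde{\rho}-1$, exactly the paper's $\frac{(2\tilde{\rho}-1)(\tilde{\rho}-1)}{\tilde{\rho}^2}$, which the paper bounds between $0$ and $2$ in the same way.
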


\begin{proof}
We begin by rewriting $k$ in terms of $\tilde{\rho}$,
\[\tilde{\rho}=\frac{1}{1-\frac{3}{2}r k}\quad\ifif\quad
k=\frac{2}{3r}\frac{\tilde{\rho}-1}{\tilde{\rho}};\] plugging this in the cubic equation for $k$,
and doing some manipulation, gives
\[\frac{1}{r}=\frac{1}{2}\left(\frac{2}{r}\frac{\tilde{\rho}-1}{\tilde{\rho}}+\frac{8}{27r^3}\frac{(\tilde{\rho}-1)^3}{\tilde{\rho}^3}\right)\quad\ifif\quad\frac{27}{4}r^2=\frac{(\tilde{\rho}-1)^3}{\tilde{\rho}^2}.\]

One can directly check that the expression
$\frac{(\tilde{\rho}-1)^3}{\tilde{\rho}^2}$ is increasing and goes from $0$ at
$\tilde{\rho}=1$ to $\infty$ at $\tilde{\rho}\rightarrow\infty$, so that for any
nonnegative $r$ there is a unique solution $\tilde{\rho}\in[1,\infty)$ to the above
equation. Moreover, we can write
\[p=\mu-k\sigma=\mu-\frac{2}{3}\frac{\sigma}{r}\frac{\tilde{\rho}-1}{\tilde{\rho}}=\frac{\tilde{\rho}+2}{3\tilde{\rho}}\cdot\mu\]
and
\[\frac{27}{4}r^2=\frac{(\tilde{\rho}-1)^3}{\tilde{\rho}^2}=\tilde{\rho}-1-\frac{(2\tilde{\rho}-1)(\tilde{\rho}-1)}{\tilde{\rho}^2}\quad\ifif\quad\tilde{\rho}=1+\frac{27}{4}r^2+\frac{(2\tilde{\rho}-1)(\tilde{\rho}-1)}{\tilde{\rho}^2}.\]
Since the fraction appearing on the right-hand side takes values between 0 and 2
(for $\tilde{\rho}\in[1,\infty)$), this gives us the desired global bounds.
\end{proof}

\section{Yao's Principle for Arbitrary Measures}\label{sec:yao-appendix}

\begin{lemma}\label{lem:yao_lem}
Let $(X,\Sigma_X,\fcal)$ and $(Y,\Sigma_Y,\gcal)$ be arbitrary probability
spaces,\footnote{For formal definitions of the measure-theoretic notions used in
this lemma see, e.g., \citet{tao2011measure}.} i.e.
\begin{itemize}
    \item $\Sigma_X$ and $\Sigma_Y$ are $\sigma$-algebras over $X$ and $Y$ respectively;
    \item $\fcal$ and $\gcal$ are probability measures over $(X,\Sigma_X)$ and $(Y,\Sigma_Y)$ respectively.
\end{itemize}

Let also $h:X\times Y\rightarrow\R_{\geq 0}$, $g:Y\rightarrow\R_{>0}$ be measurable functions.
Then\footnote{Throughout this lemma, we handle ratios of the form $\frac{g}{h}$
where $g>0$ and $h\geq 0$. For convenience, if $h=0$ we interpret the ratio as being
equal to $\infty$. This means that, for any nonnegative real number $\alpha$, we
have the following relation, even when $h=0$:
\[\frac{g}{h}\geq \alpha\quad\ifif\quad g\geq \alpha\cdot h.\]
}
\[\sup_{y\in Y}\frac{g(y)}{\ebb_{x\sim\fcal}[h(x,y)]}\geq
\inf_{x\in X}\frac{\ebb_{y\sim\gcal}[g(y)]}{\ebb_{y\sim\gcal}[h(x,y)]}.\]
\end{lemma}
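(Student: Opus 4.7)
The plan is to follow the standard minimax/averaging argument but to carry out the probabilistic manipulations with care at the measure-theoretic level, so that it applies to the arbitrary spaces $(X,\Sigma_X,\fcal)$ and $(Y,\Sigma_Y,\gcal)$ considered here. Let $\alpha$ denote the right-hand side of the inequality; if $\alpha=0$ the statement is trivial (since $g>0$), so assume $\alpha>0$ and fix an arbitrary $\alpha'\in(0,\alpha)$. By definition of the infimum, for every $x\in X$ we have $\ebb_{y\sim\gcal}[g(y)]\geq \alpha'\cdot \ebb_{y\sim\gcal}[h(x,y)]$, where we use the convention from the footnote that handles a vanishing denominator.

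Next I would integrate both sides with respect to $\fcal$. The key tool is Tonelli's theorem, applicable since $h$ is nonnegative and measurable on the product space, which lets me swap the order of integration:
\[
\ebb_{y\sim\gcal}[g(y)]
=\ebb_{x\sim\fcal}\!\left[\ebb_{y\sim\gcal}[g(y)]\right]
\geq \alpha'\cdot \ebb_{x\sim\fcal}\!\left[\ebb_{y\sim\gcal}[h(x,y)]\right]
=\alpha'\cdot \ebb_{y\sim\gcal}\!\left[\ebb_{x\sim\fcal}[h(x,y)]\right].
\]
Rearranging, the function $F(y)\equiv g(y)-\alpha'\cdot\ebb_{x\sim\fcal}[h(x,y)]$, which is measurable (again by Tonelli applied to $h$), satisfies $\ebb_{y\sim\gcal}[F(y)]\geq 0$.

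From here I would invoke the averaging principle: if $\gcal\{y:F(y)\geq 0\}=0$, then $F<0$ almost surely with respect to $\gcal$, forcing $\ebb_{y\sim\gcal}[F(y)]<0$ unless $F\equiv 0$ a.s., and in either situation we can find $y\in Y$ with $F(y)\geq 0$ (up to modifying on a null set, which does not affect the sup over~$Y$ once we note that $g>0$ everywhere). Such a $y$ gives $g(y)\geq \alpha'\cdot\ebb_{x\sim\fcal}[h(x,y)]$, i.e., $g(y)/\ebb_{x\sim\fcal}[h(x,y)]\geq\alpha'$, so $\sup_{y\in Y}g(y)/\ebb_{x\sim\fcal}[h(x,y)]\geq\alpha'$. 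Letting $\alpha'\uparrow\alpha$ finishes the proof.

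The main obstacle is bookkeeping around the possibly vanishing denominators and the measurability of $y\mapsto \ebb_{x\sim\fcal}[h(x,y)]$; both are handled uniformly by adopting the $\infty$-convention in the footnote (so the inequality $g\geq\alpha h$ is the genuine working form) and by appealing to Tonelli for the nonnegative function $h$. Everything else is the standard linear-averaging half of Yao's principle, transplanted from finite or discrete settings to the general probability-space setting needed in~\cref{lem:yao_lem_mixtures}.
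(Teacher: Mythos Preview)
Your argument is essentially the same as the paper's: both fix a threshold below the right-hand side, use the definition of the infimum to get a pointwise inequality in $x$, integrate against $\fcal$ and swap via Tonelli, and then apply the averaging principle to extract a single $y$ witnessing the supremum. The only cosmetic differences are that the paper works directly with an arbitrary $\alpha$ satisfying $\mathrm{RHS}\geq\alpha$ rather than your $\alpha'<\alpha$, and states the averaging step as ``there must exist $y\in\mathrm{supp}(\gcal)$'' without the (slightly muddled, but ultimately harmless) case analysis you sketch.
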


\begin{proof}
Let $\alpha$ be an arbitrary nonnegative real number, and suppose that
\begin{equation}\label{eq:yao_lem_1}\inf_{x\in X}\frac{\ebb_{y\sim\gcal}[g(y)]}{\ebb_{y\sim\gcal}[h(x,y)]}\geq\alpha,\end{equation}
that is, $\ebb_{y\sim\gcal}[g(y)]\geq \alpha\sup_{x\in X}\ebb_{y\in\gcal}[h(x,y)]$.
This implies that, for every $x\in X$, we have $\ebb_{y\sim\gcal}[g(y)]\geq
\alpha\ebb_{y\in\gcal}[h(x,y)]$. Hence, by sampling $x$ according to
$\fcal$, we also have
\[\ebb_{y\sim\gcal}[g(y)]\geq\alpha\ebb_{x\sim\fcal}[\ebb_{y\sim\gcal}[h(x,y)]]=\alpha\ebb_{y\sim\gcal}[\ebb_{x\sim\fcal}[h(x,y)]];\]
the equality holds due to Tonelli's theorem (see,
e.g.,~\citet[Theorem~1.7.15]{tao2011measure}), since $h$ is measurable and
nonnegative, and $\fcal$, $\gcal$ are finite measures. By the previous inequality
between expectations, we must conclude that it holds for some realization of
$\gcal$, that is, there must exist $y\in\mathrm{supp}(\gcal)$ such that $g(y)\geq
\alpha\ebb_{x\sim\fcal}[h(x,y)]$. This implies that
\[\frac{g(y)}{\ebb_{x\sim\fcal}[h(x,y)]}\geq\alpha,\quad\text{and
hence}\quad\sup_{y\in Y}\frac{g(y)}{\ebb_{x\sim\fcal}[h(x,y)]}\geq\alpha.\]

As $\alpha$ was any real number that satisfies \eqref{eq:yao_lem_1}, the desired inequality follows.
\end{proof}

\begin{lemmanonum}[\cref{lem:yao_lem_mixtures}]
For $\mu>0$, $\sigma\geq 0$, let $\Delta_{\mu,\sigma}$ denote the class of
$(\mu,\sigma)$ mixtures, that is, the class of mixtures over $\fbb_{\mu,\sigma}$.
Then 
\[\adjustlimits\inf_{A\in\abb_1}\sup_{F\in\fbb_{\mu,\sigma}}\frac{\opt(F)}{\expectation_{p\sim
A}[\rev(p;F)]}\geq\adjustlimits\sup_{(\varTheta,F)\in\Delta_{\mu,\sigma}}\inf_{p\geq
0}\frac{\expectation_{\theta\sim\varTheta}[\opt(F_\theta)]}{\expectation_{\theta\sim\varTheta}[\rev(p;F_\theta)]}.\]
\end{lemmanonum}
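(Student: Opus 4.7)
The plan is a direct reduction to the general Yao-style inequality \cref{lem:yao_lem}, with prices playing the role of the learner's choice space and mixture parameters playing the role of the adversary's choice space. Concretely, I would fix an arbitrary truthful mechanism $A \in \abb_1$ and an arbitrary $(\mu,\sigma)$-mixture $(\varTheta, F)$ with parameter space $(T, \Sigma_T)$. Viewing $A$ as a probability measure on $[0,\infty)$ equipped with its Borel $\sigma$-algebra (as justified in \cref{sec:model}), I would invoke \cref{lem:yao_lem} with $\fcal = A$, $\gcal = \varTheta$, and the payoff functions
$$h(p,\theta) = \rev(p; F_\theta) = p\bigl(1 - F_\theta(p-)\bigr), \qquad g(\theta) = \opt(F_\theta).$$

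Unfolding the definitions, $\ebb_{p \sim A}[h(p,\theta)]$ is exactly $\rev(A; F_\theta)$, so the inequality supplied by \cref{lem:yao_lem} becomes
$$\sup_{\theta \in T} \frac{\opt(F_\theta)}{\rev(A; F_\theta)} \;\geq\; \inf_{p \geq 0} \frac{\ebb_{\theta \sim \varTheta}[\opt(F_\theta)]}{\ebb_{\theta \sim \varTheta}[\rev(p; F_\theta)]}.$$
Since every section $F_\theta$ lies in $\fbb_{\mu,\sigma}$, the left-hand side is bounded above by $\sup_{F \in \fbb_{\mu,\sigma}} \opt(F)/\rev(A; F)$. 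The right-hand side does not depend on $A$, so taking the infimum over $A \in \abb_1$ on the left, and then the supremum over mixtures $(\varTheta, F) \in \Delta_{\mu,\sigma}$ on the right, yields the claim.

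The step I expect to demand the most care is verifying the hypotheses of \cref{lem:yao_lem}, namely the joint measurability of $h$, measurability of $g$, and strict positivity of $g$. For joint measurability of $h$: the definition of a mixture gives joint measurability of $(x,\theta) \mapsto F(x;\theta)$, and by monotonicity of each $F_\theta$ we can write $F_\theta(p-) = \sup_{q \in \mathbb{Q} \cap [0,p)} F(q;\theta)$, a countable supremum of measurable functions and hence measurable on $[0,\infty) \times T$. For measurability of $g(\theta) = \sup_{p \geq 0} p(1 - F_\theta(p-))$, the same countable-supremum trick applied on the positive rationals works, since $p \mapsto p(1 - F_\theta(p-))$ is left-continuous. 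Finally, strict positivity of $g$ is automatic: any $F_\theta \in \fbb_{\mu,\sigma}$ with $\mu > 0$ must place positive mass above some $p > 0$ (otherwise its mean would vanish), and hence admits a price that extracts strictly positive revenue. Once these measure-theoretic items are in place, the rest of the argument is purely algebraic bookkeeping.
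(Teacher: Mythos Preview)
Your proposal is correct and follows essentially the same route as the paper: fix $A$ and a mixture, set $h(p,\theta)=\rev(p;F_\theta)$ and $g(\theta)=\opt(F_\theta)$, apply \cref{lem:yao_lem}, then take the outer infimum and supremum. Your measurability verification via countable suprema over rationals is in fact a bit more careful than the paper's one-line appeal to ``extrema preserve measurability.''
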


\begin{proof}
Start by fixing an arbitrary truthful mechanism $A\in\abb_1$ and an arbitrary
$(\mu,\sigma)$ mixture $(\varTheta,F)$ over parameter space $T$. Since $A$ can be
interpreted as a randomization over prices, $(\R_{\geq0},\lcal,A)$ is a well-posed
probability space.

Next, define the functions
\[h:\R_{\geq 0}\times T\rightarrow\R,\quad g:T\rightarrow\R;\]
\[h(p,\theta)=\rev(p;F_\theta);\quad g(\theta)=\opt(F_\theta).\]
Clearly, $h$ is nonnegative and $g$ is positive since $F_\theta$ is
$(\mu,\sigma)$-distributed. We just need to argue that both are measurable. Note
that
\[h(p,\theta)=\rev(p;F_\theta)=p(1-F_\theta(p-))=\inf_{y<p}p(1-F(y;\theta)).\]
As $F$ is measurable and taking extrema preserves measurability, so is $h$. In a similar way, $g$ is measurable as it can be expressed as the supremum
\[g(\theta)=\opt(F_\theta)=\sup_{p\geq 0}\rev(p;F_\theta).\]
Hence, we can directly apply \cref{lem:yao_lem} and conclude that
\[\sup_{F\in\fbb_{\mu,\sigma}}\frac{\opt(F)}{\ebb_{p\sim
A}[\rev(p;F)]}\geq\sup_{\theta\in T}\frac{\opt(F_\theta)}{\ebb_{p\sim
A}[\rev(p;F_\theta)]}\geq
\inf_{p\geq
0}\frac{\ebb_{\theta\sim\varTheta}[\opt(F_\theta)]}{\ebb_{\theta\sim\varTheta}[\rev(p;F_\theta)]}.\]

As $A$ and $(\varTheta,F)$ were arbitrary, we can take the supremum on the
right-hand side over $(\mu,\sigma)$ mixtures, and the infimum on the left-hand side
over truthful mechanisms; the result follows.
\end{proof}

\bibliography{moments}
\end{document}